\newcommand{\todo}[1]{\fxfatal{\color{red}#1}}
\newcommand{\raisemath}[1]{\mathpalette{\raisem@th{#1}}}
\newcommand{\raisem@th}[3]{\raisebox{#1}{$#2#3$}}
\def\N{\mathbb{N}} 
\newcommand\Problem[1]{\textsc{#1}\xspace}
\def\twobytwo(#1,#2,#3,#4){\ensuremath{\bigl( \begin{smallmatrix}
    #1  &  #2 \\%
    #3  &  #4%
  \end{smallmatrix} \bigr)}%
}
\newcommand\restr[2]{{
  \left.\kern-\nulldelimiterspace 
  #1 
  \vphantom{\big|} 
  \right|_{#2} 
  }}
\def\vminor{\preccurlyeq_{{\mathsf{v}}}\!} 
\def\vminor^#1{%
  \preccurlyeq^{#1}_{{\mathsf{v}}}
} 
\def\pminor{\preccurlyeq_{{\mathsf{p}}}\!} 
\def\pminor^#1{%
  \preccurlyeq^{#1}_{{\mathsf{p}}}
} 
\def\bminor_#1{\preccurlyeq_{{#1}}\!} 
\def\sminor^#1{
    \preccurlyeq_{\mathrlap{\mathsf{m}}}^{#1}
} 
\def\ssminor^#1{%
    \mathbin{\dot\preccurlyeq_{{\mathrlap{\mathsf{m}}}}^{#1}}\,
} 
\def\stminor^#1{
    \preccurlyeq_{\mathrlap{\mathsf{t}}}^{#1}
} 
\def\sstminor^#1{%
    \mathbin{\dot\preccurlyeq_{{\mathrlap{\mathsf{t}}}}^{#1}}\,
} 
\def\grad_#1{\nabla\!_{#1}}
\def\sgrad_#1{{\dot\nabla}\!_{#1}}
\def\topgrad_#1{\widetilde \nabla\!_{#1}}
\def\stopgrad_#1{\dot{\widetilde\nabla}\!_{#1}}
\def\topomega_#1{\widetilde \omega_{#1}}
\def\bnabla{\raisebox{-0.2pt}{\scalebox{1.25}{$\blacktriangledown$}}}
\def\vgrad_#1{\bnabla\!_{#1}}
\def\pgrad_#1{{\widetilde \bnabla}\!_{#1}}
\newcommand{\dist}{\ensuremath{\text{dist}}}
\def\colnum_#1{ \operatorname{col}_{#1} }
\def\wcolnum_#1{ \operatorname{wcol}_{#1} }
\def\adm_#1{ \operatorname{adm}_{#1} }
\renewcommand{\leq}{\leqslant}
\renewcommand{\geq}{\geqslant}
\renewcommand{\epsilon}{\varepsilon}
\newlength{\convarrowwidth}
\newcommand{\widthm}[1]{ \mathbf{#1} } 
\DeclareMathOperator{\width}{ \widthm{width} } 
\def\ds{\operatorname{\mathbf{dom}}}
\def\is{\operatorname{\mathbf{ind}}}
\def\scat{\operatorname{\mathbf{sct}}}
\newcommand{\yaay}{\kern4pt \ding{51} \kern-8pt \ding{51}}%
\newenvironment{smallgraph}{
    \begin{tikzpicture}[-,auto,node distance=1.5cm,
                    semithick,every node/.style={transform shape}]%
    \tikzstyle{every node}=[scale=.7,fill=white,draw=darkgray,text=darkgray,align=center,initial text=, shape=circle]%
}{%
    \end{tikzpicture}
}
\definecolor{mygray}{HTML}{dddddd}
\declaretheoremstyle[%
spaceabove=6pt, spacebelow=6pt,
headfont=\normalfont\bfseries,
notefont=\mdseries, notebraces={(}{)},
bodyfont=\normalfont,
postheadspace=1em,
qed=\qedsymbol,
]{mystyle}
\theoremstyle{mystyle}
\declaretheorem{lemma}
\declaretheorem[name={Theorem}]{theorem}
\declaretheorem[name={Corollary}]{corollary}
\declaretheorem[numbered=no]{claim}
\declaretheorem[name={Definition}]{definition}
\declaretheoremstyle[%
spaceabove=6pt, spacebelow=6pt,
headfont=\normalfont\bfseries,
notefont=\mdseries, notebraces={\!\!\textbf{:}\,\,}{},
bodyfont=\normalfont,
postheadspace=1em,
headpunct=\mdseries.,
qed=,
]{case}
\theoremstyle{case}
\declaretheorem[name={Case},style=case]{case}
\renewcommand*\etal{\xperiodafter{\emph{et~al}}}
\newcommand*\varrule[1][0.4pt]{\leavevmode\leaders\hrule height#1\hfill\kern0pt}
\def\halfrule{\leavevmode\leaders\hrule height 0.7ex depth \dimexpr0.4pt-0.7ex\hfill\kern0pt} 
\setlist[1]{labelindent=\parindent,leftmargin=*}
\setlist{itemsep=0pt}
\newenvironment{tightcenter}
 {\parskip=0pt\par\nopagebreak\centering}
 {\par\noindent\ignorespacesafterend}
\newlength{\RoundedBoxWidth}
\newsavebox{\GrayRoundedBox}
\newenvironment{GrayBox}[1]%
   {\setlength{\RoundedBoxWidth}{\textwidth-4.5ex}
    \def\boxheading{#1}
    \begin{lrbox}{\GrayRoundedBox}
       \begin{minipage}{\RoundedBoxWidth}%
   }{%
       \end{minipage}
    \end{lrbox}%
    \begin{tightcenter}%
    \begin{tikzpicture}%
       \node(Text)[draw=black!20,fill=white,rounded corners,%
             inner sep=2ex,text width=\RoundedBoxWidth]%
             {\usebox{\GrayRoundedBox}};
        \coordinate(x) at (current bounding box.north west);
        \node [draw=white,rectangle,inner sep=3pt,anchor=north west,fill=white]
        at ($(x)+(6pt,.75em)$) {\boxheading};
    \end{tikzpicture}
    \end{tightcenter}\vspace{0pt}%
    \ignorespacesafterend
}
    \def\boxheading{#1}
\newenvironment{problem}[2][]{\noindent\ignorespaces%
                                \FrameSep=6pt%
                                \parindent=0pt%
                \vspace*{-.5em}
                \ifthenelse{\isempty{#1}}{%
                  \begin{GrayBox}{\textsc{#2}}%
                }{%
                  \begin{GrayBox}{\textsc{#2} parametrised by~{#1}}%
                }
                \newcommand\Prob{Problem:}%
                \newcommand\Input{Input:}%
                \begin{tabular*}{\textwidth}{@{\hspace{.1em}} >{\itshape} p{1.6cm} p{0.8\textwidth} @{}}%
            }{
                \end{tabular*}%
                \end{GrayBox}%
                \vspace*{-.5em}
                \ignorespacesafterend
            }
\newcounter{constraintnumbers}%
\newcommand\constrnumber{\stepcounter{constraintnumbers}(\arabic{constraintnumbers})}%
\newlength{\wleft}  \newlength{\wright}
\let\orgdescriptionlabel\descriptionlabel
\def\@savelabel{}
\renewcommand*{\descriptionlabel}[1]{%
  \let\orglabel\label
  \let\label\@gobble
  \phantomsection
  \def\@savelabel{#1}
  \edef\@currentlabel{{\def\hfil{}#1}}
  \edef\@currentlabelname{#1}%
  \let\label\orglabel
  \orgdescriptionlabel{#1}%
}
\def\namedlabel#1#2{\begingroup
   \def\@currentlabel{#1}%
   \label{#2}\endgroup
}
\renewcommand{\th}{%
    \ifmmode
        ^\mathrm{th}%
    \else%
        \textsuperscript{th}\xspace%
    \fi%
}
\newcommand{\st}{%
    \ifmmode
        ^\mathrm{st}%
    \else%
        \textsuperscript{st}\xspace%
    \fi%
}
\newcommand{\nd}{%
    \ifmmode
        ^\mathrm{nd}%
    \else%
        \textsuperscript{nd}\xspace%
    \fi%
}
\newcommand{\rd}{%
    \ifmmode
        ^\mathrm{rd}%
    \else%
        \textsuperscript{rd}\xspace%
    \fi%
}
\def\Nesetril{Ne\v{s}et\v{r}il\xspace}
\def\Dvorak{Dvo\v{r}\'{a}k\xspace}
\newcommand{\const}{\mathsf{c}}
\newcommand{\cproj}[1][r]{\ensuremath{\const^\text{proj}_{#1}}}
\newcommand{\cprojclos}[1][r]{\ensuremath{\const^\text{projcl}_{#1}}}
\newcommand{\cpathclos}[1][r]{\ensuremath{\const^\text{pathcl}_{#1}}}
\newcommand{\ctotalclos}[1][r,c]{\ensuremath{\const^\text{total}_{#1}}}
\newcommand{\cuqw}[1][d]{\ensuremath{\const^\text{UQW}_{#1}}}
\newcommand{\clilyscale}[1][r,d]{\ensuremath{\const^\text{scale}_{#1}}}
\newcommand{\clilymargin}[1][r,d]{\ensuremath{\const^\text{margin}_{#1}}}
\newcommand{\clilybase}[1][r,d]{\ensuremath{\const^\text{base}_{#1}}}
\newcommand{\clily}[1][]{\ensuremath{\const^\text{lily}_{#1}}}
\newcommand{\cdvorak}[1][r]{\ensuremath{\const^\text{dvrk}_{#1}}}
\newcommand{\ccdom}[1][r]{\ensuremath{\const^\text{cdom}_{#1,c}}}
\renewcommand{\G}{\mathcal G}
\newcommand{\ARCDS}{\textsc{Annotated $(r,c)$-Dominating Set}\xspace}
\newcommand{\RDS}{\textsc{$r$-Dominating Set}\xspace}
\newcommand{\rds}{$r$-dominating set\xspace}
\newcommand{\RCDS}{\textsc{$(r,c)$-Dominating Set}\xspace}
\newcommand{\rcds}{$(r,c)$-dominating set\xspace}
\newcommand{\ARCSC}{\textsc{Annotated $(r,c)$-Scattered Set}\xspace}
\newcommand{\RCSC}{\textsc{$(r,c)$-Scattered Set}\xspace}
\newcommand{\rcsc}{$(r,c)$-scattered set\xspace}
\newcommand{\LMD}{\textsc{$(r,[\lambda,\mu])$-Domination}\xspace}
\newcommand{\RPC}{\textsc{$r$-Perfect Code}\xspace}
\newcommand{\RRD}{\textsc{$r$-Roman Domination}\xspace}
\newcommand{\ARRD}{\textsc{Annotated $r$-Roman Domination}\xspace}
\newcommand{\RTD}{\textsc{Total $r$-Domination}\xspace}
\newcommand{\rtd}{total $r$-dominating set\xspace}
\newcommand{\ARTD}{\textsc{Annotated Total $r$-Domination}\xspace}
\newcommand{\simuni}{\ensuremath{\sim_{\nu}}}
\newcommand{\shad}{S}
\newcommand{\proj}{P}
\newcommand{\vol}{S\kern-1pt P}
\title{A general kernelization technique for domination and independence problems in sparse classes}
\author{
  Carl Einarson
  \thanks{
    Royal Holloway, University of London, UK \texttt{einarson.carl@gmail.com}.
  }
  \and
  Felix Reidl
  \thanks{
    Birkbeck, University of London, UK \texttt{f.reidl@dcs.bbk.ac.uk}
  }
}
\begin{document}

\maketitle

\begin{abstract}\noindent
  We unify and extend previous kernelization techniques in sparse
  classes~\cite{DSKernel, ISKernel} by defining \emph{water
  lilies} and show how they can be used in bounded expansion classes
  to construct linear bikernels for \RCDS,
  \RCSC, \RTD, \RRD, and a problem we call \LMD (implying a bikernel for \RPC).
  At the cost of slightly changing the output graph class our bikernels can be
  turned into kernels.

  We further demonstrate how these constructions can be combined to create `multikernels',
  meaning graphs that represent kernels for multiple problems at once. Concretely,
  we show that \RDS, \RTD, and \RRD admit a multikernel; as well as \RDS
  and \textsc{$2r$-Independent Set} for multiple values of~$r$ at once. 
\end{abstract}


\noindent
\textsc{Dominating Set} is arguably one of the touchstone for kernelization
in sparse graph classes: after a linear
kernel in planar graphs~\cite{DSKernelPlanar} and a polynomial kernel in
graphs defined by an excluded topological
minor~\cite{DSKernelHtopPoly,DSKernelHtopPoly2} results for linear kernels
in bounded genus graphs~\cite{DSKernelGenus} apex-minor-free
graphs~\cite{DSKernelApexMinor}, $H$-minor-free graphs~\cite{DSKernelHMinor},
and finally $H$-topological-minor-free graphs~\cite{DSKernelHTopMinor}
followed in quick succession. The most general results to date are linear
kernels for bounded expansion classes~\cite{DSKernel} (generalizing all
aforementioned classes) and an almost-linear kernels for nowhere dense
classes~\cite{DSKernelND} (generalizing bounded expansion classes). These
latter two results even hold for the general problem of
\RDS, where a vertex dominates everything in its closed $r$-neighbourhood.
Together with a recent almost-linear kernel for the related \textsc{$r$-Independence}
problem~\cite{ISKernel}, these results led us to the guiding question:
Do the kernelization techniques developed for
\textsc{$r$-Domination}/\textsc{$r$-Independence} in sparse classes
carry over to related problems? \looseness-1

\paragraph{Bounded expansion classes.}
\Nesetril and Ossona de Mendez introduced bounded expansion classes as a generalization
of classes excluding a (topological) minor and various useful notions of
sparsity (\eg embeddability in a surface, bounded degree). In short, a
class~$\mathcal G$ has \emph{bounded expansion} (BE) if any minor obtained by contracting
disjoint subgraphs of radius at most~$r$ in any member $G \in \mathcal G$
is $\grad_r(\mathcal G)$-degenerate, where~$\grad_r(\mathcal G)$ is a class constant
independent of $G$. There are various equivalent
definitions for BE classes~\cite{BndExp, WColBndExp, NBComplexity, Sparsity},
all of which have in common that they define families of graph
invariants~$\{f_r\}_{r \in \N}$ where~$r$ is a parameter governing the `depth'
at which the invariant is measured. BE classes then are precisely those graph
classes for which~$f_r$ is finite for every member of the class. We will not
need to work with these invariants directly, instead building on higher-level
results discussed in Section~\ref{sec:prelims}. Consequently, we broadly refer
to these invariants as \emph{expansion characteristics}. For an
in-depth discussion see~\cite{Sparsity}.


\paragraph{A selection of problems.}
The commonality of the following problems is that they can be expressed
via \emph{universal neighbourhood constraints}, meaning that a solution~$X$
needs to intersect every `neighbourhood' (a slightly flexible term as we
will see in the following) in at least/at most a certain value.

We define an \emph{$r$-dominating set} of a graph~$G$ to be any set~$D$
that satisfies $|N^r[u] \cap D| \geq 1$ for all~$u \in V(G)$, where
$N^r[u]$ contains all vertices at distance~$\leq r$ from~$u$.
We arrive at a natural extension of the problem by replacing the right
hand side of this \emph{domination constraint} by an arbitrary constant.
We call a set that satisfies the constraint $|N^r[u] \cap D| \geq c$
an \emph{\rcds} and the corresponding decision problem 

\begin{problem}[k]{$(r, c)$-Domination}
  \Input & A graph~$G$ and an integer~$k$. \\
  \Prob  & Is there a set~$D \subseteq V(G)$ of size at most~$k$ such that
           $|N^r[v] \cap D| \geq c$ for all $v \in G$?
\end{problem}

\noindent
For~$r = 1$ this problem has received some attention in the literature
under the name ``\textsc{$k$-Domination}'' (see \eg \cite{cDomInd}), for
$c = 1$ we recover the above discussed problem \textsc{$r$-Domination}.

We obtain a slightly different notion of dominance by insisting that vertices
cannot dominate themselves, but only their neighbourhood. It is natural to
extends this notion of \emph{total domination} by extending the domination
radius to some constant~$r$:

\begin{problem}[k]{Total $r$-Domination}
  \Input & A graph~$G$ and an integer~$k$. \\
  \Prob  & Is there sets~$D$, $|D| \leq k$ such that for every vertex~$v \in G$
           $|(N^r[v]\setminus\{v\}) \cap D| \geq 1$?
\end{problem}

\noindent
Finally, we might think of variants in which domination can occur at
different cost. One such variant is \textsc{Roman Domination} where we can
either pay one unit to let a vertex dominate itself (but not its neighbours)
or two units to dominate a vertex and its neighbourhood. We propose the following
generalization by allowing domination at distance~$r$:

\begin{problem}[k]{$r$-Roman Domination}
  \Input & A graph~$G$, a set~$L \subseteq V(G)$ and an integer~$k$. \\
  \Prob  & Is there sets~$D_1, D_2 \subseteq V(G)$ with $|D_1| + 2|D_2| \leq k$
           such that $D_2$ $r$-dominates all of~$V(G) \setminus D_1$?
\end{problem}

\noindent
While Roman domination does not quite fit the mould of universal neighbourhoods
constrains (since we can let vertices `opt out' of the constraint
$|N^r[v] \cap D_2| \geq 1$) this deviated is easily encompassed by our kernelization
technique.

The problem of \emph{independence} turns out to be closely related to that
of domination. We define an \emph{$r$-scattered set} of a graph~$G$ to be any set~$I$
that satisfies $|N^r[u] \cap I| \leq 1$ for all~$u \in V(G)$. Note that an $r$-scattered set is
equivalent to a \emph{$2r$-independent set} (all vertices in~$I$ are
pairwise at distance~$> 2r$) and the domination/independence duality
that holds in BE-classes (see below) has usually been described with this
terminology. However, the natural extension to \emph{$(r,c)$-scattered sets} that
satisfy the \emph{scatter constraints}~$|N^r[u] \cap I| \leq c$ does not correspond to
independent sets. We therefore opt to speak in terms of scattered instead of
independent sets, in particular, we consider the following parameterized problem:

\begin{problem}[k]{$(r,c)$-Scattered Set}
  \Input & A graph~$G$ and an integer~$k$. \\
  \Prob  & Is there a set~$I \subseteq V(G)$, $|I| \geq k$ such that
           $|N^r[v] \cap I| \leq c$ for all $v \in V(G)$?
\end{problem}

\noindent
Finally, we consider the problem that arises when combining the domination-
and scatter-constraints into the form $\lambda \leq |N^r[u] \cap D| \leq \mu$,
which leads to the following, rather general, parameterized problem:

\begin{problem}[k]{$(r,[\lambda, \mu])$-Domination}
  \Input & A graph~$G$ and an integer~$k$. \\
  \Prob  & Is there a set~$D \subseteq V(G)$ of size at most~$k$ such that
           every vertex~$v \in G$ satisfies
           $\lambda \leq |N^r[v] \cap D| \leq \mu$?
\end{problem}

\noindent
Here \textsc{$(r,[c,\infty])$-Domination} is equivalent to \RCDS and
\textsc{$(r,[0,c])$-Domination} to \RCSC. The problem further covers well-established
problems like \textsc{Perfect Code} which we again generalize by introducing
a distance-parameter:

\begin{problem}[k]{$r$-Perfect Code}
  \Input & A graph~$G$ an integer~$k$. \\
  \Prob  & Is there a set~$I \subseteq V(G)$ of size at most~$k$ such that
          \newline $|N^r[v] \cap I| = 1$ for all $v \in V(G)$?
\end{problem}

\vspace*{-1em}

\paragraph{Kernelization in sparse classes.}
The definition of a kernel (see~\cite{DowneyFellows} for a problem restricted to a certain
input class demands that the output belongs to this class as
well, \eg a planar kernelization needs to output a planar graph.
This turns out to be too restrictive for very general notions of sparseness
and we are left with the choice of either outputting an annotated
instance belonging to a different problem, called a \emph{bikernel}, or to modify the graph to
`simulate' the annotation in the original problem, but these modifications take
the instance out of the original graph class. Here we settle for the following
compromise: a parametrised graph problem $\mathcal P \subseteq \mathcal G \times \N$
for a BE-class $\mathcal G$ admits a \emph{BE kernel}
if there is a kernelization that outputs an instance in $\mathcal
G' \times \N$ with
$\grad_r(\mathcal G') \leq g(\grad_r(\G))$ for some
function~$g$ and all~$r \in \N$. This is justified by the idea that all nice
algorithmic properties stemming from $\mathcal G$ being BE
carry over from $\mathcal G$ to $\mathcal G'$ with only changes
to some constants\footnote{
  Using this argument, we might as well allow a BE-kernel to change
  the depth as well, \ie
   $\grad_r(\mathcal G') \leq g(\grad_{g(r)}(\G))$. In this work
   we do not need this level of generality and stick to a simpler
   definition.
}---if other properties of the class are of primary interest
(embedding in a surface, excluded minors, \etc) then the BE-view is simply too
coarse.

\vspace*{-.5em}
\paragraph{Our results.}
Inspired by the kernelization for \RDS~\cite{DSKernel} and
\textsc{$r$-Independent Set}~\cite{ISKernel} in sparse classes, we unify and
extend these techniques by defining a structure we call \emph{water lilies}
and show how their existence can be used to find small \emph{cores}, that is,
subset of vertices that either are guaranteed to contain a solution (\emph{solution core})
or that already fully represent the neighbourhood-constraints governing the
problem (\emph{constraint core}). We define and prove the existence of
water lilies in BE-classes in Section~\ref{sec:lily}, building on our proof
of a constant-factor approximation for \RCDS in BE-classes from Section~\ref{sec:approx-rcds}.\looseness-1

In Section~\ref{sec:bikernels} we use
water lilies to prove linear bikernels for \RCDS,
\RCSC, \RTD, \RRD, and \LMD (implying a bikernel for \RPC) into appropriate
annotated variants of these problems. We then show in Section~\ref{sec:be-kernels}
how these bikernels can be turned into BE-kernels for
\RCDS, \RCSC, \RTD, \RRD, and \RPC. 
Finally, in Section~\ref{sec:multikernels},
we demonstrate how these constructions can be combined to create `multikernels',
meaning graphs that represent kernels for multiple problems at once. Concretely,
we show that \RDS, \RTD, and \RRD admit a multikernel; as well as \RDS
and \textsc{$2r$-Independent Set} (even for multiple values of~$r$ at once).

\section{Notation and previous results}\label{sec:prelims}

For a maximization problem $\mathcal P$ defined via universal neighbourhood
constraints and a graph~$G$ we call a set~$L \subseteq V(G)$ a \emph{constraint
core} if for every set~$D \subseteq V(G)$ it holds that $D$ is a solution
to~$\mathcal  P$ in~$G$ already if the constraints only hold for vertices
in~$L$. Analogous, for a minimization problem~$\mathcal P$ defined via universal
neighbourhood constraints, we call a set~$U \subseteq V(G)$ a \emph{solution
core} if a minimum solution to~$P$ already exists inside~$U$. In both cases,
note that~$V(G)$ is always a trivial core and that a superset of any core is a
core as well.

A set $D \subseteq V(G)$ is an \emph{\rcds} if for every vertex $v \in V(G)$
it holds that $|N^r[v] \cap D |\geq c$. Importantly, this constraint must also
hold for vertices contained in $D$, therefore such a set can only exist if
$|N^r[v]| \geq c$ for all $v \in G$. We write $\ds^c_r(G)$ to denote the size
of a minimum \rcds in $G$ and let $\ds^c_r(G) = \infty$ if no such set exists.

A set~$I \subseteq V(G)$ is \emph{$2r$-independent} if every pair of
vertices $u, v \in I$ has distance at least~$2r+1$. We write $\is_{2r}(G)$
to denote the size of a maximum $2r$-independent set in $G$. Related,
a set $I \subseteq V(G)$ is an \emph{\rcsc} if for all vertices~$v \in G$
it holds that $|N^r[v] \cap I| \leq c$. An $(r,1)$-scattered set is equivalent
to a $2r$-independent set, but this relationship breaks down for $c > 1$.
 We defined~$\scat^c_r(G)$ as the size of a maximum \rcsc in~$G$.
In all cases, for~$c = 1$ we will omit the superscript.

In many of the following constructions we will use the phrase ``connect
$u$ to~$v$ by a path of length~$r$''. This operation is to be understood as
adding~${(r-1)}$ new vertices~$a_1,\ldots,a_{r-1}$ the graph and then adding
edges to create the path~$u a_1 \ldots a_{r-1} v$.


\subsection{Domination/independence duality}

We adapted the following results to use the notation introduced above for
the sake of a unified presentation. In particular, we will be using
$\scat_r$ instead of~$\is_{2r}$. The function $\wcolnum_r$ is one of the
expansion characteristics mentioned above (see \eg \cite{WColBndExp} for a
definition), here it is enough to know that for every member $G$ of a BE-class,
$\wcolnum_r(G)$ is bounded by a constant for every $r \in \N$.

\begin{theorem}[\Dvorak~\cite{DvorakDomset}]\label{thm:dvorak-duality}
  For every graph~$G$ and integer~$r \in \N$ it holds that
  \[
    \scat_r(G) \leq
    \ds_r(G) \leq \wcolnum_{2r}^2(G) \scat_r(G)
  \]
\end{theorem}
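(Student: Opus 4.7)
The plan splits into two parts. For the easy inequality $\scat_r(G) \leq \ds_r(G)$, take an optimal $r$-scattered set $I$ and any $r$-dominating set $D$. Each $v \in I$ admits some $d_v \in D \cap N^r[v]$, and the map $v \mapsto d_v$ must be injective, since $d_v = d_u$ for distinct $u, v \in I$ would force $\dist(u,v) \leq 2r$, contradicting that $I$ is $2r$-independent. Hence $|I| \leq |D|$, and picking $I$ and $D$ optimally proves the inequality.

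For the upper bound I would pass through LP duality. Consider the covering LP for $r$-domination (minimize $\sum_u x_u$ subject to $\sum_{u \in N^r[v]} x_u \geq 1$ for all $v$, with $x \geq 0$) and the packing LP for $r$-scattered set (maximize $\sum_u y_u$ subject to $\sum_{u \in N^r[v]} y_u \leq 1$ for all $v$, with $y \geq 0$). These programs are LP duals, so their optima coincide; call the common value $\ds_r^*(G) = \scat_r^*(G)$. It therefore suffices to prove the two rounding statements
\[
  \ds_r(G) \leq \wcolnum_{2r}(G) \cdot \ds_r^*(G) \quad\text{and}\quad \scat_r^*(G) \leq \wcolnum_{2r}(G) \cdot \scat_r(G),
\]
and chain them via LP duality to obtain $\ds_r(G) \leq \wcolnum_{2r}^2(G) \cdot \scat_r(G)$.

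Both rounding inequalities rest on one geometric fact about any vertex ordering $\sigma$ witnessing $\wcolnum_{2r}(G)$: whenever $\dist(u, v) \leq 2r$, the $\sigma$-minimum vertex on a shortest $u$-$v$ path is weakly $r$-reachable from both endpoints, so any two close vertices share an element in their weak $r$-reach sets. For the dominating side, I would round an optimal fractional solution $x^*$ by transferring each $x^*_v$ onto the weak $r$-reach of $v$; every vertex then collects mass only from those $v$ that weakly $r$-reach it, so the resulting integral weight grows by at most a factor $\wcolnum_{2r}(G)$, and the coverage constraints survive thanks to the common-ancestor property. The scattered side is symmetric: a fractional packing $y^*$ is concentrated onto a single representative per weak $r$-reach class, again losing at most a factor $\wcolnum_{2r}(G)$. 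The main obstacle is making these rounding arguments fully rigorous while verifying feasibility after rounding; these are the standard \Dvorak-style LP-rounding arguments that the geometric fact above is precisely tailored to support.
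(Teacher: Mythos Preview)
The paper does not prove this theorem; it is quoted from \Dvorak~\cite{DvorakDomset} without proof. For comparison, \Dvorak's original argument is algorithmic rather than LP-based: a greedy procedure builds an $r$-dominating set~$D$ together with an $r$-scattered witness~$A \subseteq D$ satisfying $|D| \leq \wcolnum_{2r}(G)^2\,|A|$ directly. Your LP-duality route is a different (and also known) approach.

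Your lower bound is fine. For the upper bound the plan is reasonable in outline, but the key lemma is misstated. You claim that when $\dist(u,v) \leq 2r$ the $\sigma$-minimum~$w$ on a shortest $u$--$v$ path is weakly $r$-reachable from both endpoints. This is false: on a path $v_0 v_1 \cdots v_{2r}$ with $v_{r+1}$ placed first by~$\sigma$, the minimum $v_{r+1}$ sits at distance $r{+}1$ from $v_0$ and hence is not weakly $r$-reachable from it. The correct statement replaces $r$ by $2r$: $w$ is weakly $2r$-reachable from both endpoints, and---more to the point for the rounding---the $\sigma$-minimum $m(v)$ of $N^r[v]$ is weakly $2r$-reachable from every $u \in N^r[v]$. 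With that fix the domination-side rounding does go through: pushing each $x^*_u$ onto the vertices weakly $2r$-reachable from~$u$ inflates the total mass by at most a factor $\wcolnum_{2r}(G)$, while $m(v)$ receives mass at least~$1$ for every~$v$, so the set of vertices with mass $\geq 1$ is an $r$-dominating set of the claimed size. Your scattered-side rounding, by contrast, is only a gesture: ``concentrate onto a single representative per weak $r$-reach class'' does not by itself produce an $r$-scattered set nor control its size, and this direction needs a genuine argument that your sketch does not supply.
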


\noindent
\Dvorak recently showed an improved bound~\cite{DvorakDomset2}, we will use the
above simpler expression. In the same paper he further proved the following
relationship between $r$-scattered sets and $(r,c)$-scattered sets
(translated into our terminology):

\begin{theorem}[\Dvorak~\cite{DvorakDomset2}]\label{thm:dvorak-is}
  For every graph~$G$ and integers~$c,r \in \N$ it holds that
  \[
    \frac{1}{2c \wcolnum_{2r}(G)} \scat_r^c(G) \leq \scat_r(G) \leq \scat_r^c(G)
  \]
\end{theorem}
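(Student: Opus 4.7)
The right-hand inequality $\scat_r(G) \leq \scat_r^c(G)$ is immediate from the definitions: any $r$-scattered set $I$ satisfies $|N^r[v] \cap I| \leq 1 \leq c$ at every $v$, so it is in particular an $(r,c)$-scattered set and the corresponding maxima are therefore ordered accordingly.

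For the left-hand inequality, let $I$ be a maximum $(r,c)$-scattered set, so $|I| = \scat_r^c(G)$, and set $k := \wcolnum_{2r}(G)$. The plan is to extract a $2r$-independent subset $J \subseteq I$ of size $|J| \geq |I|/(2ck)$. I would form the auxiliary graph $H$ on vertex set $I$ whose edges are the pairs $u,v \in I$ with $\dist_G(u,v) \leq 2r$; an independent set of $H$ is exactly a $2r$-independent subset of $I$ in $G$. It therefore suffices to show that $H$ has degeneracy at most $2ck-1$, since a greedy colouring then produces an independent set of size at least $|V(H)|/(2ck) = \scat_r^c(G)/(2ck)$.

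To bound the degeneracy, fix an ordering $\pi$ of $V(G)$ witnessing $\wcolnum_{2r}(G) = k$, that is, such that at most $k$ vertices are weakly $2r$-reachable from any fixed vertex. Fix $v \in I$ and consider its set of earlier $H$-neighbours $E_v = \{u \in I : u <_\pi v,\ \dist_G(u,v) \leq 2r\}$. For each $u \in E_v$, take a shortest $u$--$v$-path $P_u$ and let $w_u$ be its $\pi$-minimum vertex; then $w_u$ is weakly $2r$-reachable from $v$ via $P_u$, so the anchors $w_u$ range over a set of size at most $k$. Fix such an anchor $w$. The $u$-to-$w$ and $w$-to-$v$ sub-paths of $P_u$ have lengths summing to at most $2r$, so at least one of them is bounded by $r$. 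If it is the former, then $u \in N^r[w] \cap I$ and the $(r,c)$-scatter constraint at $w$ bounds the number of such $u$ by $c$. If it is the latter, then $v \in N^r[w]$ and I would apply the $(r,c)$-constraint at a suitable midpoint of the $u$-to-$w$ sub-path (which lies in $N^r[u]$) to again bound the number of contributing $u$ by $c$. Summing over the at most $k$ anchors yields $|E_v| \leq 2ck$, as required.

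The main obstacle is the second branch of the dichotomy. There the $(r,c)$-constraint at $w$ does not pin $u$ down directly, because $\dist_G(u,w)$ may be as large as $2r$; one must re-anchor at a secondary midpoint on the $u$-to-$w$ sub-path and verify that the resulting $N^r$-ball in $I$ still absorbs all contributions without introducing additional weak-reachability anchors. This is the step that earns the tight multiplicative constant $2$ in $2ck$, compared with the weaker bound $\scat_r^c(G) \leq c\,\wcolnum_{2r}^2(G)\,\scat_r(G)$ obtained for free by combining Theorem~\ref{thm:dvorak-duality} with the covering estimate $|I \cap N^r[d]| \leq c$ for every $d$ in a minimum $r$-dominating set, and it mirrors \Dvorak's midpoint technique from the cited reference.
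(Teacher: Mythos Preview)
The paper does not prove this theorem at all; it is quoted from \cite{DvorakDomset2} and used as a black box. So there is no ``paper's own proof'' to compare against, and your proposal has to stand on its own.

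The right-hand inequality and Case~1 of the left-hand argument are fine. The genuine gap is Case~2, and you say so yourself: you flag it as ``the main obstacle'' and then do not resolve it. Your proposed fix---apply the $(r,c)$-constraint at ``a suitable midpoint of the $u$-to-$w$ sub-path''---does not work as written, because that midpoint depends on $u$. For a fixed anchor $w$, different $u$'s in Case~2 give different midpoints $m_u$, so the bound $|N^r[m_u]\cap I|\le c$ does not limit the number of such $u$'s simultaneously. To make a per-vertex back-degree argument go through you would need a single vertex $z$, depending only on $v$ and $w$, that is both weakly $2r$-reachable from $v$ \emph{and} within distance $r$ of every Case-2 vertex $u$ with anchor $w$; your sketch does not produce one, and in general none exists.

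A standard repair is to drop the per-vertex back-degree bound and count edges of $H$ globally: for an edge $\{u,v\}$ with $u<_\pi v$ and anchor $w$, charge it to the pair $(v,w)$ in Case~1 and to $(u,w)$ in Case~2 (note $w$ is weakly $2r$-reachable from $u$ as well, and in Case~2 one has $v\in N^r[w]\cap I$). Each vertex of $I$ then absorbs at most $k$ anchors and each pair at most $c$ charges, giving $|E(H[I'])|\le 2ck\,|I'|$ for every $I'\subseteq I$. But this yields only degeneracy $\le 4ck$, hence an independent set of size $\ge |I|/(4ck+1)$, weaker than the constant $2ck$ in the stated theorem. Matching the constant requires the sharper argument in \Dvorak's paper, which your last paragraph alludes to but does not actually carry out.
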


\begin{theorem}[\Dvorak's algorithm~\cite{DvorakDomset}]\label{thm:dvorak-ds}
  For every BE class $\G$ and $r \in \N$ there exists a constant $\cdvorak$
  and a polynomial-time algorithm that computes an $r$-dominating set $D$ of $G$
  and an $r$-scattered set $A \subseteq D$ with $|D| \leq \cdvorak |A|$.
\end{theorem}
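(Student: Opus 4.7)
The plan rests on the weak $2r$-colouring number characterization of bounded expansion: for any BE class $\G$, the quantity $k := \wcolnum_{2r}(\G)$ is a finite constant, and in polynomial time one can compute a vertex order $\pi$ of any $G \in \G$ together with the weak $2r$-reach sets $Q(v) := \text{WReach}_{2r}[v,\pi]$, each of size at most $k$. These sets enjoy the ``meeting vertex'' property: whenever $\dist(u,v) \leq 2r$, there exists $w \in Q(u) \cap Q(v)$ lying on a short $u$-$v$ path. With this in hand I set $\cdvorak := k^2$ to match the factor already promised by Theorem~\ref{thm:dvorak-duality}.

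The algorithm is a single greedy pass over $V(G)$ in the order $\pi$, simultaneously building a dominating set $D$ and an $r$-scattered subset $A \subseteq D$. When processing $v$: if $v$ is already $r$-dominated, skip; otherwise test whether any existing anchor lies in $N^{2r}[v]$. If not, insert $v$ into $A$ and add all of $Q(v)$ to $D$; if so, add only a single vertex of $Q(v)$ sufficient to $r$-dominate $v$ (in the worst case, $v$ itself, since $v \in Q(v)$). Correctness of $D$ as an $r$-dominating set follows directly from the loop condition, $A$ is $r$-scattered since the anchor guard enforces pairwise distance $> 2r$ within $A$, and $A \subseteq D$ by construction.

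For the size inequality $|D| \leq \cdvorak |A|$, the anchor insertions contribute at most $k\,|A|$ vertices to $D$. The remaining ``cover'' insertions must be charged back to anchors: each such $v$, at the moment of processing, sits at distance $\leq 2r$ from some anchor $a(v) \in A$, and the cost of covering $v$ is routed through a meeting vertex $w \in Q(v) \cap Q(a(v))$. A double-counting argument bounds the number of cover-contributions per anchor by $O(k)$, yielding $|D| \leq k^2 |A| = \cdvorak\,|A|$. The running time is polynomial: computing $\pi$ and the sets $Q(v)$ is polynomial in BE classes, and the greedy pass runs in time $O(\sum_v |Q(v)|) \leq O(kn)$.

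The main obstacle is the charging step in the size analysis. The cleanest route is to show that for any fixed anchor $a$, at most $O(k)$ cover-vertices $v$ can be charged to it; one must carefully track which meeting vertex $w \in Q(a)$ serves each $v$ and argue via the combinatorics of weak reachability that the fibres of the map $v \mapsto w$ cannot be too large, combining $|Q(a)| \leq k$ with a structural bound on the number of pre-images per meeting vertex. Once this per-anchor bound is established, the final inequality assembles routinely, and the constant $\cdvorak$ can be tightened to match the right-hand side of Theorem~\ref{thm:dvorak-duality}.
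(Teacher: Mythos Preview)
This theorem is a citation and is not proved in the paper, so there is no in-paper argument to compare against. Your framework via the weak $2r$-colouring number is the right one and is indeed what \Dvorak uses, but the algorithm you describe does not achieve the stated bound, and the charging difficulty you flag is a symptom of a concrete error rather than a gap to be filled in later.

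The error is the direction of the pass. Take $r=1$ and let $G$ be the spider with centre $a$ and $m$ legs $a\,\ell_{i,1}\,\ell_{i,2}$ for $1\le i\le m$. Ordering $\pi$ as $a$, then the $\ell_{i,1}$, then the $\ell_{i,2}$, gives $\wcolnum_2(G,\pi)=3$, so $k=3$ and your $\cdvorak=9$. Your forward pass makes $a$ the unique anchor: it is processed first, undominated, with no prior anchors, and $Q(a)=\{a\}$ is added to $D$. Every $\ell_{i,1}$ is then $1$-dominated by $a$ and skipped. Every leaf $\ell_{i,2}$ is undominated but has the anchor $a\in N^2[\ell_{i,2}]$, so it enters the cover branch and contributes one vertex to $D$. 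The output is $|A|=1$, $|D|=m+1$, and the ratio is unbounded in $m$. Your meeting-vertex heuristic collapses exactly here: all $m$ cover-vertices share the single meeting vertex $w=a\in Q(a)$, and the number of $v$ with $w\in Q(v)$ is the in-degree of $w$ in the weak-reachability relation, which is not bounded by any function of $k$.

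The repair requires processing in \emph{reverse} $\pi$-order, so that $\pi$-large (peripheral) vertices are visited first and a single $\pi$-small central vertex cannot pre-empt the whole anchor set. It further requires choosing the dominator more carefully than ``any vertex of $Q(v)\cap N^r[v]$'': adding $\min_\pi N^r[u]$ for the relevant $u$ is what places the new $D$-vertices inside the weak $r$-reach of the right witnesses and enables a two-step weak-reachability charging to produce the factor~$k^2$. With those two changes the plan can be completed; as written, both the processing order and the cover branch are wrong, not merely under-specified.
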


\noindent
In particular, the $r$-scattered set $A$ witnesses that $D$ is indeed a
$\cdvorak$-approximation of a minimum $r$-dominating of $G$. This algorithm can
further be modified to compute a dominating set for a
specific set $X \subseteq V(G)$ only; in that case it outputs the sets $A$ and $D$,
$A \subseteq D \cap X$, where $D$ dominates all of $X$ in $G$ and $A$ is
$r$-scattered in $G$. We will call this algorithm the \emph{warm-start}
variant since we only need to mark the vertices $V(G)\setminus X$ as
already dominated and then run the original algorithm (an alternative is a small
gadget construction~\cite{DSKernel}).

\subsection{Projections, shadows, and distance preservation}

Given a vertex set $X \subseteq V(G)$ we call a path
\emph{$X$-avoiding} if its internal vertices are not contained
in $X$. A \emph{shortest $X$-avoiding path} between vertices $x,y$
is shortest among all $X$-avoiding paths between $x$ and $y$.

\begin{definition}[$r$-projection]
  For a vertex set $X \subseteq V(G)$ and a vertex $u \not \in X$
  we define the \emph{$r$-projection} of $u$ onto $X$ as the set
  \[
    \proj^r_X(u) := \{ v \in X \mid \text{there exists an $X$-avoiding $u$-$v$-path of length} \leq r \}
  \]%
\end{definition}

\noindent
Note in particular that $\proj^1_X(u) = N(u) \cap X$, but for $r > 1$ the sets
$\proj^r_X(u)$ and $N^r(u) \cap X$ might differ.

\begin{definition}[$r$-shadow]
  For a vertex set $X \subseteq V(G)$ and a vertex $u \not \in X$
  we define the \emph{$r$-shadow} of $u$ onto $X$ as the set
  \[
    \shad^r_X(u) := \{ v \in V(G) \mid \text{every $u$-$v$-path of length} \leq r
    ~\text{has an internal vertex in~$X$} \}
  \]%
\end{definition}

\noindent
The shadow $\shad^r_X(u)$ contains precisely those vertices that are `cut off' by
the set $\proj^r_X(u)$. We will frequently need the union of shadow and projection
and therefore introduce the shorthand $\vol^r_X(u) := \shad^r_X(u) \cup \proj^r_X(u)$.

 Two vertices that have the same $r$-projection onto $X$
do not, however, necessarily have the same shadow since the precise distance
at which the projection lies might differ. To distinguish such cases, it
is useful to consider the \emph{projection profile} of a vertex to its projection:

\begin{definition}[$r$-projection profile]
  For a vertex set $X \subseteq V(G)$ and a vertex $u \not \in X$
  we define the \emph{$r$-projection profile} of $u$ wrt $X$ as a function
  $\pi^r_{G,X}[u] \colon X \to [r] \cup \infty$ where $\pi^r_{G,X}[u](v)$
  for $v \in X$ is the length of a shortest $X$-avoiding path from $u$ to $v$
  if such a path of length at most $r$ exists and $\infty$ otherwise.
\end{definition}

\noindent
We say that a function $\nu \colon X \to [r] \cup \infty$ is \emph{realized on
$X$} (as a projection profile) if there exists a vertex $u \not \in X$ for which
$\nu = \pi^r_{G,X}[u]$ and we denote the set of all realized profiles by $\Pi_G^r(X)$.
We will usually drop the subscript~$G$ if the graph is clear from the context.
It will be convenient to define an equivalence relation that
groups vertices outside of $X$ by their projection profile. Define
\[
  u \sim^r_{X} v \iff \pi^r_X[u] = \pi^r_X[v]
\]
for pairs $u,v \in V(G)\setminus X$.

It turns out that in BE classes, the number of possible projection profiles
realised on a set $X$  is bounded linearly in the size of $X$.

\begin{lemma}[Adapted from~\cite{DSKernel,DSKernelND}]\label{lemma:projbound}
  For every BE class $\G$ and $r \in \N$ there exists a constant $\cproj$
  such that for every $G \in \G$ and $X \subseteq V(G)$, the number of
  $r$-projection profiles realised on $X$ is at most $\cproj |X|$.
\end{lemma}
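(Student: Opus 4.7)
The plan is to encode each realised profile by a representative vertex together with a small witness subgraph, and then to produce a shallow minor of~$G$ in which these representatives form one side of a bipartite structure. Once we have such a minor at depth depending only on~$r$, the bounded expansion hypothesis gives us control over its density, and from this we extract the desired linear bound on~$|\Pi_G^r(X)|$. The overall strategy is essentially the one pioneered in~\cite{DSKernel} for projections and refined in~\cite{DSKernelND,ISKernel}; our adaptation has to account for the extra information carried by distance labels (as opposed to mere projection sets).

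Concretely, I would begin by fixing, for each profile $\pi \in \Pi_G^r(X)$, a representative $u_\pi \notin X$ with $\pi^r_X[u_\pi] = \pi$ and, for each $v \in \mathrm{supp}(\pi)$, a shortest $X$-avoiding path $P_{\pi, v}$ from $u_\pi$ to $v$ of length exactly~$\pi(v) \leq r$. The union $T_\pi = \bigcup_v P_{\pi, v}$ can be pruned to a BFS-tree rooted at $u_\pi$ of depth at most~$r$ whose leaves are precisely $\mathrm{supp}(\pi)$ at the prescribed depths. Next I would form the auxiliary bipartite multigraph $H$ on $U \cup X$, where $U = \{u_\pi\}_\pi$, by contracting the internal vertices of each $T_\pi$ into~$u_\pi$ and keeping one edge from~$u_\pi$ to each leaf. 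After a standard uncrossing step to resolve shared internal vertices between different witness trees, $H$ embeds into $G$ as a shallow minor of depth at most~$r$, so bounded expansion yields $|E(H)| \leq \grad_r(\G) \cdot (|U| + |X|)$.

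This density bound alone is not yet sufficient, because each~$u_\pi$ can contribute as few as one edge to~$H$, so I cannot immediately absorb $|U|$ into the right-hand side. The refinement step is to stratify profiles by their minimum value $d(\pi) = \min_v \pi(v)$ and, within each stratum, to further enrich the auxiliary construction so that profiles with identical support but differing distance labels correspond to \emph{distinct} neighbourhoods in a slightly deeper shallow minor. One way to achieve this is to attach, to each~$u_\pi$, a short ``label gadget'' of length~$\leq r$ that records the distance vector: the resulting enriched graph is still a shallow minor of~$G$ at a depth bounded in terms of~$r$, and the number of distinct neighbourhood types now equals $|\Pi_G^r(X)|$. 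Applying the linear bound on neighbourhood complexity available in BE-classes (as formalised in~\cite{DSKernel,DSKernelND}) to this enriched minor then yields $|\Pi_G^r(X)| \leq \cproj |X|$ with $\cproj$ a class constant depending only on~$\G$ and~$r$.

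The main obstacle I expect is the gap between a mere density estimate and the desired linear bound: handling profiles with coinciding supports but differing distance labels without losing the linear dependence on~$|X|$. This is the step that distinguishes profile-counting from the easier projection-counting argument, and it is where the inductive machinery of~\cite{DSKernel,DSKernelND} — together with the closure of BE-classes under bounded-depth minor operations and under taking bounded powers — must be invoked carefully to avoid polynomial blow-up in the number of distance-labellings.
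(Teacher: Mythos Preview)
The paper does not actually prove this lemma: it is stated with the attribution ``Adapted from~\cite{DSKernel,DSKernelND}'' and no proof is given. So there is nothing in the paper to compare your argument against directly; what follows is an assessment of your sketch on its own merits and against the arguments in the cited references.

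Your overall plan---choose representatives, build short witness trees, contract to obtain a bounded-depth shallow minor, and then appeal to a density/neighbourhood-complexity bound---is indeed the shape of the argument in~\cite{DSKernel,DSKernelND}. Two steps in your write-up, however, do not go through as stated.

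First, the ``uncrossing step to resolve shared internal vertices'' is doing real work that you have not supplied. Distinct witness trees $T_\pi$ and $T_{\pi'}$ may share internal vertices, and a minor model requires disjoint branch sets; you cannot simply assert that this can be repaired without either changing the representatives or bounding how many trees can overlap. In the references this is handled not by uncrossing but by first passing to a projection closure (so every outside vertex has a projection of bounded size) and only then counting.

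Second, and more seriously, the ``label gadget'' idea cannot work in the form you describe. A shallow minor of $G$ contains only vertices and edges arising from $G$; you cannot attach new gadget paths to $u_\pi$ and still claim the result embeds as a minor of $G$. The way the references encode the distance labels is different: one replaces $X$ by $r$ layered copies (conceptually $X \times [r]$) and builds an auxiliary graph in which $u$ is adjacent to the copy $(x,d)$ precisely when the $X$-avoiding distance from $u$ to $x$ equals $d$. This auxiliary graph is not a minor of $G$, but it is a bounded-radius first-order interpretation of $G$ (equivalently, it sits inside a bounded lexicographic blow-up of a bounded power of $G$), and BE classes are closed under such operations with only a controlled change in the expansion characteristics. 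Ordinary neighbourhood complexity in this auxiliary graph then gives $|\Pi^r_G(X)| \leq \cproj \cdot r|X| = O(|X|)$ directly, with no inductive stratification needed.

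In short: your diagnosis of the obstacle (distinguishing profiles with equal support but different distance labels) is exactly right, but your proposed fix via gadgets inside a minor is not sound. Replace that step by the layered-copies construction and an appeal to closure of BE classes under bounded interpretations, and the argument becomes the one in~\cite{DSKernelND}.
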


\noindent
In our notation this can alternatively be written as $|\Pi^r(X)| =
|(V(G)\setminus X) / {\sim^r_X}|  \leq \cproj |X|$. We will crucially
rely on the following two results for BE classes:

\begin{lemma}[Projection closure~\cite{DSKernel}]\label{lemma:projclos}
  For every BE class $\G$ and $r \in \N$ there exists a constant $\cprojclos$
  and a polynomial-time algorithm that, given $G \in \G$ and $X \subseteq V(G)$,
  computes a superset $X' \supseteq X$, $|X'| \leq \cprojclos |X|$, such that
  $|\proj^r_{X'}(u)| \leq \cprojclos$ for all $u \in V(G)\setminus X'$.
\end{lemma}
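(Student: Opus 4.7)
The plan is to build $X'$ by iteratively augmenting $X$ with vertices that break up large projections, using Lemma~\ref{lemma:projbound} as the key bookkeeping tool. Concretely, I would set a constant threshold $C = C(r, \G)$, call a profile $\pi \in \Pi^r(X_i)$ \emph{heavy} if $|\mathrm{supp}(\pi)| > C$, and iterate until no heavy profile remains. The final threshold $C$ (and hence $\cprojclos$) would be chosen at the end of the analysis in terms of $\cproj$, $\cdvorak$, and $r$.

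The iterative construction proceeds as follows. Initialize $X_0 := X$. At step $i$, compute $\Pi^r(X_i)$ in polynomial time via BFS of shortest $X_i$-avoiding paths from every $u \notin X_i$. If all profiles are light, output $X' := X_i$. Otherwise, for each heavy profile $\pi$, pick a representative $u_\pi$ realising it and add $u_\pi$ together with the internal vertices of chosen shortest $X_i$-avoiding paths from $u_\pi$ to each $v \in \mathrm{supp}(\pi)$, into $X_{i+1}$. Each such addition contributes at most $r \cdot |\mathrm{supp}(\pi)| + 1$ new vertices. After adding them, any vertex $u'$ previously equivalent to $u_\pi$ now has its old $X_i$-avoiding paths blocked by the new interior vertices, so the heavy class $[\pi]$ is destroyed on $X_{i+1}$.

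The central difficulty is bounding the total growth $|X_t| = O(|X|)$, because the blow-up from naive addition could be super-linear and the augmentation may create new heavy profiles. To handle this, I would replace the naive per-profile augmentation by a smarter cover: for each heavy $\pi$, instead of adding all of $\mathrm{supp}(\pi)$'s paths, run the warm-start variant of Dvo\v{r}\'ak's algorithm (Theorem~\ref{thm:dvorak-ds}) on the subgraph $G[\mathrm{supp}(\pi) \cup \{u_\pi\}]$ to obtain a $\cdvorak$-approximate covering set of size $O(|\mathrm{supp}(\pi)|/C)$, and add those vertices plus the short connecting paths. Combined with Lemma~\ref{lemma:projbound}, which yields at most $\cproj |X_i|$ profiles at step $i$, the total mass added in step $i$ is at most $\cproj |X_i| \cdot O(r C)$ charged against supports of size $> C$; choosing $C$ large enough (roughly $C \geq 2 r \cdvorak \cproj$) forces the recurrence $|X_{i+1}| \leq (1 + \tfrac{1}{2})|X_i|$ to fail and instead gives $|X_{i+1}| - |X_i|$ bounded by a strictly decreasing potential function (the count of heavy-profile incidences), yielding $|X'| \leq \cprojclos |X|$ after $O(\log |X|)$ rounds.

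The hard part is thus the amortised analysis: showing that destroying a heavy profile on $X_i$ cannot \emph{re-create} arbitrarily many new heavy profiles on $X_{i+1}$. The plan here is to use Lemma~\ref{lemma:projbound} on $X_{i+1}$ itself: any newly-heavy profile must have support intersecting the freshly added vertices, of which there are only $O(|X_i|)$, and the number of profiles containing any fixed new vertex in their support is bounded by $\cproj$ times the size of the local neighbourhood. This localisation argument, together with the termination potential, gives the desired linear bound, and the per-iteration work is clearly polynomial.
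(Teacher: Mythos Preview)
The paper does not prove this lemma at all: it is quoted from~\cite{DSKernel} as a black-box tool, so there is no in-paper argument to compare your proposal against. That said, your proposed proof has genuine gaps that would need to be fixed before it could stand on its own.

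First, the key reduction step is wrong as stated. You claim that after adding the internal vertices of shortest $X_i$-avoiding paths from a single representative $u_\pi$ to each $v \in \mathrm{supp}(\pi)$, ``any vertex $u'$ previously equivalent to $u_\pi$ now has its old $X_i$-avoiding paths blocked''. This is false: two vertices $u_\pi$ and $u'$ can have identical projection profiles onto $X_i$ while using entirely disjoint $X_i$-avoiding paths to reach the same targets. Adding the interior of $u_\pi$'s paths need not touch any path used by $u'$, so $u'$ can retain its full (heavy) projection onto $X_{i+1}$. Destroying one heavy \emph{vertex} does not destroy the heavy \emph{profile class}.

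Second, the invocation of \Dvorak's warm-start algorithm on $G[\mathrm{supp}(\pi) \cup \{u_\pi\}]$ is unmotivated: that algorithm returns an $r$-dominating set and a scattered witness inside a specified target set, and it is unclear what dominating $\mathrm{supp}(\pi)$ at distance~$r$ inside this tiny induced subgraph would buy you toward shrinking projections in $G$. You also never establish the claimed decreasing potential; you explicitly note that the naive recurrence gives $|X_{i+1}| \leq \tfrac{3}{2}|X_i|$, which is exponential growth over $O(\log |X|)$ rounds, and the replacement argument (``heavy-profile incidences'') is asserted rather than defined or bounded.

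The proof in~\cite{DSKernel} proceeds quite differently: it exploits the expansion characteristics directly, arguing that a vertex with a large $r$-projection onto $X$ would force a dense shallow minor (essentially a subdivided star centred at $u$ with many leaves in $X$), so the closure can be built in one shot by a greedy or weak-colouring-based argument rather than by an iterative break-and-repair loop. If you want to salvage an iterative approach, you would need a step that provably reduces \emph{all} heavy projections simultaneously by a constant factor while adding only $O(|X_i|)$ vertices; Lemma~\ref{lemma:projbound} alone does not give you this.
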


\begin{lemma}[Shortest path closure~\cite{DSKernel}]\label{lemma:pathclos}
  For every BE class $\G$ and $r \in \N$ there exists a constant $\cpathclos$
  and a polynomial-time algorithm that, given $G \in \G$ and $X \subseteq V(G)$,
  computes a superset $X' \supseteq X$, $|X'| \leq \cpathclos |X|$, such that
  for all $u, v \in X$ with $\dist(u,v) \leq r$ it holds that
  $\dist_{G[X']}(u,v) = \dist(u,v)$.
\end{lemma}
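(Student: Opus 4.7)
The plan is to invoke Lemma~\ref{lemma:projclos} at depth~$r$ to obtain a superset $X^\ast \supseteq X$ with $|X^\ast| \leq \cprojclos|X|$ and $|\proj^r_{X^\ast}(u)| \leq \cprojclos$ for every $u \notin X^\ast$, and then to augment $X^\ast$ to $X'$ by adding, for each realized projection profile $\pi \in \Pi^r(X^\ast)$, a single representative $w_\pi \notin X^\ast$ with $\pi^r_{X^\ast}[w_\pi] = \pi$, together with the internal vertices of one fixed shortest $X^\ast$-avoiding path from $w_\pi$ to each $v \in \proj^r_{X^\ast}(w_\pi)$. Lemma~\ref{lemma:projbound} bounds the number of realized profiles by $\cproj|X^\ast|$, and each profile contributes at most $1 + \cprojclos(r-1)$ vertices, so $|X'| = O(|X|)$ and the required constant $\cpathclos$ exists. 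Profiles, projections, and the chosen witness paths can be enumerated in polynomial time via truncated BFS.

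For correctness, take $u, v \in X$ with $\dist_G(u,v) = d \leq r$ and let $P$ be a shortest $u$-$v$-path in $G$. Since $u, v \in X^\ast$, partitioning $P$ at its $X^\ast$-vertices decomposes it into shortest $X^\ast$-avoiding subpaths between consecutive $X^\ast$-vertices (any shorter $X^\ast$-avoiding replacement would contradict the optimality of $P$), so it suffices to reconstruct each such subpath of length $\ell \geq 2$ with the same length inside $G[X']$. Let $a, b \in X^\ast$ be the endpoints of such a subpath and let $w$ be its first internal vertex. The profile $\pi := \pi^r_{X^\ast}[w]$ satisfies $\pi(a) = 1$ via the edge $aw$, and $\pi(b) = \ell - 1$: the tail of the subpath witnesses $\pi(b) \leq \ell - 1$, while the shortest-path property of $P$ forces $\pi(b) \geq \dist_G(w, b) = \ell - 1$. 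The representative $w_\pi$ shares these two distance values, so the edge $a w_\pi$ concatenated with the added length-$(\ell-1)$ path from $w_\pi$ to $b$ yields an $a$-$b$-path of length $\ell$ entirely inside $G[X']$.

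The delicate point, and what makes the linear bound possible, is that representatives and witness paths must be allocated \emph{per projection profile} rather than per pair of vertices in $X$; a naive construction that added shortest paths for all pairs of $X$-vertices within distance $r$ would be quadratic in $|X|$. The linear bound on the number of realized profiles (Lemma~\ref{lemma:projbound}) together with the constant-size projections produced by projection closure (Lemma~\ref{lemma:projclos}) is precisely what tames this, and the correctness argument shows that one representative per profile already carries enough information to preserve all short distances between vertices of $X$.
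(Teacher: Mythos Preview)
The paper does not supply its own proof of this lemma; it is quoted from~\cite{DSKernel} as a black box. Your argument is correct and self-contained: applying projection closure first, then adding one witness vertex per realized $r$-projection profile together with short $X^\ast$-avoiding paths to its projection, does preserve all $\leq r$-distances between vertices of~$X$. The key step---choosing the \emph{first} internal vertex~$w$ of each $X^\ast$-avoiding segment so that $\pi(a)=1$ forces the representative $w_\pi$ to be adjacent to~$a$---is exactly what makes one representative per profile sufficient, and your justification that $\pi(b)=\ell-1$ via the shortest-path property of~$P$ is sound. The size bound follows as you state from Lemmas~\ref{lemma:projclos} and~\ref{lemma:projbound}. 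In spirit this is very close to how the paper itself builds the projection kernel in Lemma~\ref{lemma:projkernel} (which layers the cited path-closure on top of projection closure and then adds representatives), so your proof is consistent with the surrounding machinery.
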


\noindent
It will be useful to combine the above two lemmas in the following way:

\begin{definition}[Projection kernel]\label{def:projkernel}
  Given a graph~$G$ and a set~$X \subseteq V(G)$, an
  \emph{$(r,c)$-projection kernel} of~$(G,X)$ is an induced subgraph~$\hat G$ of~$G$
  with~$X \subseteq V(\hat G$) and the following properties:
  \begin{enumerate}
    \item $N_{\hat G}^d(v) \cap X = N_{G}^d(v) \cap X$ for all~$v \in X$ and~$d \leq r$; and
    \item if the signature~$\nu \colon X \to [r] \cup \infty$ is realized on~$X$ by~$p$ distinct
    vertices in~$G$, then~$\nu$ is realized by at least~$\min\{c,p\}$ distinct
    vertices in~$\hat G$.
  \end{enumerate}
\end{definition}

\begin{lemma}\label{lemma:projkernel}
  For every BE class $\G$ and $c, r \in \N$ there exists a constant $\ctotalclos$
  and a polynomial-time algorithm that, given $G \in \G$ and $X \subseteq V(G)$,
  computes an $(r,c)$-projection kernel~$\hat G$ of~$(G,X)$ with
  $|\hat G| \leq \ctotalclos |X|$.
\end{lemma}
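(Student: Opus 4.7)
The plan is to combine the shortest-path and projection closures with an explicit choice of representatives for each realised $r$-projection profile. First apply Lemma~\ref{lemma:pathclos} to~$X$ to obtain a superset $X_1 \supseteq X$ of size at most $\cpathclos|X|$ for which $G[X_1]$ preserves all pairwise distances of at most~$r$ inside~$X$; then apply Lemma~\ref{lemma:projclos} to~$X_1$ to obtain $X_2 \supseteq X_1$, still of linear size, such that every $u \notin X_2$ has $|\proj^r_{X_2}(u)| \leq \cprojclos$. By Lemma~\ref{lemma:projbound}, the set $\Pi_G^r(X)$ of realised profiles has cardinality at most $\cproj|X|$. For each profile~$\nu$, greedily pick up to $c$ witnesses from $V(G)\setminus X$ realising~$\nu$ in~$G$ and collect them in~$R$, so $|R| \leq c\cproj|X|$. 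For each witness $u \in R$ and each $v \in \proj^r_X(u)$, fix a shortest $X$-avoiding $u$-$v$-path in~$G$ and mark its internal vertices, collecting them in a set~$P$. Output $\hat G := G[X_2 \cup R \cup P]$.

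Property~(1) is immediate from $X_1 \subseteq V(\hat G)$. Property~(2) holds because, for each $u \in R$, the chosen paths remain $X$-avoiding and of the same length in the induced subgraph~$\hat G$, so $\pi^r_{\hat G,X}[u]$ matches the profile $u$ realises in~$G$; in particular $u$ witnesses its intended profile in~$\hat G$. The sizes of $X_2$ and~$R$ are each linear in~$|X|$ by construction, so the only remaining task is to bound~$|P|$. Each marked path contributes at most $r-1$ internal vertices, giving $|P| \leq (r-1)\sum_{u \in R} |\proj^r_X(u)|$.

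The main obstacle is that a priori $|\proj^r_X(u)|$ can be as large as~$|X|$. To control it, the key observation is that any shortest $X$-avoiding $u$-$v$-path from a witness $u \notin X_2$ decomposes at its successive entries into~$X_2$ into at most~$r$ segments that are $X_2$-avoiding; the first such entry has only $\cprojclos$ choices by the projection closure, and by iterating this decomposition $r$ times (applying Lemma~\ref{lemma:projclos} to the evolving gateway sets inside $X_2 \setminus X$ so that the same bound applies from each intermediate vertex) we obtain $|\proj^r_X(u)| \leq \cprojclos^{O(r)}$, a constant depending only on~$r$ and~$\G$. Witnesses that happen to lie in $X_2 \setminus X$ do not enlarge the overall bound because $X_2$ is already of linear size. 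Putting everything together yields $|\hat G| \leq \ctotalclos|X|$ for some $\ctotalclos = \ctotalclos(c,r,\G)$, and polynomial runtime follows from the polynomial runtimes of the component constructions.
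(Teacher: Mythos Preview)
Your overall strategy---closures plus representatives plus connecting paths---is the right one, and your verification of Properties~(1) and~(2) is fine. The gap is the size bound for~$P$.

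The claim that $|\proj^r_X(u)| \leq \cprojclos^{O(r)}$ for an individual vertex~$u$ is simply false: in a star with centre~$u$ and leaf set~$X$ (a tree, hence in every BE class) we have $\proj^1_X(u) = X$, and subdividing each edge gives the same phenomenon for larger~$r$. Your decomposition argument cannot repair this. The projection-closure lemma bounds $|\proj^r_{X_2}(w)|$ only for vertices $w \notin X_2$; once a path enters~$X_2$ at some gateway~$w$, nothing limits how many vertices of $X \subseteq X_2$ are reachable from~$w$ by short $X$-avoiding walks. ``Applying Lemma~\ref{lemma:projclos} to the evolving gateway sets'' does not help either: re-applying the closure yields a superset~$X_3 \supseteq X_2$, but the gateways now lie \emph{inside}~$X_3$, so again the bound does not apply to them. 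Your final sentence about witnesses in $X_2 \setminus X$ only accounts for the witnesses themselves, not for the internal vertices of the $X$-avoiding paths emanating from them.

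The paper avoids the whole issue by swapping the roles of the two closures and, crucially, choosing representatives for projection profiles onto the \emph{projection closure}~$X_1$ rather than onto~$X$. Each representative~$u$ then has $|\proj^r_{X_1}(u)| \leq \cprojclos$ by construction, so only a constant number of connecting paths per representative are needed, and they only need to reach~$X_1$. A shortest-path closure of~$X_1$ (not merely of~$X$) then ensures that the induced profile onto~$X$ is as intended. In short: route the witnesses to the projection closure, not to~$X$ itself.
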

\begin{proof}
  We first apply Lemma~\ref{lemma:projclos} to~$X$ and obtain a set~$X_1
  \supset X$, $|X_1| \leq \cprojclos[r] |X|$, such that the projections of
  outside vertices onto~$X_1$ have size at most~$\cprojclos[r]$.

  Next, we
  apply Lemma~\ref{lemma:pathclos} to~$X_1$ and receive a set~$X_2 \supset X_1$, $|X_2| \leq
  \cpathclos |X_1|$, such that the graph~$G[X_2]$ preserves short distances
  (less than or equal to~$r$) between vertices in~$X_1$. Finally, let~$U$
  contain up to~$c$ representatives for every equivalence class~$[u] \in V(G)
  / \sim^r_{X_1}$ (if the class is smaller than~$c$ we include all of it). By
  Lemma~\ref{lemma:projbound} we have that $|U| \leq c \cdot \cproj |X_1|$.

  Construct now~$X_3$ by taking the union~$X_2 \cup U$ as well
  as shortest paths from every member~$u \in X_2 \cup U$ to all of
  $\proj^r_{X_1}(u)$. By definition, each of these paths has length at most~$r$
  and therefore contains at most~$r-1$ internal vertices. Since,
  by construction of~$X_1$, $|\proj^r_{X_1}(u)| \leq \cproj$; it follows that we
  add at most~$\cproj (r-1)$ vertices per vertex in~$X_2 \cup U$.
  Taking the above bounds together, we have that
  \[
    |X_3| \leq (r-1)\, \cproj (\cpathclos + c \cdot \cproj) |X| =: \ctotalclos |X|.
  \]
  It remains to be shown that~$\hat G := G[X_3]$ has the desired properties.

  Property~1 follows directly from the fact that already~$G[X_2] \subseteq \hat G$
  preserves short distances among vertices inside~$X_1 \supseteq X$. In particular,
  each vertex in $X_1\setminus X$ has the same $r$-projection profile onto~$X$
  in~$G$ and~$\hat G$.

  To see that Property~2 holds, consider
  any profile~$\nu$ realized on~$X$ by vertices~$S \subseteq V(G) \setminus X$
  in~$G$. First consider the case $S \setminus X_1 \neq \emptyset$.
  Then by construction, the set~$U$ contains $\min\{c,|S\setminus X_1|\}$ vertices
  from~$S\setminus X_1$ that realize $\nu$ in~$G$ and whose projection onto~$X_1$ is the
  same in~$G$ and~$\hat G$. Since~$X_1 \supseteq X$, we conclude that their
  projection on~$X$ in~$\hat G$ must be~$\nu$. By the above, the vertices
  in $S \cap X_1$ must have the profile~$\nu$ as well.
  Now assume $S \subseteq X_1$, therefore no vertex outside of~$X_1$ has the
  profile~$\nu$ in~$G$. As argued above, $S$ has the profile $\nu$ in~$\hat G$
  as well, therefore~$\hat G$ contains~$|S| \geq \min\{c, |S|\}$ vertices with profile~$\nu$,
  as claimed.
\end{proof}

\noindent
Note that the above construction implies that~$\Pi^r_{\hat G}(X) \supseteq \Pi^r_G(X)$,
however, it is not necessarily true that~$\Pi^r_{\hat G}(X) = \Pi^r_G(X)$.

The following is a slight restatement of Theorem~4 in \cite{BEGenColouring}.
We emphasise that the proof by Kreutzer \etal is actually constructive
and can be implemented to run in polynomial time. \looseness-1

\begin{lemma}[UQW in BE classes~\cite{BEGenColouring}]\label{lemma:UQW}
  For every BE class $\G$ and \emph{distance} $d \in \N$ there exists a constant $\cuqw$
  and a polynomial-time algorithm that, given $G \in \G$, a \emph{size} $t \in \N$ and $X
  \subseteq V(G)$ with $|X| \geq \cuqw \cdot 2^t$,
  computes a set $S$ of size at most $(\cuqw)^2$ and $X' \subseteq X \setminus S$
  of size at least $t$ such that $X'$ is $d$-scattered in $G - S$.
\end{lemma}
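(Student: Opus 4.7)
The plan is to combine the bounded weak colouring number available in BE classes with an iterated halving extraction, following the standard route to explicit uniform quasi-wideness bounds.

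First I would invoke the fact that any graph $G \in \G$ admits, in polynomial time, a vertex ordering $\sigma$ with weak $2d$-colouring number bounded by a class constant $W := \wcolnum_{2d}(\G)$. For each vertex $v$, let $R_v$ denote the set of vertices weakly $2d$-reachable from $v$ under $\sigma$, so $|R_v| \leq W$ and $v \in R_v$. The crucial structural input is the \emph{meeting-point property}: if $u, v$ are at distance $\leq 2d$ in any subgraph $H \subseteq G$, then the minimum-$\sigma$-rank vertex on a shortest $u$-$v$-path in $H$ lies in $R_u \cap R_v$, since weak reachability inside $H$ is a subset of weak reachability inside $G$. Consequently, to guarantee that $X'$ is $d$-scattered in $G - S$ (i.e.\ pairwise distance $> 2d$) it suffices to arrange $R_u \cap R_v \subseteq S$ for every pair of distinct $u, v \in X'$; in particular we must also enforce $u \notin R_v$ and $v \notin R_u$.

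The combinatorial heart is an iterated halving procedure. Maintain $X_0 := X$ and $S_0 := \emptyset$; at each step either some vertex $w$ lies in $R_v \setminus (S_i \cup \{v\})$ for at least $|X_i|/(2W)$ members $v \in X_i$, in which case I add $w$ to $S_{i+1}$ and restrict $X_{i+1}$ to those $v$ containing $w$, losing a factor $1/(2W)$ in set size; or no such $w$ exists and a greedy selection in $\sigma$-decreasing order of $X_i$ extracts $\Omega(|X_i|/W)$ vertices whose residual reachable sets are pairwise disjoint and contain no other selected vertex. Since each round strictly enlarges the absorbed core $S_i$ and $|R_v| \leq W$, the procedure halts after at most $W$ halvings. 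Choosing $\cuqw$ to dominate $(2W)^{W}$, the hypothesis $|X| \geq \cuqw \cdot 2^t$ yields a surviving $X'$ of size $\geq t$ with $|S| \leq W \leq \cuqw^2$.

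Scatteredness then follows from the meeting-point property: for distinct $u, v \in X'$, the halving ensures $R_u \cap R_v \setminus \{u, v\} \subseteq S$, and the $\sigma$-decreasing greedy ensures that no selected vertex belongs to another's $R$-set, so the intersection lies entirely in $S$ and no $u$-$v$-path of length $\leq 2d$ can exist in $G - S$. The main technical obstacle I anticipate is carefully balancing the halving threshold with the greedy extraction to maintain both disjointness of residual reachable sets and exclusion of cross-endpoint memberships simultaneously; this is resolved exactly as in Kreutzer \etal's original argument~\cite{BEGenColouring} by exploiting the bounded reverse fan-out of the weak $2d$-reachability digraph, which is itself a consequence of the $\wcolnum_{2d}$ bound. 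Polynomial running time is guaranteed by the constructive weak-colouring ordering of the same reference.
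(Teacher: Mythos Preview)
The paper does not prove this lemma; it is quoted without proof as a restatement of Theorem~4 in~\cite{BEGenColouring} (Kreutzer \etal), with only the remark that the original argument is constructive and runs in polynomial time. Your sketch is a faithful outline of precisely that cited argument via weak $2d$-reachability and iterated extraction, so there is nothing in the present paper to compare it against.

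One minor observation: your halving-plus-greedy analysis in fact yields a threshold of the form $|X| \geq C_d \cdot t$ linear in~$t$ (at most $W$ halvings each costing a factor $2W$, then a greedy loss of a further factor~$O(W)$), which is strictly stronger than the exponential $\cuqw \cdot 2^t$ stated here; the authors presumably adopted the weaker exponential form for simplicity, and your bound implies it for any choice of $\cuqw \geq W(2W)^{W}$.
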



\section{Approximating \RCDS}\label{sec:approx-rcds}

%

\begin{theorem}\label{thm:cdomapprox}
  Let $\mathcal G$ be a BE class and fix $r, c \in \N$. There exists a
  constant $\ccdom$ and an algorithm that, for
  every $G \in \mathcal G$, computes in polynomial time an \rcds of size at most
  $
      \ccdom \ds^c_r(G)
  $ or concludes correctly that $G$ cannot be $(r,c$)-dominated.
\end{theorem}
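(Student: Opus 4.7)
The plan is to build the \rcds iteratively by applying (a restricted variant of) Dvo\v{r}\'ak's algorithm from Theorem~\ref{thm:dvorak-ds} a total of $c$ times and taking the union of its outputs. First, I would check the necessary condition $|N^r[v]| \geq c$ for every $v \in V(G)$; if it fails then no \rcds exists and the algorithm reports this. Otherwise, initialize $D := \emptyset$ and, for $i = 1,\ldots,c$, let
\[
    U_i := \{\, v \in V(G) : |N^r[v] \cap D| < c \,\}
\]
and compute a set $D_i \subseteq V(G) \setminus D$ that $r$-dominates $U_i$ in $G$ via the warm-start algorithm from Theorem~\ref{thm:dvorak-ds}, restricted so as to choose dominators only from $V(G) \setminus D$. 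Append $D_i$ to $D$ and iterate.

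For correctness, I would prove by induction on $i$ that after iteration $i$ every vertex $v$ satisfies $|N^r[v] \cap D| \geq \min(i,c)$. The base case is trivial. For the step, if $v \notin U_i$ then its count is already at least $c$; if $v \in U_i$ then the witness $u \in D_i \cap N^r[v]$ lies outside $D_{<i} := D_1 \cup \cdots \cup D_{i-1}$ by construction, so it genuinely adds to the count. After $c$ iterations $D$ is therefore an \rcds.

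For the approximation ratio, the key observation is that an optimum \rcds $D^*$ of $G$, restricted to $V(G) \setminus D_{<i}$, still $r$-dominates $U_i$: for any $v \in U_i$ we have $|N^r[v] \cap D^*| \geq c > |N^r[v] \cap D_{<i}|$, so $N^r[v] \cap (D^* \setminus D_{<i}) \neq \emptyset$. Hence the smallest $r$-dominating set of $U_i$ drawn from $V(G) \setminus D_{<i}$ has size at most $\ds^c_r(G)$, and the (restricted) warm-start algorithm returns $D_i$ with $|D_i| \leq \cdvorak \cdot \ds^c_r(G)$; summing over the $c$ iterations yields $|D| \leq c \cdot \cdvorak \cdot \ds^c_r(G)$, giving a constant $\ccdom$ depending only on $\G$, $r$, and $c$.

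The hard part is realising the \emph{restricted} warm-start: Theorem~\ref{thm:dvorak-ds} dominates $X$ using all of $V(G)$, and does not forbid dominators drawn from a prescribed set $D_{<i}$. I would address this either by substituting each $u \in D_{<i}$ with a small BE-preserving gadget that leaves $r$-distances between vertices of $V(G)\setminus D_{<i}$ unchanged while making $u$ useless as a selected dominator, or by a swap argument that replaces any forbidden vertex picked by Dvo\v{r}\'ak with $O(1)$ permitted vertices in $V(G) \setminus D_{<i}$, inflating $\cdvorak$ by a constant factor only. Verifying that the resulting procedure runs in polynomial time on members of the BE class $\G$, and that it indeed produces a solution of size $O(\ds^c_r(G))$ whenever the precondition is met, is the core technical step and absorbs the remaining overhead hidden in $\ccdom$.
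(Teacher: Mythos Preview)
Your high-level plan---iterate $c$ times, each round adding a fresh $r$-dominating set of the under-dominated vertices that is disjoint from the running solution---matches the paper's, your correctness induction is fine, and the observation that $D^* \setminus D_{<i}$ still $r$-dominates your $U_i$ is the right lower-bound ingredient. The gap is exactly the step you flag as hard, and neither of your two fixes for the restricted warm-start works as stated. For the swap, a forbidden vertex $u \in D_i \cap D_{<i}$ can $r$-dominate an unbounded subset of $U_i$, and nothing guarantees these vertices can be re-covered by $O(1)$ permitted ones; there is no local replacement lemma of this kind in the BE toolbox. For the gadget, making $u$ unselectable while preserving all $r$-distances among $V(G)\setminus D_{<i}$ is not a local operation (a faithful replacement would have to encode the full $r$-reachability through $u$, which is unbounded in size), and \Dvorak's algorithm has no hook for ``forbidden dominators''---the warm-start only marks vertices as ``already dominated''. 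Simply deleting $D_{<i}$ and running the algorithm in $G-D_{<i}$ does not help either: then $D^*\setminus D_{<i}$ need no longer $r$-dominate $U_i$, since the relevant short paths may pass through $D_{<i}$, and your size bound evaporates.

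The paper sidesteps the restriction altogether. Before deleting the current partial solution $D_i$, it adds a small auxiliary set $W_i$ of size at most $(\cproj+1)|D_i|$: one vertex from $\shad^r_{D_i}(v)\setminus D_i$ for each realised $r$-projection profile onto $D_i$ (Lemma~\ref{lemma:projbound} bounds the number of profiles by $\cproj|D_i|$), plus one fresh $r$-neighbour for each $u\in D_i$ still short of $(i{+}1)$-fold domination. It then runs the \emph{unrestricted} warm-start in $G' := G-(D_i\cup W_i)$ on the remaining under-dominated vertices $R_i$, obtaining $D_i'$ and a scattered witness $A_i'\subseteq R_i$. The key claim is $|A_i'|\le(\cproj+1)|D_i|+\ds^{i+1}_r(G)$: if $A_i'$ exceeded $|W_i|+|X|$ for an optimal $(r,i{+}1)$-dominator $X$, some $a\in A_i'$ would have $N^r_{G'}[a]$ disjoint from $W_i\cup X$, forcing a vertex of $X$ into $\shad^r_{D_i}(a)\setminus(D_i\cup W_i)$---contradicting that a shadow representative for $[a]$'s projection class was already put into $W_i$. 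This yields the recurrence $|D_{i+1}|\le 5\cdvorak\cproj|D_i|+\cdvorak\ds^{i+1}_r(G)$ and hence $\ccdom=(5\cdvorak\cproj)^{c+1}$. The projection-profile buffer is the idea missing from your outline.
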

\begin{proof}
  We compute a sequence of dominating sets~$D_1, D_2, \ldots, D_c$ with
  the invariants that a) $D_i$ $(r,i)$-dominates $G$ and b)
  $|D_{i+1}| \leq  5\cdvorak\cproj |D_i| + \cdvorak\ds^{i+1}_r(G)$.

  To start the process, let $D_1$ be an $\cdvorak$-approximate
  $r$-dominating set for $G$, this set clearly satisfies invariant a).
  We proceed in two steps to construct $D_{i+1}$ from $D_i$.
  Build the set $U_i$ as follows: for every projection $\mu \in \Pi^r(D_i)$
  realized by an equivalence class~$[v] \in (V(G)\setminus D_i)  /
  {\sim^r_{D_i}}$ we pick one (arbitrary) vertex from $\shad^r_{D_i}(v)
  \setminus D_i$ and add it to $U_i$, if such a vertex exists.
  Then for every vertex $u \in D_i$ that is not $(i+1)$-dominated by
  $D_i \cup U_i$, we add an arbitrary vertex from $N^r[u] \setminus D_i$
  to $U_i$ (note that if no such vertex exists we conclude that $G$ cannot be
  $(r,c)$-dominated).

  By construction, the size of $U_i$ is bounded by
  $|U_i| \leq |\Pi^r(D_i)| + |D_i| \leq (\cproj + 1) |D_i|$. Further
  note that every vertex in $D_i \cup U_i$ is $(r,i+1)$-dominated
  by $D_i \cup U_i$: due to
  invariant a), the set $D_i$ $(r,i)$-dominates $D_i \cup U_i$ and $U_i$
  now additionally dominates itself (at least) once and, by construction,
  those vertices in~$D_i$ that are not yet $(r,i+1)$-dominated by~$D_i$.

  Define the set $R_i$ to contain all vertices that are \emph{not} $(r,i+1)$-dominated
  by $D_i \cup U_i$, note that in particular~$N^r[R_i] \cap U_i = \emptyset$.
  Let $G' = G-(D_i\cup U_i)$.  Apply \Dvorak's warm-start
  algorithm to find a distance-$r$ dominator~$D_i'$ for $R_i$ in $G'$ and a
  $r$-scattered set $A_i' \subseteq D_i' \cap R_i$ with $|A_i'| \leq |D_i'| \leq
  \cdvorak |A_i'|$.

  \begin{claim}
      $|A_i'| \leq  (\cproj + 1) |D_i| + \ds^{i+1}_r(G)$.
  \end{claim}
  \begin{proof}
    Let $X$ be an $(r,i+1)$-dominating set of $G$ of minimum size and assume that
    $|A_i'| > (\cproj + 1) |D_i| + \ds^{i+1}_r(G) \geq |U_i \cup X|$. Then there
    exists $a \in A_i'$ such  that $N^r_{G'}[a] \cap (U_i \cup X) = \emptyset$. Since
    $X$ $(r,i+1)$-dominates $a$ but $D_i \cup U_i$ does not (because $a \in R_i$) there
    must be at least one vertex $b \in X \cap (N^r_G[a]\setminus N^r_{G'}[a])$ that
    is not contained in $D_i \cup U_i$. This means that~$b \in \shad^r_{D_i \cup U_i}(a)$
    and since $N^r_G[a] \cap
    U_i = \emptyset$, we have that $\shad^r_{D_i \cup U_i}(a) = \shad^r_{D_i}(a)$ and
    therefore even~$b \in \shad^r_{D_i}(a)$. But then,
    since $b \not \in D_i \cup U_i$, we could have added $b$ to $U_i$ during the
    first construction phase in order to dominate the class $[a]$. The existence
    of $a$ leads us to a contradiction and we conclude that
    $|A_i'| \leq  (\cproj + 1) |D_i| + \ds^{i+1}_r(G)$.
  \end{proof}

  \noindent
  Finally, construct the set $D_{i+1} = D_i' \cup D_i \cup U_i$. Since $D_i'$ $r$-dominates
  $R_i$ which, by construction, were the only vertices not yet $(r,i+1)$-dominated
  by $D_i \cup U_i$, we conclude that $D_{i+1}$ is indeed an $(r,i+1)$-dominating
  set of $G$; thus invariant a) is preserved. To see that invariant b) holds,
  let us bound the size of $D_{i+1}$:
  \begin{align*}
    |D_{i+1}| &\leq |D_i'| + |D_i| + |U_i|
      \leq \cdvorak |A_i'| + |D_i| + (\cproj + 1) |D_i| \\
      &\leq \cdvorak(\cproj + 1) |D_i| + \cdvorak\ds^{i+1}_r(G) + (\cproj + 2) |D_i| \\
      &= \big( (\cdvorak + 1)(\cproj + 1) + 1 \big) |D_i| + \cdvorak\ds^{i+1}_r(G) \\
      &\leq \big( 4\cdvorak\cproj + 1 \big) |D_i| + \cdvorak\ds^{i+1}_r(G) \\
      &\leq 5\cdvorak\cproj |D_i| + \cdvorak\ds^{i+1}_r(G).
  \end{align*}
  We conclude that invariant b) holds, as claimed. Resolving the recurrence
  provided by this inequality, we finally obtain the bound
  \begin{align*}
    |D_c| &\leq \cdvorak \sum_{i=1}^{c} (5\cdvorak\cproj)^{c-i} \ds^i_r(G)
          \leq (5\cdvorak\cproj)^{c+1} \ds^c_r(G),
  \end{align*}
  and the claim follows with $\ccdom := (5\cdvorak\cproj)^{c+1}$.
\end{proof}


\section{Water lilies}\label{sec:lily}

\begin{figure}[b]
  \centering\includegraphics[width=.9\textwidth]{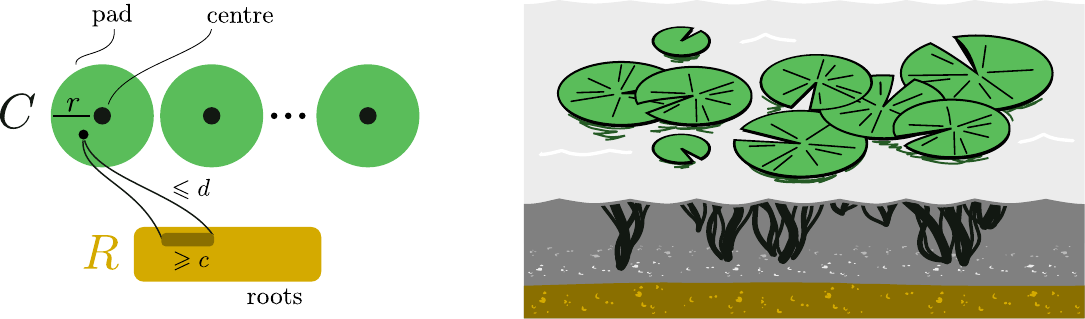}
  \caption{\label{fig:waterlily}%
    Schematic of a water lily $(R,C)$ with radius~$r$, depth~$d$ and
    adhesion~$c$. Removing the `tangled' roots~$R$ creates disjoint $r$-neighbourhoods 
    around~$C$ which we imagine like lily pads floating on a pond.
  }
\end{figure}

\begin{definition}[Water lily]
  A \emph{water lily} of \emph{radius~$r$}, \emph{depth~$d \leq r$}
  and \emph{adhesion~$c$}
  in a graph~$G$ is a tuple
  $(R, C)$ of disjoint vertex sets with the following properties:
  \begin{itemize}
    \item $C$ is $r$-scattered in $G-R$,
    \item $N^r_{G-R}[C]$ is $(d,c)$-dominated by $R$ in $G$.
  \end{itemize}
  We call $R$ the \emph{roots}, $C$ the \emph{centres}, and the sets $\{N^r_{G-R}[x]\}_{x \in C}$
  the \emph{pads} of the water lily.

  A water lily is \emph{uniform} if all centres have the same
  $d$-projection onto $R$, \eg $\pi^d_R[x]$ is the same function for
  all $x \in C$. The \emph{ratio} of a water lily is any guaranteed lower
  bound on $|C| / |R|$. \looseness-1
\end{definition}

\noindent
The following lemma lies at the heart of our unification of previous
techniques \cite{DSKernel, DSKernelND, ISKernel}. It streamlines the construction of BE-kernels considerably, as we will see in the
following section.

\begin{lemma}\label{lemma:lily}
  For every BE class $\G$ and $c,r,d \in \N$, $d \leq r$, there exist constants
  $\clilyscale[c,r,d]$, $\clilymargin[c,r,d]$, $\clilybase$ with the following property:
  for every $G \in \G$ which has an $(r,c)$-dominating set, $t \in \N$ and $A \subseteq V(G)$
  with $|A| \geq \clilyscale[c,r,d] \cdot (\clilybase)^t \cdot \ds^c_d(G)$ there
  exists a uniform water lily $(R, C)$, $C \subseteq A$, with depth~$d$, radius~$r$, adhesions~$c$
  and with $|R| \leq \clilymargin[c,r,d]$, $|C| \geq t$.
  Moreover, such a water lily can be computed in polynomial time.
\end{lemma}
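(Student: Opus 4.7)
The plan is to mimic the kernelization pipeline for \RDS of~\cite{DSKernel} while tracking the adhesion-$c$ version of $r$-domination: compute an approximate $(d,c)$-dominator, take its projection closure to bound local types, pigeonhole on profiles to obtain many locally-identical centre-candidates, and finally apply UQW to scatter them. The image of the common profile together with the UQW-removed set will form the roots~$R$.

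Concretely, I would (i)~invoke Theorem~\ref{thm:cdomapprox} to get a $(d,c)$-dominator~$D$ of size at most $\ccdom[d]\,\ds^c_d(G)$; if none exists then $\ds^c_d(G)=\infty$ and the hypothesis is vacuous. (ii)~Apply Lemma~\ref{lemma:projclos} at depth~$r$ to produce $D^\star\supseteq D$ of size $\leq \cprojclos[r]\,|D|$ with $|\proj^r_{D^\star}(u)|\leq \cprojclos[r]$ for all $u\notin D^\star$. (iii)~Use Lemma~\ref{lemma:projbound} to bound the number of $r$-projection profiles realised on $D^\star$ by $\cproj[r]\,|D^\star|$; after discarding the at most $|D^\star|$ vertices of $A\cap D^\star$, pigeonholing yields a set $A_1\subseteq A\setminus D^\star$ with $|A_1|=\Omega(|A|/\ds^c_d(G))$ sharing a common $r$-projection profile~$\nu$, and we set $R_\nu:=\{v\in D^\star : \nu(v)<\infty\}$ with $|R_\nu|\leq\cprojclos[r]$. (iv)~Apply Lemma~\ref{lemma:UQW} to~$A_1$ with distance parameter~$2r$; provided $|A_1|\geq \cuqw[2r]\cdot 2^t$ this yields $S$ with $|S|\leq(\cuqw[2r])^2$ and $C\subseteq A_1\setminus S$ of size $\geq t$ that is $r$-scattered in $G-S$. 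Setting $R:=R_\nu\cup S$ gives $|R|\leq \cprojclos[r]+(\cuqw[2r])^2=:\clilymargin[c,r,d]$, and unwinding the chain of pigeonhole factors fixes $\clilybase=2$ and a suitable $\clilyscale[c,r,d]$.

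Three water-lily axioms must then be verified. Scatterness of $C$ in $G-R\subseteq G-S$ and uniformity of the depth-$d$ projection profiles of the centres onto~$R$ (as restrictions of the shared~$\nu$) are immediate from the construction, as is the bound on~$|R|$. The delicate step — which I expect to be the main obstacle — is showing that $N^r_{G-R}[C]$ is $(d,c)$-dominated by~$R$ in~$G$. For a pad vertex~$u$ with witness $x\in C$ reached by an $R$-avoiding path of length $\ell\leq r$, the common profile~$\nu$ only a priori exhibits $c$ projectors in~$R_\nu$ at distance~$\leq r+d$ from~$u$ (via $x$), whereas the lily condition demands distance~$\leq d$. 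My plan to close this gap is to strengthen step~(ii) by applying Lemma~\ref{lemma:projclos} (or the adhesion-$c$ projection kernel of Lemma~\ref{lemma:projkernel}) at depth~$r+d$ rather than~$r$, so that uniformity on~$C$ at the deeper radius transports each of the $c$ $D^\star$-dominators of every pad vertex~$u$ into~$R_\nu$ via locally $D^\star$-avoiding prefixes; a short case distinction on whether the dominators of~$u$ already lie inside~$R_\nu$ then finishes the argument, at the mild cost of inflating the constants $\clilymargin[c,r,d]$ and $\clilyscale[c,r,d]$.
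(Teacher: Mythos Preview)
Your skeleton---approximate $(d,c)$-dominator, projection closure, pigeonhole on profiles, then UQW---is exactly the paper's, and you rightly flag the $(d,c)$-domination of the pads as the crux. Two steps, however, do not go through as stated.

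\textbf{Uniformity.} You claim the centres' $d$-profiles onto $R=R_\nu\cup S$ agree ``as restrictions of the shared~$\nu$''. That works for $R_\nu\subseteq D^\star$, but the UQW-set $S$ need not lie in $D^\star$, so $\nu$ says nothing about distances from the centres to vertices of~$S$; different centres can have different $d$-profiles onto~$S$. The paper repairs this with a \emph{second} pigeonhole round \emph{after} UQW: it asks UQW for roughly $\cproj[d]\,|R'|\cdot t$ scattered vertices and then keeps only a size-$t$ subset $C$ on which the $d$-profile onto $R'=R''\cup R'''$ (your $S\cup R_\nu$) is constant. This extra round is why the paper's $\clilybase$ equals $2^{\cproj[d](\cuqw[r]+\cprojclos[r+d])}$ rather than~$2$.

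\textbf{Adhesion $c>1$.} Even taking the closure at depth $r+d$, your ``transport via $D^\star$-avoiding prefixes'' only shows $\proj^d_{D^\star}(u)\subseteq R_\nu$ for a pad vertex~$u$: concatenating the $D^\star$-avoiding $x$--$u$ path with a $D^\star$-avoiding $u$--$v$ path for $v\in\proj^d_{D^\star}(u)$ places $v$ in $\proj^{r+d}_{D^\star}(x)=R_\nu$. But $|\proj^d_{D^\star}(u)|$ may be~$1$ even though $|N^d[u]\cap D'|\geq c$: the remaining dominators can sit in the shadow $\shad^d_{D^\star}(u)$, all hidden behind a single projection vertex. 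Your case distinction does not help, since for a shadowed dominator~$w$ the only vertex you can name in $R_\nu$ is the first $D^\star$-hit on the $u$--$w$ path, and distinct $w$'s may share that hit. The paper supplies the missing idea: an explicit \emph{augmentation} of $R'$ to~$R$. For every $d$-projection profile onto $R'$ realised by some pad vertex~$u$ (at most $\cproj[d]|R'|$ many, a constant), it adds $\max\{0,\,c-|\proj^d_{R'}(u)|\}$ vertices of $\shad^d_{R'}(u)\cap D'$ to~$R$; such vertices exist precisely because $D'$ is a $(d,c)$-dominator, and they lie in $\shad^{r+d}_{R'}(C)$, so adding them does not disturb the already-established uniformity of the centres' $d$-profiles. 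This augmentation step, not the deeper closure, is what makes adhesion $c>1$ work, and it is absent from your plan.
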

\begin{proof}
  Given $G$, we use Theorem~\ref{thm:cdomapprox} to compute a
  $(d,c)$-dominating set $D'$ of size at most $\ccdom \cdot \ds_d(G)$ in
  polynomial time or conclude that no such set exits.  Afterwards, we compute
  the $(r+d)$-projection closure~$D$ of $D'$, by Lemma~\ref{lemma:projclos} we
  have that $|D| \leq
  \cprojclos[r+d] |D'|$ and thus $|D| \leq \cprojclos[r+d] \ccdom \ds_d(G)$.
  Let $A'' := A \setminus D$, we will later choose $\clilyscale[c,r,d]$ so that
  $A''$ is still large enough for the following arguments to go through.

  Define the equivalence relation $\sim_D$ over $A''$ via $a \sim_D a'
  \iff \pi^{r+d}_D[a] = \pi^{r+d}_D[a']$. By Lemma~\ref{lemma:projbound}, the number
  of classes in $A'' / \sim_D$ is bounded by $\cproj[r+d] |D|$; by an averaging argument
  we have at least one class $[a] \in A'' / \sim_D$ of size
  \[
    \big|[a] \big| \geq \frac{|A''|}{\cproj[r+d] |D|} \geq \frac{|A| - |D|}{\cproj[r+d] |D|}.
  \]
  Let $R'''$ be $\proj^{r+d}_D(a)$, \ie
  the $(r+d)$-projection of $[a]$'s members on $D$.
  By our earlier application of Lemma~\ref{lemma:projclos}
  we have that $|R'''| = |\proj^{r+d}_D(a)| \leq \cprojclos[r+d]$.

  Again, we will choose $\clilyscale[c,r,d]$ large enough to apply Lemma~\ref{lemma:UQW}
  with distance~$r$ and size $\cproj[d] |R'''| t$ to the set $[a]$ and receive a subset
  $A' \subseteq [a]$ of size at least $\cproj[d](\cuqw[r] + \cprojclos[r+d]) \cdot t$
  and a set $R'' \subseteq V(G)\setminus A'$,
  $|R''| \leq \cuqw[r]$, such that $A'$ is $r$-scattered in $G-R''$.
  Let $R' := R'' \cup R'''$, by the above bounds on $R''$ and $R'''$ it follows
  that $|R'| \leq \cuqw[r] + \cprojclos[r+d]$.
  By Lemma~\ref{lemma:projbound} and the fact that
  \[
    |A'| \geq \cproj[d](\cuqw[r] + \cprojclos[r+d]) \cdot t
         \geq |\Pi^d(R')| \cdot t
  \]
  there exists a set $C \subseteq A'$ of
  size at least $t$ such that all members of $C$ have the same $d$-projection
  onto $R'$.

  We construct the set $R$ from $R'$ as follows: for every projection
  profile $\mu \in \Pi^d(R')$ realized by a class $[u] \in N^r_{G-R'}[C] /
  {\sim^d_{R'}}$ we add $\max\{0, c - |\proj^d_{R'}(u)|\}$ vertices from the
  shadow $\shad^d_{R'}(u) \cap D'$. Since $D'$ $(d,c)$-dominates all of $G$,
  such vertices must exist. By construction, $|R| \leq c |R'|$ and $R$
  $(c,d)$-dominates all of $N^r_{G-R'}[C]$ and thus in particular
  $N^r_{G-R}[A']$. Note further that all vertices we added lie
  inside~$\shad^{r+d}_{R'}[C]$, therefore the projection profiles of~$C$ are
  not changed by this operation (all paths of length at most~$r+d$ from $C$ to
  vertices in $R/R'$ pass through $R'$). We conclude that the uniformity
  condition holds on $(R,C)$. This construction also provides us with the bound $|R|
  \leq c(\cuqw[r] + \cprojclos[r+d]) =: \clilymargin[c,r,d]$.

  Finally, let us determine a value for $\clilyscale[c, r,d]$ that suffices for the above
  construction to go through. In order to apply Lemma~\ref{lemma:UQW}, we
  need that $|[a]| \geq \cuqw[r] \cdot 2^{\cproj[d](\cuqw[r] + \cprojclos[r+d]) \cdot t}$, accordingly we need
  that
  \begin{align*}
    &\phantom{\impliedby}  \frac{|A| - |D|}{\cproj[r+d] |D|} \geq
    \cuqw[r] \cdot 2^{\cproj[d](\cuqw[r] + \cprojclos[r+d]) \cdot t } \\
    &\impliedby
      \frac{|A|}{2 \cproj[r+d] |D|} \geq \cuqw[r] \cdot 2^{\cproj[d](\cuqw[r] + \cprojclos[r+d]) \cdot t } \\
    &\impliedby
      \frac{|A|}{2 \cproj[r+d] \cprojclos[r+d] \ccdom \ds^c_d(G)} \geq \cuqw[r] \cdot 2^{\cproj[d](\cuqw[2r] + \cprojclos[r+d]) \cdot t }
  \end{align*}
  We conclude that choosing the constants $\clilyscale[c,r,d] = 2 \cproj[r+d] \cprojclos[r+d] \ccdom \cuqw[2r]$
  and $\clilybase = 2^{\cproj[d](\cuqw[r] + \cprojclos[r+d])}$ suffices to prove
  the claim.
\end{proof}


\noindent
We can impose even more structure on a water lily in the following sense: let
us define a \emph{pad signature} as a function~$\sigma\colon C \to \Sigma^*$
(for some alphabet~$\Sigma$) that can be computed by a polynomial-time
algorithm receiving the following inputs:
\begin{itemize}
  \item The depth~$d$, radius~$r$ and adhesion~$c$ of the water lily;
  \item the centre~$a$, its pad~$N^r_{G-R}[a]$, the roots~$R$;
  \item the subgraph~$G[R \cup N^r_{G-R}[a]]$ alongside potential vertex/edge
        labels from the host graph~$G$.
\end{itemize}
We say that $\sigma$ is \emph{bounded}
if the size of its image can be bounded by a constant.

Every pad signature~$\sigma$ gives rise to an equivalence relation
$\sim_\sigma \subseteq C \times C$ for a water lily~$(R,C)$ via
\[
  a \sim_\sigma a' \iff \sigma(a) = \sigma(a').
\]
Note that if~$\sigma$ is bounded, then~$\sim_\sigma$ has finite index.
A water lily is \emph{$\sigma$-uniform} if all its centres belong to the
same equivalence class under~$\sim_\sigma$; or alternatively if all centres
have the same image under~$\sigma$.
For a bounded signature~$\sigma$, we find a $\sim_\sigma$-uniform water lily of
ratio~$\tau$ by first finding a water lily~$(R', C')$ with ratio~$p \cdot \tau$,
where $p$ is an upper bound on the image of~$\sigma$, and then return $R'$
together with the largest class in $C' / \sim_\sigma$. Accordingly:

\begin{corollary}\label{cor:lily}
  For every BE class $\G$, $c,r,\tau \in \N$ and pad signature~$\sigma$ with
  finite index there exists a constant $\clily = \clily[c,2r,r,\tau,\sigma]$
  with the following property: for every $G \in \G$ which has an $(r,c)$-dominating
  set and $A \subseteq V(G)$
  with $|A| \geq \clily \cdot \ds^c_d(G)$ there exists a $\sigma$-uniform water
  lily $(R,C)$, $C \subseteq A$, $|R| \leq \clily$, of  depth $r$, radius~$2r$,
  adhesion~$c$ and ratio $\tau$.

  Moreover, such a water lily can be computed in polynomial time.
\end{corollary}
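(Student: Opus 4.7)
My plan is to reduce the corollary directly to Lemma~\ref{lemma:lily} via a pigeonhole argument, essentially following the recipe already sketched in the paragraph preceding the statement. Let $p$ denote a constant upper bound on the image of~$\sigma$, which exists because $\sigma$ has finite index. The strategy is to apply Lemma~\ref{lemma:lily} to the set~$A$ with adhesion~$c$, radius~$2r$, depth $d=r$, and size parameter $t := p\tau$. If I choose $\clily \geq \clilyscale[c,2r,r] \cdot (\clilybase[c,2r,r])^{p\tau}$, then the Lemma's hypothesis $|A| \geq \clilyscale[c,2r,r] \cdot (\clilybase[c,2r,r])^{p\tau} \cdot \ds^c_r(G)$ is satisfied, so in polynomial time the Lemma produces a (projection-)uniform water lily $(R', C')$ with $C' \subseteq A$, depth~$r$, radius~$2r$, adhesion~$c$, $|R'| \leq \clilymargin[c,2r,r]$, and $|C'| \geq p\tau$.

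Next, I would evaluate $\sigma(a)$ for every centre $a \in C'$; by assumption each call runs in polynomial time and receives exactly the inputs specified in the definition of a pad signature (the parameters $r,2r,c$, the centre and its pad $N^{2r}_{G-R'}[a]$, the root set $R'$, and the induced subgraph together with labels inherited from $G$). Partitioning $C'$ by $\sim_\sigma$ yields at most $p$ equivalence classes, so by pigeonhole some class $C \subseteq C'$ satisfies $|C| \geq |C'|/p \geq \tau$. Setting $R := R'$ and keeping only this largest class, I would return $(R,C)$ as the desired $\sigma$-uniform water lily.

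Finally, I would verify that $(R,C)$ inherits the water-lily properties from $(R',C')$: scatteredness and domination are monotone under shrinking the centre set, so $C$ remains $r$-scattered in $G-R$ and $N^{2r}_{G-R}[C] \subseteq N^{2r}_{G-R}[C']$ remains $(r,c)$-dominated by $R$; moreover $\sigma$ is constant on $C$ by construction, giving $\sigma$-uniformity. Choosing
\[
  \clily := \max\bigl\{\, \clilyscale[c,2r,r] \cdot (\clilybase[c,2r,r])^{p\tau},\ \clilymargin[c,2r,r] \,\bigr\}
\]
then yields the single constant asserted in the statement. I do not anticipate any real obstacle here: the whole argument is a straightforward pigeonhole on top of Lemma~\ref{lemma:lily}. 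The only item that warrants care is matching the parameters correctly when invoking the Lemma (its radius parameter must be set to $2r$ and its depth parameter to $r$, consistent with $d \leq r$ in the Lemma's statement) and absorbing the fixed factor $p^{\tau}$ in the exponent into the final constant.
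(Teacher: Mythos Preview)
Your approach is exactly the one the paper sketches in the paragraph preceding the corollary: invoke Lemma~\ref{lemma:lily} with an inflated target size, then pigeonhole over the at most~$p$ classes of~$\sim_\sigma$ and keep the largest. One small parameter slip to fix: the corollary asks for \emph{ratio}~$\tau$, i.e.\ $|C|/|R| \geq \tau$, but your choice $t := p\tau$ only yields $|C| \geq \tau$ after the pigeonhole step; since $|R|$ can be as large as $\clilymargin[c,2r,r]$, you should instead set $t := p\,\tau\,\clilymargin[c,2r,r]$ (and absorb this into~$\clily$) so that $|C| \geq \tau\,\clilymargin[c,2r,r] \geq \tau\,|R|$ as required. (Also, in your verification paragraph, the scatteredness inherited from $(R',C')$ is $2r$-scatteredness, matching the radius, not $r$-scatteredness.)
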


\noindent
Let us define a particular bounded pad signature that will be
useful in the remainder: define $\nu$ as
\[
  \nu(a) := (\{ \pi^d_R[x] \mid x \in N^i_{G-R}(a) \} \mid 0 \leq i \leq r \big),
\]
where the right-hand side is to be understood as encoded in a string
by some suitable scheme.
Two centres are equivalent under~$\simuni$ if they have the same
projection-types at the same distance (though potentially at different
multiplicities) inside their respective pads.
Since $|R|$ has
constant size according to Lemma~\ref{lemma:lily} and there are at most
$\cproj[d] |R|$ possible projection profiles according to
Lemma~\ref{lemma:projbound}, the image of~$\nu$ has size at most
 $r^{\cproj[d] |R|} \leq r^{\cproj[d] \clily}$ and therefore~$\nu$ is
a bounded pad signature.

We will sometimes combine $\nu$ with a finite number of vertex
labels that arise during the construction of bikernels. If vertices are
labelled by~$f\colon V(G)\to \Sigma$ for some finite alphabet~$\Sigma$, 
then we understand $\nu$ to be the above equivalence relation further 
refined by the equivalence relation $u \sim_f v \iff f(u) = f(v)$.

\section{Bikernels into annotated problems}\label{sec:bikernels}

We show in the following that a range of problems over hereditary BE-classes
admit linear bikernels in the same class (see the full version for \RRD and \RTD).
The target problem in all three cases is a suitable annotated version of the
original problem, which we define just ahead of each proof.

%

\begin{problem}[k]{Annotated $(r, c)$-Domination}
  \Input & A graph~$G$, a set~$L \subseteq V(G)$ and an integer~$k$. \\
  \Prob  & Is there a set~$D \subseteq V(G)$ of size at most~$k$ such that
           $|N^r[v] \cap D| \geq c$ for all $v \in L$?
\end{problem}

\begin{theorem}\label{thm:domination-bikernel}
  \RCDS over a hereditary BE-class $\mathcal G$
  admits a linear bikernel into \ARCDS over the same class~$\mathcal G$.
  Moreover, the resulting graph is an $(r,c)$-projection kernel of the original graph.
\end{theorem}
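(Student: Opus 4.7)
The plan is to produce an annotated instance $(\hat G, L, k)$ in which $L \subseteq V(G)$ is a constraint core of size $O(k)$ and $\hat G$ is the $(r,c)$-projection kernel of $(G,L)$ supplied by Lemma~\ref{lemma:projkernel}. I would first run Theorem~\ref{thm:cdomapprox}: if it reports infeasibility or returns an \rcds of size greater than $\ccdom \cdot k$, then $\ds^c_r(G) > k$ and we emit a trivial NO-instance. Otherwise $\ds^c_r(G) \leq k$, which is the quantitative handle needed to feed Corollary~\ref{cor:lily}.

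To build $L$ I would shrink it iteratively. Initialise $L := V(G)$, which is trivially a constraint core. As long as $|L|$ exceeds some fixed constant multiple of $k$, invoke Corollary~\ref{cor:lily} with $A := L$, radius $2r$, depth $r$, adhesion $c$, ratio $c+1$, and the pad signature $\nu$ from Section~\ref{sec:lily}. This returns a $\nu$-uniform water lily $(R, C)$ with $C \subseteq L$, $|R| \leq \clily$ and $|C| \geq c+1$, meaning that all centres share an identical $r$-projection onto $R$ \emph{and} an identical layered profile-composition of their $r$-pads. Picking an arbitrary centre $a \in C$, we argue that $L \setminus \{a\}$ remains a constraint core: any solution that $(r,c)$-dominates the remaining $\geq c$ siblings of $a$ can be transported along the structural correspondence between pads to yield $c$ dominators inside $N^r_G[a]$, using the layered information in $\nu$ to match dominators pad-by-pad and layer-by-layer. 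Iterating, we obtain $|L| = O(k)$ after polynomially many polynomial-time rounds, and Lemma~\ref{lemma:projkernel} then gives $\hat G \supseteq L$ with $|V(\hat G)| \leq \ctotalclos \cdot |L| = O(k)$.

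Correctness of the output $(\hat G, L, k)$ is then a two-direction bookkeeping exercise. Soundness: a solution $D' \subseteq V(\hat G)$ that $(r,c)$-dominates $L$ inside $\hat G$ also does so inside $G$, because $\hat G$ is an induced subgraph of $G$ and property~1 of Definition~\ref{def:projkernel} preserves $\leq r$-distances within $L$; since $L$ is a constraint core, $D'$ then $(r,c)$-dominates all of $V(G)$. Completeness: given a solution $D \subseteq V(G)$ with $|D| \leq k$, I would build $D' \subseteq V(\hat G)$ by replacing each $u \in D \setminus V(\hat G)$ with a vertex of $V(\hat G)$ sharing $u$'s $r$-projection profile onto $L$; property~2 of the projection kernel supplies the necessary distinct representatives when several dominators share a profile, and property~1 ensures that $D'$ dominates $L$ in $\hat G$.

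The main obstacle I anticipate is justifying the deletion of a centre $a$ from $L$ in the shrinking loop. The $c$ dominators of a sibling centre $a_i$ could all sit inside the pad of $a_i$, which is disjoint from the pad of $a$, so the matching $R$-projections alone do not manifestly yield dominators in $N^r_G[a]$. This is where the full pad signature $\nu$---which records the layer-by-layer distribution of projection profiles inside each pad---does the real work, combined with a swap argument that reroutes dominators into the structurally identical pad of $a$ without inflating the budget $k$. Making this correspondence precise, and carefully accounting for dominators shared between pads through $R$, is the technical heart of the proof; once that reduction lemma is established, the rest of the bikernel construction is routine.
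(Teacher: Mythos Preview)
Your overall architecture---approximate, iteratively shrink a constraint core via water lilies, pass to an $(r,c)$-projection kernel, verify equivalence---matches the paper's, and your treatment of the final equivalence between $(G,k)$ and $(\hat G,L,k)$ is essentially the same as theirs.

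The substantive divergence is in the shrinking step, and the obstacle you flag is real: with dominators sitting inside disjoint sibling pads, $\nu$-uniformity does not by itself let you ``transport'' them into $N^r[a]$ without damaging domination of the siblings. The paper does \emph{not} resolve this by a pad-to-pad correspondence. Instead it uses a plain uniform water lily (no $\nu$) with adhesion~$c$ and ratio only~$2$, and argues by minimality-plus-exchange. The crucial observation is that, by the adhesion condition, the root set $R$ already $(r,c)$-dominates every vertex of $N^{2r}_{G-R}[C]$. Now take a minimum $D$ dominating $L\setminus\{a\}$ and strip out the part of $D$ lying in the pads. If what remains (which lives in $\vol^r_R(C)$, common to all centres by uniformity) already $(r,c)$-dominates some centre, it dominates all of them, including $a$. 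If $c$ pad-vertices of $D$ each dominate more than one centre, they must do so through $R$ and hence dominate every centre. In the remaining case every pad contains a private $D$-vertex, so $|D \cap N^r_{G-R}[C]| \geq |C| \geq 2|R|$; replacing all of these by $R$ yields a set that still $(r,c)$-dominates $L\setminus\{a\}$ (since $R$ handles the entire $2r$-neighbourhood of $C$) but is strictly smaller---contradicting minimality.

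So the missing idea in your plan is precisely this swap with the roots: you never exploit that the adhesion of the water lily makes $R$ itself a local $(r,c)$-dominator, which is what lets the paper avoid the delicate pad-to-pad matching you were bracing for. Once you see this, neither $\nu$-uniformity nor ratio $c+1$ is needed; plain uniformity with ratio~$2$ suffices.
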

\begin{proof}
  Let~$(G,k)$ be an input where $G$ is taken from a BE class.
  As a first step, we deal with the case~$\ds^\mu_r(G)$ large by computing an $(r,\mu)$-dominating set
  using the algorithm from Theorem~\ref{thm:cdomapprox}. If it returns
  a solution larger than $\ccdom k$, we conclude
  that~$\ds^\mu_r(G) > k$ in which case we return a trivial no-instance.

  Otherwise we now show that \RCDS admits a linear constraint core and then show
  how to construct a BE-kernel from that core.

  \begin{claim}
    \RCDS has a linear constraint core in BE classes.
  \end{claim}
  \begin{proof}
    Let~$L \subseteq V(G)$ be a constraint core of $G$ with $|L| \geq
    \clily[c,2r,r,2] \ds_r^c(G)$. By Corollary~\ref{cor:lily}, we can find in
    polynomial time a uniform water lily $(R,C)$, $C \subseteq L$, $|R| \leq
    \clily$ of depth~$r$, radius~$2r$, adhesion~$c$ and ratio~$2$. Let $a \in C$
    be an arbitrary centre, we claim that $L\setminus \{a\}$ is still a
    constraint core, that is, every set that $(r,c)$-dominates $L \setminus
    \{a\}$ will also $(r,c)$-dominate~$a$.

    To that end, let $D$ be a minimum \rcds and define $D' := D \setminus
    N^r_{G-R}[C]$. If~$D'$ $(r,c)$-dominates any part of $C$, it dominates all of~$C$
    (and therefore~$a$) as $(R,C)$ is uniform. Thus assume that $D'$ does not
    $(r,c)$-dominate $C$. Consider the case where a set~$S \subseteq D \cap N^r_{G-R}[C]$
    exists such that every vertex in~$S$ dominates more than one vertex in~$C$.
    If~$|S| \geq c$ then~$S$ alone already $(r,c)$-dominates all of~$C$ and thus
    in particular~$a$. In all remaining cases, every set~$N^{r}_{G-R}[a']$, $a' \in C$ must
    contain at least one vertex from~$D$ and we conclude that $|D\setminus D'|
    \geq |C| \geq 2|R|$. Let $\tilde D := D' \cup R$, we claim that $\tilde D$
    is an \rcds of~$G$. Simply note that the only vertices that are not
    $(r,c)$-dominated by~$D'$ lie inside~$N^{2r}_{G-R}[C]$---but this is precisely
    the set that is $(r,c)$-dominated by~$R$. We arrive at a contradiction
    since
    \[
      |D| = |D\setminus D'| + |D'| \geq 2|R| + |D'| > |R| + |D'| \geq |\tilde D|
    \]
    and we assumed~$D$ to be minimum.
    Thus~$L \setminus \{a\}$ is a constraint core for \RCDS in $G$. We iterate
    this procedure until~$|L| < \clily[c,2r,r,2] \ds^c_r(G)$ and end up
    with a linear constraint core. 
  \end{proof}

  \noindent
  In the following, let~$L \subseteq V(G)$ be a constraint core
  for~$(G,k)$ with $|L| \leq \clily \ds^c_r(G)$ and let~$O = V(G) \setminus L$.
  If $|L| > \clily k$, we can conclude that $k > \ds^c_r(G)$ and output
  a trivial no-instance, thus assume from now on that $|L| \leq \clily k$.

  We apply Lemma~\ref{lemma:projkernel} with~$X = L$ and~$r$, $c$ as here
  to obtain a projection kernel~$\hat G$ with~$|\hat G| \leq \ctotalclos |L| = O(k)$
  which a) preserves~$\leq r$-neighbourhoods in~$L$ and b) realizes every
  $r$-projection onto~$L$ that is realized~$p$ times in~$G$ at least
  $\min\{c,p\}$ times. We claim that $(G,k)$ is equivalent to the
  annotated instance $(\hat G,L,k)$.

  Assume that~$D$ is an \rcds of~$G$, clearly it is also a solution to the
  annotated instance $(G, L, k)$. Partition $D$ into $D_L = D \cap L$ and
  $D_O = D \setminus L$. Consider $x \in D_O$ and note that $|[x] \cap D_O| < c$
  for the $r$-neighbourhood class~$[x] \in O / \sim^r_L$ since otherwise
  we could remove a vertex from~$[x] \cap D_O$ from~$D$ and still $(r,c)$-dominate
  all of~$L$. With this observation, construct the set~$\hat D_O$ as follows:
  for every vertex~$x \in D_O$ we include $|[x] \cap D_O|$ vertices from~$O \cap V(\hat G)$
  in~$\hat D_O$, by property b) of the projection kernel~$\hat G$ we know that
  at least~$c$ such vertices are available. Then the set~$\hat D := D_L \cup \hat D_O$
  $(r,c)$-dominates all of~$L$ in $\hat G$, by property a) of~$\hat G$, and we
  are done.
  In the other direction, let~$\hat D$ be an $(r,c)$-dominator of~$L$ in $\hat G$.
  By property a) and b) of~$\hat G$ the set~$\hat D$ therefore also $(r,c)$-dominates
  $L$ in~$G$, and since~$L$ is a constraint core of~$G$ it then $(r,c)$-dominates all
  of~$G$.
  We conclude that $(\hat G,L,k)$ is equivalent to~$(G,k)$ and~$|\hat G| = O(k)$.
\end{proof}

\begin{problem}[k]{Annotated Total $r$-Domination}
  \Input & A graph~$G$, a set~$L \subseteq V(G)$ and an integer~$k$. \\
  \Prob  & Is there sets~$D$, $|D| \leq k$ such that for every vertex~$v \in L$
           $|(N^r[v]\setminus\{v\}) \cap D| \geq 1$?
\end{problem}

\begin{theorem}\label{thm:total-domination-bikernel}
  \RTD over a hereditary BE-class $\mathcal G$
  admits a linear bikernel into \ARTD
  over the same class~$\mathcal G$.
  Moreover, the resulting graph is a $(r,1)$-projection kernel of the original graph.
\end{theorem}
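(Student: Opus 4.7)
The plan mirrors Theorem~\ref{thm:domination-bikernel}, adjusted for the self-exclusion constraint of total domination. As preprocessing, verify in polynomial time that every vertex of~$G$ has a non-empty $r$-neighbourhood apart from itself; otherwise no total $r$-dominator exists and we reject. Since any total $r$-dominator is also an $(r,1)$-dominator, invoke Theorem~\ref{thm:cdomapprox} with $c=1$ to approximate $\ds_r(G)$: if the returned set has size more than $\ccdom k$, conclude $\ds_r(G) > k$ and reject.

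Next, establish that \RTD admits a linear constraint core. Given a constraint core $L$ with $|L| \geq \clily \cdot \ds_r(G)$, apply Corollary~\ref{cor:lily} with depth~$r$, radius~$2r$, adhesion~$1$ and a ratio strictly larger than~$2$ (say~$4$) to obtain a uniform water lily $(R,C)$, $C \subseteq L$, with $|R|$ bounded by a constant and $|C| \geq 4|R|$. Pick any centre $a \in C$ and argue that $L \setminus \{a\}$ is still a constraint core. Let $D$ be a minimum total $r$-dominator of $L \setminus \{a\}$ and define $D' := D \setminus N^r_{G-R}[C]$. Uniformity of the lily forces $D'$ to either totally $r$-dominate every centre in~$C$ (whence $a$ is dominated and we are done) or none. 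Assuming the latter, for every $a' \in C \setminus \{a\}$ the set $D$ contains a pad-vertex dominator of $a'$; should such a dominator lie in a pad other than $a'$'s own, uniformity (via the shared depth-$r$ projection onto~$R$) would imply it also dominates~$a$, a contradiction. Hence every pad holds a distinct $D$-vertex and $|D \setminus D'| \geq |C| - 1 > 2|R|$.

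To contradict the minimality of~$D$, construct $\tilde D := D' \cup R \cup T$, where $T$ picks, for each root $r' \in R$, an arbitrary vertex of $N^r[r'] \setminus \{r'\}$ (existing by preprocessing); so $|\tilde D| \leq |D'| + 2|R| < |D|$. The set $\tilde D$ totally $r$-dominates~$L$: pad vertices are covered by~$R$ via adhesion~$1$ with a distinct root-witness ($R$ avoids every pad); roots lying in~$L$ are covered by~$T$; any other vertex $v \in L$ is covered either directly by~$D'$, or, when $D$'s witness for $v$ lies in a pad, by the root that every length-$\leq r$ path from~$v$ to that witness must traverse. Iterating the shrinking step brings $|L|$ down to $O(\ds_r(G)) = O(k)$.

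Finally, apply Lemma~\ref{lemma:projkernel} with $X = L$ and $c = 1$ to produce an $(r,1)$-projection kernel~$\hat G$ of $(G,L)$ of size $O(k)$. Equivalence of $(G,k)$ with the \ARTD instance $(\hat G, L, k)$ follows as in Theorem~\ref{thm:domination-bikernel}: any solution in~$G$ may be assumed to pick at most one representative per equivalence class of $V(G) \setminus L$ under~$\sim^r_L$, so the representative kept by Property~2 of the projection kernel suffices, while Property~1 transports solutions back from $\hat G$ to~$G$. The main obstacle throughout is the water-lily step's treatment of roots that might themselves lie in~$L$, which forces the auxiliary set~$T$ and in turn the ratio strictly larger than~$2$; the remainder of the argument transfers from Theorem~\ref{thm:domination-bikernel} with only cosmetic changes.
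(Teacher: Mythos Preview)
Your proposal is correct and follows essentially the same route as the paper: approximate $\ds_r(G)$ to bound the parameter, shrink a constraint core via a uniform water lily of depth~$r$, radius~$2r$, adhesion~$1$, and then pass to an $(r,1)$-projection kernel. The paper likewise augments the replacement set by one extra neighbour per root (its set~$R'$, your set~$T$) to satisfy the self-exclusion constraint of total domination, and uses the identical ``only vertices not covered by~$D'$ lie in $N^{2r}_{G-R}[C]$, which~$R$ covers'' argument.

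The one notable difference is your framing of~$D$ as a minimum total dominator of $L\setminus\{a\}$ rather than of~$G$; this is arguably cleaner for the constraint-core claim, at the price of losing one pad in the count ($|C|-1$ instead of $|C|$), which is exactly why you need ratio~$4$ where the paper gets away with ratio~$3$.
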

\begin{proof}
  Every $r$-total dominating set is in particular an $r$-dominating set
  and, on the other hand, we can turn an $r$-dominating set~$D$ into an $r$-dominating
  set of size at most~$2|D|$ by including at most one neighbour for each vertex in~$D$.

  Hence, given an input~$(G,k)$ to \RTD with $G$ taken
  from a BE class, we verify as a first step that~$\ds_r(G)$ is not too large
  by computing an $r$-dominating set using the algorithm described in Theorem~\ref{thm:cdomapprox}.
  If the  algorithm returns a solution larger than $2\cdvorak k$, we conclude
  that~$\ds_r(G) > 2k$ and therefore $G$ does not have a $r$-total dominating set
  of size~$k$. In this case we output a trivial no-instance, thus assume for
  the remainder that $\ds_r(G) \leq 2\cdvorak k$. Define
  $\clily := \clily[1,2r,r,4]$ in the following.

  \begin{claim}
    \RTD has a linear constraint core in BE classes.
  \end{claim}
  \begin{proof}
    Let~$L \subseteq V(G)$ be constraint core of $G$ with $|L| \geq
    \clily \ds_r(G)$. By Corollary~\ref{cor:lily}, we can find in
    polynomial time a uniform water lily $(R,C)$, $C \subseteq L$, $|R| \leq
    \clily$ of depth~$r$, radius~$2r$, adhesion~$c$ and ratio~$3$. Let $a \in C$
    be an arbitrary centre, we claim that $L\setminus \{a\}$ is still a
    constraint core, that is, every set that totally $r$-dominates $L \setminus
    \{a\}$ will also totally $r$-dominate~$a$.

    To that end, let $D$ be a minimal total $r$-dominating set and define $D' := D \setminus
    N^r_{G-R}[C]$. If~$D'$ totally $r$-dominates any part of $C$, it dominates all of~$C$
    (and therefore~$a$) as $(R,C)$ is uniform. Similarly, if there exists~$u \in D \cap
    N^r_{G-R}[C]$ such that~$u$ dominates at least two centres, then by uniformity it
    already dominates all of~$C$ and in particular~$a$. In all other cases,
    every set~$N^{r}_{G-R}[a']$, $a' \in C$ must
    contain at least one vertex from~$D$ and we conclude that $|D\setminus D'|
    \geq |C| \geq 3|R|$. Let $\tilde D$ consist of~$D'$, $R$ and up to~$|R|$ arbitrary
    neighbours~$R'$ of~$R$. We claim that $\tilde D$
    is a total $r$-dominating set of~$G$.

    Note that the only vertices that are not
    $r$-dominated by~$D'$ lie inside~$N^{2r}_{G-R}[C]$---but this is precisely
    the set that is $r$-dominated by~$R$. The vertices in~$R$ are dominated by
    $R'$ and vice-versa, we conclude that~$\tilde D$ is indeed totally $r$-dominating.
    We arrive at a contradiction since
    \[
      |D| = |D\setminus D'| + |D'| \geq 3|R| + |D'| > 2|R| + |D'| \geq |\tilde D|
    \]
    and we assumed~$D$ to be minimal.

    Thus~$L \setminus \{a\}$ is a constraint core for
    \textsc{Total $r$-Domination} in $G$. We can iterate this procedure
    until~$|L| < \clily \ds^c_r(G)$ and therefore end up
    with a linear constraint core.
  \end{proof}

  \noindent
  In the following, let~$L \subseteq V(G)$ be a constraint core
  for~$(G,k)$ with $|L| \leq \clily \ds_r(G) \leq 2 \clily \cdvorak k$.

  We apply Lemma~\ref{lemma:projkernel} with~$X = L$, $c=1$, and $r$ as here
  to obtain a projection kernel~$\hat G$ with~$|\hat G| \leq \ctotalclos |L| = O(k)$.
  The proof that $(G,k)$ is equivalent to the annotated instance $(\hat G,L,k)$
  is almost identical to the proof in Theorem~\ref{thm:domination-bikernel} and
  we omit it here.
\end{proof}

\begin{problem}[k]{Annotated $r$-Roman Domination}
  \Input & A graph~$G$, a set~$L \subseteq V(G)$ and an integer~$k$. \\
  \Prob  & Is there sets~$D_1, D_2 \subseteq V(G)$ with $|D_1| + 2|D_2| \leq k$
           such that $D_2$ $r$-dominates all of~$L \setminus D_1$?
\end{problem}

\begin{theorem}\label{thm:roman-domination-bikernel}
  \RRD over a hereditary BE-class $\mathcal G$
  admits a linear bikernel into \ARRD
  over the same class~$\mathcal G$.
  Moreover, the resulting graph is a $(r,1)$-projection kernel of the original graph.
\end{theorem}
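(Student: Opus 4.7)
The plan is to mirror Theorems~\ref{thm:domination-bikernel} and~\ref{thm:total-domination-bikernel}: preprocess via a cheap bound on~$\ds_r(G)$, extract a linear constraint core by iteratively removing centres of water lilies, and finally invoke Lemma~\ref{lemma:projkernel}. For preprocessing, observe that any valid Roman pair $(D_1,D_2)$ with $|D_1|+2|D_2| \leq k$ yields an $r$-dominating set $D_1 \cup D_2$ of size $\leq k$, so I invoke Theorem~\ref{thm:cdomapprox} with $c=1$, return a trivial no-instance if the output exceeds $\cdvorak k$, and otherwise retain $\ds_r(G) \leq \cdvorak k$. I also fix $\clily := \clily[1,2r,r,4]$ for the water lily constant.

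For the constraint-core step, whenever $|L| \geq \clily \cdot \ds_r(G)$ I use Corollary~\ref{cor:lily} to obtain a uniform water lily $(R,C)$ with $C \subseteq L$, $|R| \leq \clily$, depth~$r$, radius~$2r$, adhesion~$1$ and ratio~$4$. Picking an arbitrary $a \in C$, I let $(D_1,D_2)$ be a minimum valid pair for $L \setminus \{a\}$ and assume, for contradiction, that the constraint at~$a$ is violated, \ie $a \notin D_1$ and $D_2 \cap N^r[a] = \emptyset$. Since $C$ is $2r$-scattered in $G-R$ and $(R,C)$ is uniform, no $v \in D_2$ can $r$-dominate two distinct centres: such a $v$ would necessarily reach them through $R$, and uniformity would then force it to dominate every centre, including~$a$. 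Hence each of the $|C|-1$ centres in $C \setminus \{a\}$ is accounted for by a distinct element of $D_1^{pad} \cup D_2^{pad}$, where $D_i^{pad} := D_i \cap N^r_{G-R}[C]$, yielding $|D_1^{pad}| + 2|D_2^{pad}| \geq |C| - 1$. Replacing the pad-contents by the roots gives the candidate $(\tilde D_1, \tilde D_2) := (D_1 \setminus D_1^{pad},\,(D_2 \setminus D_2^{pad}) \cup R)$, which I then show remains valid for~$L$: every vertex previously covered by a displaced pad-element of $D_2$ either lies in $N^{2r}_{G-R}[C]$ and is hence $r$-dominated by $R$ via the lily property, or is reached through an $R$-crossing subpath whose first $R$-vertex now sits in $\tilde D_2$. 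The resulting cost drop is at least $(|C|-1) - 2|R| \geq 2|R|-1 > 0$, since $|C| \geq 4|R|$ and $|R| \geq 1$, contradicting minimality. Iterating produces a constraint core of size $O(k)$.

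Finally, I obtain the bikernel by applying Lemma~\ref{lemma:projkernel} with $X = L$ and $c = 1$ to get an $(r,1)$-projection kernel $\hat G$ of size $O(k)$. The equivalence between $(G,k)$ and $(\hat G, L, k)$ is essentially identical to the closing paragraph of Theorem~\ref{thm:domination-bikernel}: Property~(a) preserves short $L$-to-$L$ distances so that a solution on the bikernel is a solution on $L$ in $G$ and, by $L$ being a constraint core, a solution of the original instance; Property~(b) lets me translate an original solution into a bikernel solution by swapping each external vertex of $D_2$ for a representative of the same $r$-projection profile onto~$L$. The main obstacle is the tuning of the lily's ratio: since the violated centre $a$ contributes nothing, only $|C|-1$ centres enter the accounting, so the Roman weight $|D_1|+2|D_2|$ forces the strict inequality $|C|-1 > 2|R|$, making ratio~$3$ insufficient in the corner case $|R|=1$ and motivating the choice of ratio~$4$; verifying that $(\tilde D_1,\tilde D_2)$ re-covers every vertex previously dominated by a displaced pad-element is the one genuinely new case distinction compared to the preceding two proofs.
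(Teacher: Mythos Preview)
Your proposal is correct and follows essentially the same strategy as the paper: bound $\ds_r(G)$ via the approximation algorithm, iteratively strip centres of uniform water lilies to obtain a linear constraint core, then apply Lemma~\ref{lemma:projkernel}. The only substantive difference is the lily ratio---the paper uses~$3$ while you use~$4$; this stems from your (cleaner) framing of the exchange argument with $(D_1,D_2)$ taken as a minimum for $L\setminus\{a\}$ rather than for all of~$G$, which loses the centre~$a$ in the accounting and indeed makes ratio~$3$ fail in the corner case $|R|=1$.
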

\begin{proof}
  Let~$(G,k)$ be an input to \RRD where $G$ is taken
  from a BE class. As a first step, we verify
  that~$\ds_r(G)$ is not too large by computing an $r$-dominating set
  using the algorithm described in Theorem~\ref{thm:cdomapprox}. If the
  algorithm returns a solution larger than $\cdvorak k$, we conclude
  that~$\ds_r(G) > k$, since~$(G,k)$ cannot be $r$-dominated by $k$
  vertices it in particular cannot be $r$-Roman dominated with that
  budget. In this case we output a trivial no-instance, thus assume for
  the remainder that $\ds_r(G) \leq \cdvorak k$. Define
  $\clily := \clily[1,2r,r,3]$ in the following.

  \begin{claim}
    \RRD has a linear constraint core in BE classes.
  \end{claim}
  \begin{proof}
    Let~$L \subseteq V(G)$ be constraint core of $G$ with $|L| \geq
    \clily \ds_r(G)$. By Corollary~\ref{cor:lily}, we can find in
    polynomial time a uniform water lily $(R,C)$, $C \subseteq L$, $|R| \leq
    \clily$ of depth~$r$, radius~$2r$, adhesion~$c$ and ratio~$3$. Let $a \in C$
    be an arbitrary centre, we claim that $L\setminus \{a\}$ is still a
    constraint core, that is, every set that $r$-Roman-dominates $L \setminus
    \{a\}$ will also $r$-Roman-dominate~$a$.

    To that end, let $D_1,D_2$ be a $r$-Roman dominating set of minimal
    cost ($|D_1|+2|D_2|$). If~$|\vol^r_R(a) \cap D_2| \geq 2$ this set already
    dominates $a$ and there is nothing to prove, so assume otherwise. Then
    every centre $a' \in C$ must either be contained in $D_1$ or have at least
    one $D_2$-vertex in its pad, \eg $|N^r_{G-R}[a'] \cap D_2| \geq 1$.
    Therefore the total cost of $D_1,D_2$ when restricted to
    $N^r_{G-R}[C]$ is at least $|C| \geq 3|R|$.

    Construct the set $D'_1$ from~$D_1$ by removing
    all vertices in~$N^r_{G-R}[C]$ and construct the set~$D'_2$ from
    $D_2$ by removing all vertices in~$N^r_{G-R}[C]$ and adding all of~$R$.
    Since~$R$ $r$-dominates all of~$N^{2r}_{G-R}[C]$ and all these vertices are
    in $D'_2$, we can conclude that $D'_1,D'_2$ is indeed an $r$-Roman dominating
    set. By our above observation, the cost of $D'_1,D'_2$ is at least $|R|$ smaller
    than the cost of $D_1,D_2$, contradiction minimality.

    Thus~$L \setminus \{a\}$ is a constraint core for \RCDS in $G$. We can iterate
    this procedure until~$|L| < \clily \ds_r(G)$ and therefore end up
    with a linear constraint core.
  \end{proof}

  \noindent
  In the following, let~$L \subseteq V(G)$ be a constraint core
  for~$(G,k)$ with $|L| \leq \clily \ds_r(G) \leq \clily \cdvorak k$.

  We apply Lemma~\ref{lemma:projkernel} with~$X = L$, $c=1$, and $r$ as here
  to obtain a projection kernel~$\hat G$ with~$|\hat G| \leq \ctotalclos |L| = O(k)$.
  The proof that $(G,k)$ is equivalent to the annotated instance $(\hat G,L,k)$
  is almost identical to the proof in Theorem~\ref{thm:domination-bikernel} and
  we omit it here.
\end{proof}

%
%

\begin{problem}[k]{Annotated $(r,c)$-Scattered Set}
  \Input & A graph~$G$, a set~$U \subseteq V(G)$ and an integer~$k$. \\
  \Prob  & Is there a set~$I \subseteq U$ of size at least~$k$ such that
           $|N^r[v] \cap I| \leq c$ for all $v \in V(G)$?
\end{problem}

\noindent
The following proof makes use of the pad equivalence $\simuni$ defined in
Section~\ref{sec:lily}: recall two centres~$u,v$ of a water lily~$(R,C)$
satisfy $u \simuni v$ if they have the same
projection-types onto~$R$ at the same distance (for distances smaller than the
lily's depth) inside their respective pads.

\begin{theorem}\label{thm:scattered-bikernel}
  \RCSC over a hereditary BE-class $\mathcal G$
  admits a linear bikernel into \ARCSC over the same class~$\mathcal G$.
  Moreover, the resulting graph is an $(r,c)$-projection kernel of the original graph.
\end{theorem}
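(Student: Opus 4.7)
The plan is to mirror the structure of Theorems~\ref{thm:domination-bikernel}--\ref{thm:roman-domination-bikernel}, adapted to the fact that \RCSC is a maximization problem: I build a linear \emph{solution core}~$U$ (a subset of~$V(G)$ guaranteed to contain some maximum $(r,c)$-scattered set) and then invoke Lemma~\ref{lemma:projkernel}. First I would dispose of the easy cases by applying \Dvorak's algorithm (Theorem~\ref{thm:dvorak-ds}) to obtain an $r$-scattered set~$A_0$ with $|A_0| \geq \scat_r(G)/\cdvorak$; if $|A_0| \geq k$ I output a trivial YES-instance, since any $r$-scattered set is in particular an $(r,c)$-scattered set. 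Otherwise $\scat_r(G) = O(k)$, so Theorems~\ref{thm:dvorak-duality} and~\ref{thm:dvorak-is} give $\ds_r(G) = O(k)$ and $\scat^c_r(G) = O(k)$, and a call to Theorem~\ref{thm:cdomapprox} either certifies $\ds^c_r(G) = O(k)$ or flags a vertex~$v$ with $|N^r[v]| < c$ whose scatter constraint is vacuous and may be dismissed by preprocessing.

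Starting from $U := V(G)$ I would then iteratively shrink~$U$ while maintaining the invariant that some maximum $(r,c)$-scattered set lies inside~$U$. As long as $|U| \geq \clily \cdot \ds^c_r(G)$ I apply Corollary~\ref{cor:lily} with pad signature~$\simuni$ and ratio~$\tau$ chosen strictly larger than~$c$, obtaining a $\simuni$-uniform water lily $(R, C)$ of depth~$r$, radius~$2r$, adhesion~$c$ with $C \subseteq U$ and $|R|$ bounded by a constant. The central claim is that some centre $a \in C$ can be removed from~$U$ without dropping the maximum. Indeed, the $(r,c)$-domination property of the water lily guarantees $\proj^r_R(a) \neq \emptyset$, and every $v \in \proj^r_R(a)$ lies at distance~$\leq r$ from every centre by uniformity; the scatter constraint at such~$v$ thus forces $|I \cap C| \leq c$ for any maximum feasible $I \subseteq U$, so $|C \setminus I| \geq \tau - c \geq 1$. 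If $a \notin I$ I remove~$a$ directly; otherwise I fix $a' \in C \setminus I$ and form $I' = (I \setminus \{a\}) \cup \{a'\}$. By $\simuni$-uniformity the counts $|N^r_G[v] \cap I|$ are unaffected outside the two pads, strictly decrease inside the pad of~$a$, and only vertices inside the pad of~$a'$ already saturated at~$c$ are problematic; since pads are disjoint in~$G-R$ the total $I$-mass distributed across pads is at most~$|I|$, so a pigeonhole choice among $C \setminus I$ combined with the $\simuni$-equivalence of local pad structures produces a valid swap.

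After this loop terminates with $|U| = O(k)$ I would apply Lemma~\ref{lemma:projkernel} with $X = U$ and parameters $r, c$ to obtain an $(r,c)$-projection kernel~$\hat G$ with $|V(\hat G)| = O(k)$, and conclude the equivalence $(G, k) \Leftrightarrow (\hat G, U, k)$ by a two-way argument analogous to the closing paragraph of the proof of Theorem~\ref{thm:domination-bikernel}: the forward direction uses that~$U$ is a solution core together with distance preservation inside~$U$; the converse uses that any $v \in V(G) \setminus V(\hat G)$ shares its $r$-projection profile onto~$U$ with at least $\min\{c, p\}$ representatives in~$\hat G$ (Property~2 of the projection kernel), so---since $I \subseteq U$ and the count $|N^r_G[v] \cap I|$ depends only on~$v$'s profile together with the short distances inside~$G[U]$ that~$\hat G$ preserves---the scatter bounds at those representatives enforce the bound at~$v$. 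The main obstacle I anticipate is the swap step of the previous paragraph: producing an $a'$ whose pad contains no $c$-saturated vertex requires a delicate interplay between the water-lily ratio~$\tau$, the $\simuni$-refinement, and a pigeonhole on $I$-mass spread across disjoint pads; unlike the proofs for \RCDS and \RTD, where one directly contradicts minimality of~$D$, here we must \emph{exhibit} a feasible~$I'$ of equal size, which makes the combinatorics noticeably more demanding.
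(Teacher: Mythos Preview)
Your overall architecture matches the paper's: build a solution core by iterated water-lily centre deletion, then pass to a projection kernel. The gap is exactly where you flag it, the swap step, and your sketch does not close it.

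Your bound $|I \cap C| \leq c$, obtained from the scatter constraint at a single vertex of $\proj^r_R(a)$, is correct but too weak: it only says some centre $a' \notin I$ exists, not that the pad of $a'$ is safe to swap into. Your fallback pigeonhole (``total $I$-mass across pads is at most $|I|$'') is likewise too weak, since $|I|$ can be of order $k$ while $|C|$ is only a constant. What the paper does instead is sum the scatter constraints over \emph{all} of $R$: every vertex of every pad lies within distance $r$ of some root (this is exactly the domination clause in the water-lily definition), so each root's scatter bound contributes and one obtains $|N^{2r}_{G-R}[C] \cap I| \leq c|R|$. With the ratio chosen larger than $c$ this forces at least one centre $a'$ whose entire $2r$-pad is disjoint from $I$. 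The swap $I' = (I \setminus \{a\}) \cup \{a'\}$ is then verified by $\nu$-uniformity: for $u' \in N^r_{G-R}[a']$, the matching vertex $u \in N^r_{G-R}[a]$ has $\vol^r_R(u) = \vol^r_R(u')$, and since the pad of $a'$ contributes only $a'$ to $I'$ while the pad of $a$ contributed at least $a$ to $I$, one gets $|N^r[u'] \cap I'| \leq |N^r[u] \cap I| \leq c$; for $u' \in \vol^r_R(C)$ the count is unchanged because all centres have identical projection profiles. No additional pigeonhole over ``$c$-saturated'' vertices is needed once you have an empty pad.

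A secondary issue: by choosing adhesion $c$ you force Corollary~\ref{cor:lily} to require $|U| \geq \clily \cdot \ds^c_r(G)$, and your call to Theorem~\ref{thm:cdomapprox} does not bound $\ds^c_r(G)$ by $O(k)$---it only returns a set of size $\ccdom \ds^c_r(G)$, which is uninformative here. The paper avoids this entirely by taking adhesion $1$, so that the water-lily lemma needs only $|U| \geq \clily \cdot \ds_r(G)$, a quantity you have already bounded; the cost is merely a larger ratio.
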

\begin{proof}
  Let $(G,k)$ be an instance of \RCSC where $G$ is taken from a BE class.
  As a first step, we deal with the case
  that~$\scat^c_r(G)$ is large. We compute an $\cdvorak$-approximate
  $r$-dominating set~$D$ using Theorem~\ref{thm:dvorak-ds}. If
  $|D| > \cdvorak \wcolnum_{2r}^2(G) \cdot k$, we conclude by
  Theorems~\ref{thm:dvorak-duality} and~\ref{thm:dvorak-is} that
  $\scat^c_r(G) \geq \scat_r(G) > k$
  and we output a trivial yes-instance. Otherwise, assume
  $|D| \leq \cdvorak \wcolnum_{2r}^2(G) \cdot k$ and
  define $\clily := \clily[1,2r,r,2,\nu]$. We first show
  that \RCSC admits a linear solution core.
  \begin{claim}
    \RCSC has a linear solution core in BE classes.
  \end{claim}
  \begin{proof}
    Let~$U \subseteq V(G)$ be solution core of $G$ with $|U| \geq
    \clily \ds_r(G)$. Using Corollary~\ref{cor:lily}, we find in
    polynomial time a $\nu$-uniform water lily $(R,C)$, $C \subseteq U$, $|R| \leq
    \clily$ of depth~$r$, radius~$2r$, adhesion~$1$ and ratio~$2$. Let $a \in C$
    be an arbitrary centre, we claim that $U\setminus \{a\}$ is still a solution
    core, \ie there exists an optimal \rcsc that does not contain~$a$.

    To that end, let $I$ be a minimum \rcsc and assume $a \in I$. We claim
    that there exists an \rcsc $I'$ of the same size which excludes~$a$.
    First observe that every vertex that lives in a pad~$N^{2r}[a']$, $a' \in
    C$, has at least~$c$ neighbours in~$R$ at distance~$\leq r$. Therefore
    $|N^{2r}_{G-R}[C] \cap I| \leq |R|$ as otherwise we would find a vertex
    in~$R$ whose $r$-neighbourhood contains more than~$c$ vertices of~$I$.
    Since~$|C| \geq 2|R|$ there are at least~$|R|$ centres~$C' \subseteq C$ such that
    their pads $N^{2r}_{G-R}[C']$ do not intersect~$I$.
    Since~$(R,C)$ is uniform and~$a \in I$, we know that $|N^r[a'] \cap
    I| = |N^r[a] \cap I| < c$ for every centre~$a \in C$.

    Take $a' \in C'$ and let~$I' := I\setminus\{a\} \cup\{a'\}$. To see that
    $I'$ is $(r,c)$-scattered, consider any vertex $u' \in N^r[a']$ (note that
    vertices at distance~$> r$ from~$a'$ are not affected by the exchange of~$a$
    by $a'$). By $\nu$-uniformity, there exists a vertex~$u \in N^r[a]$ with
    $\pi^r_R[u] = \pi^r_R[u']$. In particular, $\proj^r_R(u) \cup \shad^r_R(u) =
    \proj^r_R(u') \cup \shad^r_R(u')$; therefore $(N^r[u] \cap I) \setminus
    \{a\} = (N^r[u'] \cap I') \setminus \{a'\}$ and we conclude that $|N^r[a']
    \cap I'| \leq c$.
    It follows that $U\setminus \{a\}$ is a solution core. We iterate the
    above procedure until $|U| \leq \clily \ds^c_r(G)$ and end up with
    a linear solution core.
  \end{proof}

  \noindent
  In the following, let~$U \subseteq V(G)$ be a solution core
  for~$(G,k)$ with $|U| \leq \clily \ds_r(G) \leq \clily |D|
  = O(k)$.

  We apply Lemma~\ref{lemma:projkernel} with~$X = U$ and~$r$, $c$ as here
  to obtain a projection kernel~$\hat G$ with~$|\hat G| \leq \ctotalclos |U| = O(k)$
  that a) preserves~$\leq r$-neighbourhoods in~$U$ and b) realizes every
  $r$-projection onto~$U$ that is realized~$p$ times in~$G$ at least
  $\min\{c,p\}$ times. Since distances in~$\hat G[U]$ are as in $G[U]$,
  it is easy to see that any set~$I \subseteq U$ is $(r,c)$-scattered in~$\hat G$
  iff it is $(r,c)$-scattered in~$G$. Since~$U$ is further a solution core
  for~$G$, we conclude that $(G,k)$ is equivalent to the
  annotated instance $(\hat G,U,k)$.
\end{proof}


\noindent
We show that
\textsc{$(r,[\lambda, \mu])$-Domination} admits a linear bikernel
into the following annotated problem:

\begin{problem}[k]{Annotated $(r,[\lambda, \mu])$-Domination}
  \Input & A graph~$G$, sets~$L, U \subseteq V(G)$ and an integer~$k$. \\
  \Prob  & Is there a set~$D \subseteq U$ of size at most~$k$ such that
           $|N^r[v] \cap D| \geq \lambda$ for all $v \in L$ and
           $|N^r[v] \cap D| \leq \mu$ for all $v \in V(G)$?
\end{problem}

\noindent
We note that the construction in the following proof results in a bikernel
$(\hat G, L, U, k)$ with $L \subseteq U$, the construction can also be
easily be modified to ensure that $L = U$.

\begin{theorem}\label{thm:lambdamu-domination-bikernel}
  \Problem{$(r,[\lambda, \mu])$-Domination} over a hereditary BE-class~$\mathcal G$ admits
  a linear bikernel into \Problem{Annotated $(r,[\lambda, \mu])$-Domination}
  over the same class~$\mathcal G$.
  Moreover, the resulting graph is an $(r,c)$-projection kernel of the original graph.
\end{theorem}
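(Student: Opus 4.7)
The plan is to combine the constraint-core argument from Theorem~\ref{thm:domination-bikernel} with the solution-core argument from Theorem~\ref{thm:scattered-bikernel}, since \LMD inherits a lower bound $\lambda$ from the first problem and an upper bound $\mu$ from the second. First I would handle the trivial case by computing an approximate $(r,\lambda)$-dominating set using Theorem~\ref{thm:cdomapprox}; if its size exceeds a constant multiple of $k$ then $\ds^\lambda_r(G) > k$, every feasible $D$ fails even the $\lambda$-domination requirement, and we output a trivial no-instance. Otherwise $\ds^\lambda_r(G) = O(k)$, which is the anchor against which water lilies will be extracted.

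Next I would construct a linear-size constraint core $L$ for the $\lambda$ lower bound by iterated shrinking. As long as $|L|$ exceeds a suitable multiple of $\ds^\lambda_r(G)$, I would extract via Corollary~\ref{cor:lily} a $\nu$-uniform water lily $(R,C)$ with $C \subseteq L$ of depth~$r$, radius~$2r$, adhesion~$\lambda$ and ratio~$2$, and delete an arbitrary centre $a$ from $L$. The exchange argument then mirrors Theorem~\ref{thm:domination-bikernel}: a minimum feasible solution $D$ either has $D' := D \setminus N^r_{G-R}[C]$ already supplying $\lambda$ dominators to some centre (hence, by $\nu$-uniformity, to every centre, in particular to $a$), or $D$ must place so many vertices inside the pads that $\tilde D := D' \cup R$ is strictly smaller than $D$ while still $\lambda$-dominating every vertex that $D$ did.

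In parallel I would construct a linear-size solution core $U$ using the swap argument of Theorem~\ref{thm:scattered-bikernel}: while $|U|$ is too large relative to $\ds^\lambda_r(G)$, take a $\nu$-uniform water lily with centres in $U$ and show that in any minimum feasible solution containing a centre $a \in U \cap C$ one can exchange $a$ with another centre $a' \in C \setminus D$ without breaking either the $\lambda$ or the $\mu$ constraint at any vertex of $G$. With $X := L \cup U$ of size $O(k)$, I would then apply Lemma~\ref{lemma:projkernel} with radius $r$ and adhesion $\mu$ to obtain a projection kernel $\hat G$ of linear size and output the bikernel $(\hat G, L, U, k)$. Equivalence is routine in the spirit of the previous theorems of this section: for $v \in L$ the $r$-neighbourhood in $\hat G$ preserves its intersection with $U$, so the $\lambda$ constraint transfers; for $v \in V(G) \setminus V(\hat G)$ the projection-kernel realisation property guarantees that $v$'s $r$-projection profile onto $U$ is represented by enough vertices of $\hat G$, so the $\mu$ constraint at $v$ is captured by the $\mu$ constraints on the corresponding vertices of $\hat G$.

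The main obstacle is that each of the two exchange arguments must simultaneously respect both the $\lambda$ lower bound and the $\mu$ upper bound. In the constraint-core step, swapping a chunk of $D$ for $R$ risks inflating some vertex's $r$-neighbourhood count beyond $\mu$; I expect to avoid this by choosing the lily's adhesion equal to $\lambda$ so that $R$ precisely substitutes for the pad contribution removed from $D$, together with the fact that $|R|$ is bounded by a constant independent of the instance. In the solution-core step, the swap of two $\nu$-equivalent centres is symmetric in its effect on every vertex's $r$-neighbourhood by the definition of $\nu$ (which records the projection-profile multiset at every depth up to $r$), so both bounds transfer exactly; the point that needs checking is that $\nu$ as defined in Section~\ref{sec:lily} is strong enough to preserve not just the $r$-scatter symmetry used in Theorem~\ref{thm:scattered-bikernel} but simultaneously the $\lambda$ lower bound.
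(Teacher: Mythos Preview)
Your constraint-core step has a real gap. In the exchange argument you propose, $\tilde D := D' \cup R$ is indeed smaller than $D$ and still $\lambda$-dominates everything, but nothing stops it from \emph{violating the $\mu$ upper bound}: for a vertex $u$ in $\vol^r_R(C)$ (or even in a pad), adding all of $R$ to its $r$-neighbourhood count while removing only the pad portion of $D$ can easily push $|N^r[u]\cap \tilde D|$ above~$\mu$. Since $\tilde D$ is then not feasible for $(r,[\lambda,\mu])$-Domination, its smaller size does not contradict minimality of $D$, and the argument collapses. Your proposed fix---setting the adhesion to $\lambda$ so that ``$R$ precisely substitutes''---does not help: adhesion $\lambda$ only guarantees that each pad vertex sees \emph{at least} $\lambda$ roots, not that the substitution is count-preserving, and the constant bound on $|R|$ says nothing about how many roots land in any particular $r$-neighbourhood.

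The paper sidesteps this entirely by \emph{never modifying $D$} in the constraint-core step. Instead it exploits the $\mu$ constraint at the roots: since $R\subseteq U$, the solution $D$ can touch at most $\mu|R|$ pads (each root's $r$-neighbourhood may contain at most $\mu$ solution vertices), so taking the lily with ratio $\mu+1$ (not~$2$) guarantees a centre $a'\neq a$ whose pad is disjoint from $D$. Then the $\lambda$ lower bound at $a'$ is met entirely from $\vol^r_R(a')$, and uniformity gives $\vol^r_R(a)=\vol^r_R(a')$, so $a$'s lower bound is already satisfied by $D$ itself. This is the key idea you are missing: the two constraints interact, and the upper bound is what makes the lower-bound core argument go through without any exchange. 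Your solution-core sketch is closer to the paper's, but note that there too the paper uses the $\mu$ constraint at $R$ to locate an empty pad before swapping, and the case analysis verifying both bounds after the swap is more delicate than the pure $\nu$-symmetry you invoke.
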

\begin{proof}
  Since the cases where either~$\mu = \infty$ or $\lambda = 0$ are
  equivalent to \RCDS or \RCSC and thus covered by Theorems~\ref{thm:domination-bikernel}
  and~\ref{thm:scattered-bikernel}, we here only consider the case of
  $\lambda \neq 0$ and $\mu \neq \infty$.

  Note that any solution to the problem is in particular an
  $(r,\mu)$-dominating set. As a first step, we therefore deal with the case
  that~$\ds^\mu_r(G)$ is too large by computing an $(r,\mu)$-dominating set
  using the algorithm described in Theorem~\ref{thm:cdomapprox}. If the
  algorithm returns a solution larger than $\ccdom k$, we conclude
  that~$\ds^\mu_r(G) > k$ and therefore that~$(G,k)$ must be a no-instance;
  in which case we output a trivial no-instance.
  Otherwise, let~$\hat D$ be the resulting $(r,c)$-dominating set.

  Let~$(G,L,U,k)$ be an instance of \Problem{Annotated
  $(r,[\lambda,\mu])$-Domination} with~$L = U = V(G)$. Clearly, $(G,L,U,k)$ is
  equivalent to~$(G,k)$. In the following, we gradually reduce the size of~$L$
  and~$U$ while maintaining this equivalence. To that end, we will use the pad
  signature~$\nu$ which is to be understood to take the `vertex labels' $L$,
  $U$ into account.

  Assume that $|L| > (\clily+1) |\hat D|$ with~$\clily :=
  \clily[r,2r,\mu+1,\nu]$. Then, using~$\hat D$ in the construction used in
  the proof of Lemma~\ref{lemma:lily}, we find a $\nu$-uniform water
  lily~$(R,C)$ with~$C \subseteq L\setminus \hat D$ of depth~$r$,
  radius~$2r$ and ratio $(\mu+1)$.
  \begin{claim}
    Let $a \in C$.
    Then the instances~$(G,L,U,k)$ and $(G,L\setminus\{a\},U,k)$ are equivalent.
  \end{claim}
  \begin{proof}
    Any solution for~$(G,L,U,k)$ is also a solution
    to~$(G,L\setminus\{a'\},U,k)$, therefore we only have to show the opposite
    direction.

    Let~$D$ be a solution for~$(G,L\setminus\{a\},U,k)$. Since~$R
    \subseteq L \cap U$, the set $D$ can intersect at most~$\mu |R|$ pads or
    otherwise we would violate an upper constraint for at least one of the
    vertices in~$R$. It follows that at least $|R|$ pads of $(R,C)$
    cannot contain any vertex of~$D$; let the centres of these pads be~$C'
    \subseteq C$. Choose~$a' \in C'$ distinct from~$a$ (since~$|C'|
    \geq |R| \geq \lambda > 1$ such a vertex exists). Note that $a' \in L$, therefore
    $|N^r[a'] \cap D| \geq \lambda$. But since~$N^r_{G-R}[a'] \cap D = \emptyset$,
    these solution vertices must lie in~$\vol^r_R(a')$. Now simply
    observe that, by uniformity of~$(R,C)$,
    $\vol^r_R(a) = \vol^r_R(a')$ and therefore $|N^r[a'] \cap D| \geq |\vol^r_R(a) \cap D| \geq \lambda$.
    Accordingly, $D$ is also a solution for~$(G,L,U,k)$.
  \end{proof}

  \noindent
  We repeat the above procedure until $|L\setminus \hat D| \leq \clily k$. Now
  assume that $|U \setminus (L \cup \hat D)| > \clily k$ and let~$(R,C)$ be a
  $\nu$-uniform water lily with~$C
  \subseteq U \setminus (L \cup \hat D)$ of depth~$r$, radius~$2r$ and
  ratio $(\mu+1) |R|$.


  \begin{claim}
    Let $a \in C$.
    Then the instances~$(G,L,U,k)$ and $(G,L,U\setminus\{a\},k)$ are equivalent.
  \end{claim}
  \begin{proof}
    By construction of~$(R,C)$, every vertex~$x \in N^{2r}_{G-R}[C]$
    is $(r,\mu)$-dominated by~$R \cap \hat D$. Importantly, $R \cap \hat D
    \subseteq R \cap U$, therefore any solution~$D$ of~$(G,L,U,k)$ can
    intersect~$N^r[R]$ in at most~$\mu |R|$ vertices. In particular,
    at most~$\mu |R|$ pads of~$(R,C)$ can contain vertices of~$D$,
    let us call the centres of these empty pads~$C' \subseteq C$.

    If~$a \not \in D$, clearly~$D$ is a solution of~$(G,L,U\setminus\{a\},k)$
    and there is nothing to prove. Assume therefore that~$a \in D$. Let $a' \in
    C'$ be an arbitrary centre of an empty pad. We claim that $D' := D \setminus
    \{a\} \cup \{a'\}$ is a solution to $(G,L,U\setminus\{a\},k)$. To that
    end, consider any vertex~$x \in N^r[a] \cup N^r[a']$, we will show
    that~$D'$ fulfils any constraints associated with~$x$.

    \begin{case} $x \in N_{G-R}^r[a]$. \newline
      By $\nu$-uniformity, there exists a vertex $x' \in N_{G-R}^r[a']$ such
      that $\vol^r_{G-R}(x) = \vol^r_{G-R}(x')$ and $x'$ is contained in $L$ ($U$)
      iff $x$ is contained in $L$ ($U$).
      For the special case that $x = a$ we let $x' = a'$.

      Assume~$x \in L$, then~$x' \in L$ and accordingly~$|N^r[x'] \cap D| \geq \lambda$.
      Since~$N^{2r}_{G-R}[a'] \cap D = \emptyset$, we have
      that
      \[
        N^r[x'] \cap D = \vol^r_R(x') \cap D = \vol^r_R(x') \cap D'
        = \vol^r_R(x) \cap D',
      \]
     therefore $|N^r[x] \cap D'| = |N^r[x'] \cap D| \geq \lambda$ and the
     lower-bound constraint for~$x$ is satisfied by~$D'$.

     If~$x \in R$, simply note that~$|N^r[x] \cap D'| \leq |N^r[x] \cap D|
     \leq \mu$, hence the upper-bound constraint for~$x$ is satisfied by~$D'$.
    \end{case}

    \begin{case} $x \in N_{G-R}^r[a']$
      Again, by $\nu$-uniformity, there exists a vertex $\hat x \in
      N_{G-R}^r[a]$ such that $\vol^r_{G-R}(x) = \vol^r_{G-R}(\hat x)$ and
      $\hat x$ is contained in $L$ ($U$) iff $x$ is contained in $L$ ($U$).
      For the special case that $x = a'$ we let $\hat x = a$.

      If~$x \in L$, simply note that $|N^r[x] \cap D'| \geq |N^r[x] \cap D|
      \geq \lambda$, hence the lower-bound constraint for~$x$ is
      satisfied by~$D'$.

      Assume~$x \in R$. Then~$\hat x \in R$ and accordingly
      $|N^r[\hat x] \cap D| \leq \mu$. More specifically, since~$a
      \in N^r[\hat x] \cap D$, we know that
      $|\vol^r_R[\hat x] \cap D| \leq \mu-1$. Because
      \[
        N^r[x] \cap D' = (\vol^r_R[x] \cap D') \cup \{a'\}
        =  (\vol^r_R[\hat x] \cap D) \cup \{a'\}
      \]
      we conclude that~$|N^r[x] \cap D'| \leq \mu$ and the upper-bound
      constraint for~$x$ is satisfied by~$D'$.
    \end{case}

    \begin{case} $x \in \vol^r_R(a) = \vol^r_R(C)$. \quad
        Simply note that by
        uniformity $|N^r[x] \cap D| = |N^r[x] \cap D'|$ and
        therefore~$D'$ satisfies all constraints for~$x$.
    \end{case}

    \noindent
    Therefore~$D'$ is indeed a solution for~$(G,L,U\setminus \{a\},k)$
    of equal size and we conclude that the instances
    $(G,L,U,k)$ and~$(G,L,U\setminus \{a\},k)$ are equivalent, as claimed.
  \end{proof}

  \noindent
  We repeat the above procedure until $|U \setminus (L \cup \hat D)| \leq
  \clily k$ and end up with an instance~$(G,L,U,k)$ which is equivalent to our
  initial instance~$(G,k)$ and further satisfies $|L| \leq \clily k$ and $|U|
  \leq |L| + |\hat D| + |U \setminus (L \cup \hat D)| \leq (2 \clily + \ccdom)
  k$.

  Finally, let us construct the bikernel from this annotated instance.
  Note that, by construction,
  $L \subseteq U$.
  Let~$\hat U$ be the shortest-path closure of~$U$
  in~$G$ as per Lemma~\ref{lemma:pathclos}, then~$|\hat U| \leq
  \cpathclos |U|$ and~$\hat G := G[\hat U]$ preserves all distances up to
  length~$r$ between vertices in~$U$. In particular,
  $N^r_{\hat G}[v] \cap U = N^r_G[v] \cap U$. Since the annotated instance
  asks for solutions contained entirely in $U$ and $L \subseteq U$, we
  conclude that the instance~$(G,L,U,k)$ and $(\hat G, L, U, k)$ are
  equivalent, therefore the latter is also equivalent to~$(G,k)$ which
  finally proves the claim.
\end{proof}


\section{From bikernels to BE-kernels}\label{sec:be-kernels}

If we sacrifice the constraint to construct a (bi)kernel that is contained
in the same hereditary graph class, we are able to construct BE-kernels
by reducing from the annotated problem back into the original problems.
In the following constructions, we usually tried to minimize the increase
in the parameter~$k$, not the increase of the expansion characteristics of
the class.

\begin{theorem}\label{thm:rcds-kernel}
  \RCDS admits a linear BE-kernel.
\end{theorem}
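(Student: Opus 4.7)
The plan is to combine Theorem~\ref{thm:domination-bikernel} with a small forcing construction that turns the annotated bikernel into an equivalent, unannotated \RCDS instance lying in a BE class whose expansion characteristics are controlled by those of the input. I would first invoke Theorem~\ref{thm:domination-bikernel} on the input $(G, k)$ to obtain an equivalent annotated instance $(\hat G, L, k)$ with $|V(\hat G)| = O(k)$ and $\hat G$ in the same BE class as $G$; only vertices of $L$ carry a domination constraint. The remaining task is to modify $\hat G$ locally so that every vertex outside $L$ is automatically $(r,c)$-dominated by a constant-size set of vertices which any optimal solution is forced to include.

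Concretely, I would construct $G'$ as follows. Introduce $c$ fresh \emph{anchor} vertices $F = \{f_1, \ldots, f_c\}$ and add an edge from every vertex of $V(\hat G) \setminus L$ to every $f_i$, so that $F$ already $(r,c)$-dominates $V(\hat G) \setminus L$ in $G'$. To guarantee $F$ is chosen in every optimal solution, I would attach to each $f_i$ a constant-size \emph{forcing gadget} — for instance a small private assembly of pendant cliques and short paths whose own $(r,c)$-domination constraints cannot be satisfied within the allotted budget unless $f_i$ itself lies in the solution. The gadgets are designed to be self-dominating via $f_i$ and are pairwise disjoint across the $f_i$'s, so they contribute only $O(rc)$ extra vertices in total. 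Setting $k' := k + c$ accounts precisely for the forced anchors.

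Equivalence of $(G', k')$ and $(\hat G, L, k)$ then proceeds in two symmetric steps. Given an annotated solution $D$ of size $\leq k$, the set $D \cup F$ is an \rcds of $G'$ of size $k+c$: vertices of $L$ retain all their $\hat G$-neighbours and are therefore dominated by $D$; vertices of $V(\hat G) \setminus L$ are $(r,c)$-dominated by $F$; and the gadget vertices are dominated by the gadget itself. Conversely, any \rcds $D'$ of $G'$ of size $\leq k + c$ must contain $F$ by construction of the forcing gadgets, and since $L$ has no edges to $F$ nor to any gadget vertex, $(D' \setminus F)$ restricted to $V(\hat G)$ is an \ARCDS-solution for $(\hat G, L, k)$ of size $\leq k$.

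The principal obstacle is the quantitative preservation of bounded expansion: the anchor vertices $f_i$ may have degree $\Theta(k)$, and we must show that $G'$ still lies in a BE class whose expansion characteristics depend only on those of the input class $\mathcal G$. I would argue this directly from the shallow-minor characterisation: any depth-$r$ minor of $G'$ becomes a depth-$r$ minor of $\hat G$ after deleting the at most $c$ branch sets that meet $F$ together with the constant-size gadget branches, so its degeneracy is bounded by $\grad_r(\hat G) + O(c)$. Combined with the linear size estimate $|V(G')| = |V(\hat G)| + O(rc) = O(k)$, this delivers the required linear BE-kernel.
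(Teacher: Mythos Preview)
Your overall strategy---invoke the bikernel and then attach a forcing gadget that $(r,c)$-dominates $O := V(\hat G)\setminus L$ for free---is exactly the paper's strategy, and your bounded-expansion argument via apex deletion is fine. However, the construction as you describe it has a real gap in the backward direction of the equivalence for $r \geq 2$.

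You join each anchor $f_i$ to every vertex of $O$ by a \emph{single edge}. Consequently, any $v \in L$ with $\dist_{\hat G}(v, O) \leq r-1$ has all of $F$ inside $N^r_{G'}[v]$; in fact such a vertex is already $(r,c)$-dominated by $F$ alone in $G'$. Your sentence ``since $L$ has no edges to $F$ \ldots, $(D'\setminus F)$ restricted to $V(\hat G)$ is an \ARCDS-solution'' confuses adjacency with $r$-neighbourhood: having no \emph{direct} edges to $F$ does not prevent $F$ from contributing to the $(r,c)$-domination of $L$-vertices through $O$. So an optimal \rcds $D'$ of $G'$ may leave such $L$-vertices entirely to $F$, and then $D'' := (D'\setminus F)\cap V(\hat G)$ need not $(r,c)$-dominate $L$ in $\hat G$; the implication $(G',k+c)$ YES $\Rightarrow$ $(\hat G,L,k)$ YES fails. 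The paper avoids precisely this problem by connecting the hub vertex $b_1$ to $O$ via paths of length $r-1$, so that the forced solution vertices $a_1,\ldots,a_c$ sit at distance exactly $r$ from $O$ and hence at distance $\geq r+1$ from every $L$-vertex; an exchange argument then removes any stray new vertices from $D'$ before restricting to $V(\hat G)$. Your construction is easily repaired along the same lines (replace the direct $F$--$O$ edges by length-$(r-1)$ paths, or place $F$ at distance $r$ from $O$), but as written the equivalence proof does not go through. A secondary, lesser point: your forcing gadget is left entirely schematic (``pendant cliques and short paths''), and you would still need to argue the exchange that lets you assume an optimal $D'$ intersects each gadget only in $f_i$.
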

\begin{proof}
  For an instance~$(G,k)$ of \RCDS, where~$G$ is taken from a BE class, we first construct a bikernel
  $(\hat G,L,k)$ of \ARCDS according to Theorem~\ref{thm:domination-bikernel}.
  Recall that~$\hat G$ is an $(r,c)$-projection kernel of~$(G,L)$.

  First consider $r \geq 2$. We construct~$G'$ from $\hat G$ by adding
  new vertices $a_1,\ldots,a_c, b_1, b_2, b_3$ to the graph.
  We connect every~$a_i$, $1 \leq i \leq c$ to both~$b_1$ and~$b_2$;  then
  connect~$b_1$ to every vertex in~$O := V(\hat G)\setminus L$  via a path of length
  $r-1$ and connect~$b_2$ to $b_3$ by such a path as well.

  From the construction it is clear that~$G'$ has
  size~$O(k)$, we are left with proving that the two instances~$(G,k)$
  and~$(G', k+c)$ are equivalent.

  Assume that~$D'$ is a minimum \rcds for~$G'$ of size~$\leq k+c$.
  By a simple exchange argument, we can assume that~$D'$
  contains all vertices~$a_i$ in order to $(r,c)$-dominate~$b_3$. These vertices
  already $(r,c)$-dominate all of~$O$ and the paths leading from~$b_1$
  to~$O$. As such, we can assume that an optimal solution~$D'$ does not
  contain internal vertices of those paths (otherwise we might as
  well exchange an internal vertex for the path's endpoint in~$O$).
  Then the set~$\hat D := D' \setminus
  \{a_1,\ldots a_c\}$ has size at most~$k$ and $(r,c)$-dominates all of~$L$;
  thus~$\hat D$ in particular is a solution to~$(\hat G, L, k)$.

  In the other direction, assume that~$\hat D$ is a minimum solution
  for~$(\hat G,L,k)$, that is, $\hat D$ $(r,c)$-dominates $L$ in $\hat G$.
  Let~$D' := \hat D \cup \{a_1,\ldots,a_c\}$, it is easy to see that~$D'$
  $(r,c)$-dominates $G'$ and has size~$|D'| = |D| + c$.

  For~$r=1$ we modify the construction as follows: we add vertices~$a_1,\ldots,a_c,b$
  and connect all~$a_i$ to~$O \cup \{b\}$. The argument for why the resulting
  instance is equivalent is very similar to the case~$r\geq 2$ and we omit it
  here.

  We conclude that~$(\hat G,L,k)$ and~$(G',k+c)$ are indeed equivalent,
  and thus also to~$(G,k)$. It
  is only left to show that the construction of~$G'$ increased the
  expansion characteristics by some arbitrary function independent of~$|G|$.
  Simply note that we can construct~$G'$ from~$G$ by adding~$c+3$ apex-vertices
  (which increases the expansion characteristics only by an additive constant)
  and then remove or subdivide edges incident to them (which does not increase
  the expansion characteristics).
\end{proof}

\begin{theorem}\label{thm:total-domination-kernel}
  \RTD admits a linear BE-kernel.
\end{theorem}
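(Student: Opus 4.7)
The plan is to mimic the proof of Theorem~\ref{thm:rcds-kernel}, adapting the gadget to account for the fact that in total domination a vertex is not allowed to dominate itself. Given an instance~$(G,k)$ of \RTD with $G$ from a BE class, we first apply Theorem~\ref{thm:total-domination-bikernel} to obtain a bikernel $(\hat G, L, k)$ of \ARTD with $|\hat G| = O(k)$; recall that $\hat G$ is an $(r,1)$-projection kernel of $(G,L)$, so in particular $\leq r$-distances among vertices of $L$ are preserved in $\hat G$. Write $O := V(\hat G) \setminus L$.

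We then construct $G'$ from $\hat G$ by adding two new vertices $a_1, a_2$ joined by an edge, two new pendants $b_1, b_2$ where $b_i$ is connected to $a_i$ by a path of length~$r$, and a path of length~$r$ from $a_1$ to every $v \in O$ (for $r=1$ each such path degenerates to a single edge). We claim $(\hat G, L, k)$ and $(G', k+2)$ are equivalent. In the forward direction, given an annotated solution~$\hat D$ we take $D' := \hat D \cup \{a_1, a_2\}$: then $a_1$ totally $r$-dominates $O$, $b_1$, and all internal vertices of the new paths incident to $a_1$; $a_2$ handles $b_2$ and the internal vertices of the $a_2$-$b_2$ path; $a_1, a_2$ mutually dominate each other via the added edge; and $\hat D$ continues to dominate~$L$ because distances within~$L$ are preserved.

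For the reverse direction, let $D'$ be a total $r$-dominating set of $G'$ of size at most $k+2$. Since $N^r[b_i] \setminus \{b_i\}$ consists only of $a_i$ and internal vertices of the $a_i$-$b_i$ path, an exchange argument (any path of length $\leq r$ out of such an internal vertex $x$ must leave via $a_i$ or terminate at $b_i$, so $a_i$ totally dominates a superset of what $x$ does) lets us assume $a_1, a_2 \in D'$. A similar exchange then removes all $D'$-vertices from the newly added paths, so that $\hat D := D' \setminus \{a_1, a_2\} \subseteq V(\hat G)$ has size at most $k$. Since $a_1, a_2$ lie at distance strictly greater than $r$ from every $v \in L$, any TD-witness for $v$ in $G'$ must already lie in $\hat D$, and by the projection-kernel property this vertex witnesses TD of $v$ in $\hat G$ as well.

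Finally, $G'$ is obtained from $\hat G$ by adding four apex vertices $a_1, a_2, b_1, b_2$ and then subdividing or removing edges incident to them, so by exactly the same reasoning as in the proof of Theorem~\ref{thm:rcds-kernel} the expansion characteristics increase only by a function independent of $|G|$. The main obstacle is the exchange argument in the reverse direction: compared to the \RCDS case we need two apex vertices $a_1, a_2$ rather than a single one (so that they can totally dominate each other, since a vertex does not totally dominate itself), and we must carefully check that all internal path vertices can always be evicted from a candidate solution without losing the total-dominating property.
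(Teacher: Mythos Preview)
Your argument is correct and follows the same strategy as the paper: apply the bikernel of Theorem~\ref{thm:total-domination-bikernel}, attach a constant-size gadget to $O$ via length-$r$ paths so that exactly two gadget vertices are forced into any optimal solution, and conclude that $(\hat G,L,k)$ and $(G',k+2)$ are equivalent. The paper uses a slightly different gadget shape (a linear tail $O\text{--}b\text{--}a_1\text{--}a_2$ with each dash a length-$r$ path, forcing $b,a_1$ into the solution), but the reasoning is otherwise identical; one small imprecision worth fixing is your parenthetical claim that ``$a_i$ totally dominates a superset of what $x$ does'', which fails for the vertex $a_i$ itself---the exchange still goes through because the dominator of $b_{3-i}$ on the other pendant path always lies within distance $r$ of $a_i$, and you should say this explicitly.
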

\begin{proof}
  For an instance~$(G,k)$ of \RTD we first construct a bikernel
  $(\hat G,L,k)$ of \ARTD according to Theorem~\ref{thm:total-domination-bikernel}.
  Recall that~$\hat G$ is an $(r,1)$-projection kernel of~$(G,L)$.

  We construct~$G'$ from $\hat G$ as
  follows: add new vertices $b, a_1, a_2$ to the graph.
  Connect~$b$ to every vertex in~$O := V(\hat G)\setminus L$ and to $a_1$
  via a path of length $r$. Then connect~$a_1$ to~$a_2$ by a path of length~$r$.
  It is is clear that~$G'$ has size~$O(k)$, we are left with proving
  that the two instances~$(G,k)$ and~$(G', k+2)$ are equivalent.

  From the construction it is clear that~$G'$ has
  size~$O(k)$, we are left with proving that the two instances~$(G,k)$
  and~$(G', k+2)$ are equivalent.

  First, assume that~$D'$ is a minimal \rtd for~$G'$. Since the path
  from~$b$ to~$a_2$ needs to contain at least one vertex to dominate the
  path, we can, by a simple exchange argument, assume that this vertex is~$a_1$.
  $D'$ further needs to dominate~$a_1$ itself, again by an exchange argument
  we may assume that~$b \in D'$. We can therefore assume that~$D'$ does not
  contain the paths between~$b$ and~$O$ (excluding the vertices~$O$)
  and the path from~$b$ to~$a_2$ in vertices other than~$b,a_1$.
  Then the set~$\hat D := D' \setminus \{b,a_1\}$ has size~$|D'| - 2$
  and totally $r$-dominates all of~$L$, therefore~$\hat D$ is a solution
  to~$(\hat G, L, k)$.

  In the other direction, assume that~$\hat D$ is a minimal solution for~$(\hat G, L, k)$,
  that is, $\hat D$ totally $r$-dominates $L$ in $\hat G$.
  Let~$D' := \hat D \cup \{b,a_1\}$. Then~$D'$ totally $r$-dominates $G'$ and
  has size~$|D|+2$.

  We conclude that~$(\hat G,L,k)$ and~$(G',k+2)$ are indeed equivalent,
  and the latter is also equivalent to~$(G,k)$. The argument the increase of
  the expansion characteristic of~$G'$ is similar to before, we omit it here.
\end{proof}

\begin{theorem}\label{thm:roman-domination-kernel}
  \RRD admits a linear BE-kernel.
\end{theorem}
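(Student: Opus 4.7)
The plan is to follow the template of Theorems~\ref{thm:rcds-kernel} and~\ref{thm:total-domination-kernel}. Starting from an input $(G,k)$ of \RRD with $G$ in a hereditary BE-class~$\mathcal G$, I would first apply Theorem~\ref{thm:roman-domination-bikernel} to obtain a linear bikernel $(\hat G, L, k)$ of \ARRD; write $O := V(\hat G) \setminus L$. My task is then to convert the annotation into a small gadget that both forces a constant-cost contribution to any $r$-Roman-dominating set and relieves the vertices of $O$ from having to be covered by the solution.

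I construct $G'$ from $\hat G$ by adding a new vertex~$b$, two pendants $w_1, w_2$ adjacent to $b$, and for each $v \in O$ a new $b$--$v$ path of length~$r$. Placing $b$ into $D_2$ at cost~$2$ then $r$-dominates all of $O$, every new path-internal, and both pendants; and since every $L$-vertex is at distance at least~$r+1$ from $b$ in $G'$, the vertex $b$ does not subsidise the domination of~$L$. I would then argue that $(G,k)$ is equivalent to $(G', k+2)$. The forward direction lifts an annotated solution $(D_1^*, D_2^*)$ of cost $k$ to $(D_1^*, D_2^* \cup \{b\})$ in $G'$. The backward direction argues, via exchange steps, that any optimal solution in $G'$ may be taken to place $b$ in $D_2$ and no other gadget vertex in $D_1 \cup D_2$: any path-internal $u \in D_1 \cup D_2$ is replaced by its $O$-endpoint (whose $r$-neighbourhood in $\hat G$ contains that of~$u$); each $w_i$ becomes redundant once $b \in D_2$; and the two pendants jointly rule out solutions in which $b$ sits only in $D_1$ or outside $D_1 \cup D_2$, because their $r$-neighbourhoods lie entirely inside the gadget. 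Once this normal form is reached, $(D_1, D_2 \setminus \{b\})$ is an annotated solution of cost at most~$k$, using that distances inside $V(\hat G)$ are preserved between $\hat G$ and $G'$.

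For size, $|G'| \leq |\hat G| + |O|(r-1) + 3 = O(k)$ by the bikernel. For expansion, $G'$ is built from $\hat G$ by adding three apex-style vertices and subdividing incident edges, so $\grad_r(G') \leq \grad_r(\hat G) + O_r(1)$ and $G'$ lies in a BE-class depending only on $\mathcal G$.

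The main obstacle I expect is the backward exchange argument: careful bookkeeping is needed to show that every reassignment in the gadget is simultaneously safe for the $O$-coverage and the $L$-coverage roles of the solution. The decisive observation making the exchanges work is that each gadget vertex's $r$-neighbourhood in $G'$ meets $V(\hat G)$ only inside $N^r_{\hat G}[O]$ and is contained in $N^r_{G'}[b] \cup N^r_{G'}[O]$, so any gadget vertex in the solution can be rerouted either to $b$ or to its corresponding $O$-endpoint without losing either contribution.
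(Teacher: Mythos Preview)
Your gadget is too permissive and the equivalence $(G',k+2) \Leftrightarrow (\hat G,L,k)$ fails for $r\geq 2$. The problem is that your pendants $w_1,w_2$ are adjacent to~$b$, so every internal vertex of a $b$--$O$ path (being at distance $\leq r-1$ from $b$) lies in $N^r[w_1]\cap N^r[w_2]$. A single such internal vertex placed in $D_2$ can therefore cover \emph{both} pendants \emph{and} some $L$-vertices reachable through its $O$-endpoint, and this double use can make $G'$ strictly cheaper than the annotated instance plus~$2$.

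Concretely, take $r=2$ and $\hat G$ the edge $\ell v$ with $L=\{\ell\}$, $O=\{v\}$; the annotated optimum is $k^*=1$ (put $\ell$ in $D_1$). Your $G'$ is the path $\ell\,v\,u\,b$ together with the two pendants $w_1,w_2$ at $b$, where $u$ is the unique internal vertex of the $b$--$v$ path. Then $N^2[u]=\{\ell,v,u,b,w_1,w_2\}=V(G')$, so $D_2=\{u\}$ is an $r$-Roman dominating set of cost~$2<k^*+2$. Your exchange argument breaks exactly here: rerouting $u$ to its $O$-endpoint $v$ loses the coverage of $w_1,w_2$, while rerouting $u$ to $b$ loses the coverage of~$\ell$; doing both costs~$4$. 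The sentence ``any gadget vertex in the solution can be rerouted either to $b$ or to its corresponding $O$-endpoint without losing either contribution'' is the step that does not go through, because a path-internal vertex may be making \emph{both} contributions simultaneously.

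The paper avoids this by attaching three vertices $a_1,a_2,a_3$ to $b$ via paths of length~$r$ (not~$1$). Then no internal vertex of a $b$--$O$ path can $r$-reach any $a_i$, and the only vertex that $r$-dominates all three $a_i$ is $b$ itself; since putting all three in $D_1$ costs~$3>2$, the exchange to $b\in D_2$ is forced and the rest of your argument would go through. Adjusting your gadget in this way repairs the proof.
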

\begin{proof}
  For an instance~$(G,k)$ of \RRD we first construct a
  bikernel $(\hat G,L,k)$ of \ARRD according
  to Theorem~\ref{thm:roman-domination-bikernel}.
  Recall that~$\hat G$ is an $(r,1)$-projection kernel of~$(G,L)$.

  We construct~$G'$ from $\hat G$ as
  follows: add new vertices $b, a_1, a_2, a_3$ to the graph.
  Connect~$b$ to every vertex in~$O := V(\hat G)\setminus L$ and to $\{a_1,a_2,a_3\}$
  via a path of length $r$. It is is clear that~$G'$ has
  size~$O(k)$, we are left with proving that the two instances~$(G,k)$
  and~$(G', k+2)$ are equivalent.

  First, assume that~$D'_1,D'_2$ is a minimal $r$-Roman dominating set for~$G'$ of size at
  most~$k+2$. By a simple exchange argument, we can assume that~$b \in D'_2$
  in order to $r$-Roman-dominate $a_1$, $a_2$, and~$a_3$ (including all three
  vertices in $D'_1$ would be more expensive). Now~$b$ already $r$-Roman-dominates
  all of~$O$ as well as the paths added during the construction, we can therefore
  assume that~$D'_1$ is entirely contained in~$V(G)$. Therefore
  the sets $D'_1,D'_2 \setminus \{b\}$ $r$-Roman-dominates $L$ in~$G$ at a cost
  of~$|D'_1|+2|D'_2|-2$.

  In the other direction, assume that~$\hat D_1, \hat D_2$ is a minimal-cost solution
  for~$(\hat G,L,k)$, that is, $\hat D_1, \hat D_2$ $r$-Roman-dominates $L$ in $\hat G$.
  Partition both set $\hat D_i$ for~$i \in [1,2]$ into sets
  $\hat D_{i,O} = \hat D_i \setminus L$ and $\hat D_{i,L} = \hat D_i \cap L$.
  Then construct~$D'_{i,O}$ as follows:  for every equivalence class~$[u] \in \hat D_{i,O} / \sim^r_L$,
  include a vertex of~$[u] \cap O$ in~$D'_{i,O}$ (here we use that~$\hat G$
  is am $(r,1)$-projection kernel of~$(G,L)$. Since we picked the same
  projection-classes as in~$\hat D_{1,O}$, $\hat D_{2,O}$, we conclude that
  the sets~$D'_{1,O} \cup \hat D_{1,L}$, $D'_{2,O} \cup \hat D_{2,L}$
  $r$-Roman-dominate the core~$L$. Therefore, the sets
  \[
    D'_1 := D'_{1,O} \cup \hat D_{1,L}, \quad\quad
    D'_2 := D'_{2,O} \cup \hat D_{2,L} \cup \{ b \}
  \]
  $r$-Roman-dominate all of~$G'$ at cost of~$|\hat D_1| + 2|\hat D_2|+2$.

  We conclude that~$(\hat G,L,k)$ and~$(G',k+2)$ are indeed equivalent,
  and the latter is also equivalent to~$(G,k)$. To see that the
  expansion characteristics only increase by a function that is independent
  of~$|G|$,  simply note that we can construct~$G'$ by adding one apex-vertex
  to~$G$ with an additional pendant vertex
  (which increases the expansion characteristics only by an additive constant)
  and then subdivide edges incident to it (which does not increase
  the expansion characteristics).
\end{proof}

\begin{theorem}\label{thm:rcsc-kernel}
  \RCSC admits a linear BE-kernel.
\end{theorem}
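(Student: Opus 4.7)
The plan is to follow the same template used for Theorems~\ref{thm:rcds-kernel}--\ref{thm:roman-domination-kernel}: first reduce to the annotated bikernel via Theorem~\ref{thm:scattered-bikernel}, then replace the annotation $U$ by a small graph modification that keeps the instance inside a BE-class. So, starting from an instance $(G,k)$ of \RCSC with $G$ in a hereditary BE-class~$\mathcal G$, I invoke Theorem~\ref{thm:scattered-bikernel} to produce a bikernel $(\hat G, U, k)$ of \ARCSC, where $\hat G$ is an $(r,c)$-projection kernel of~$(G,U)$ with $|\hat G| = O(k)$. The remaining task is to construct an unannotated instance~$(G',k')$ of \RCSC equivalent to $(\hat G,U,k)$.

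The construction of~$G'$ attaches to each vertex~$v \in O := V(\hat G) \setminus U$ a constant-size pendant gadget. The gadget has two roles: it should make~$v$ strictly less attractive than some `gadget substitute' so that an exchange argument forces any maximum solution to avoid~$O$, and it should contribute a predictable, fixed number of `free' scattered vertices per~$v \in O$. Concretely, I plan to attach to each~$v \in O$ a small number of pendants via internally disjoint paths whose length is calibrated against~$r$ and~$c$: some pendants placed just outside~$N^r[v]$ (at distance $r+1$) act as free scattered vertices that contribute to~$I'$ without interfering with~$v$'s constraint budget, while one or more pendants placed at distance~$r$ are used to realise the exchange $v \leftrightarrow \text{pendant}$. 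I set $k' := k + s \cdot |O|$ for a constant~$s = s(r,c)$ reflecting the free contribution per gadget, which keeps $k' = O(k)$.

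The equivalence $(\hat G,U,k) \equiv (G',k')$ would then be established in two directions. Forward: given an $(r,c)$-scattered set~$I \subseteq U$ in~$\hat G$ of size at least~$k$, augment it with all free gadget pendants to obtain a set~$I'$ of size at least~$k'$; because the pendants are attached through disjoint paths and sit at distance~$>r$ from all of $U$ and from each other, the only constraints they could tighten are those of the $O$-vertices, and the gadget is sized precisely to respect the bikernel bound $|I \cap N^r_{\hat G}[v]| \leq c$. Backward: given an $(r,c)$-scattered set~$I'$ of~$G'$ of size at least~$k'$, iteratively swap any $v \in I' \cap O$ for an unused pendant of~$H_v$. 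The swap is safe because every vertex~$w$ whose $r$-ball contains the pendant also lies on the attaching path and thus contains~$v$, so the exchange can only decrease constraint values. After removing all $O$-vertices, the part of~$I'$ lying in~$U$ is an $(r,c)$-scattered set inside~$U$ in~$\hat G$, witnessing the bikernel.

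The main obstacle is calibrating the gadget so that inserting the `at-distance-$r$' pendants does not push~$v$'s bikernel constraint past~$c$ in the forward direction. The natural fix is to reserve one slot of the budget~$c$ for the gadget, either by strengthening the bikernel to produce solutions with slack (a straightforward modification of Theorem~\ref{thm:scattered-bikernel}) or, more cleanly, by placing all free pendants strictly outside~$N^r[v]$ and using only the projection-kernel property to force the backward exchange. Once the gadget is fixed, the final step is routine: as in Theorems~\ref{thm:rcds-kernel}--\ref{thm:roman-domination-kernel}, the operation $\hat G \mapsto G'$ is equivalent to adding a bounded number of apex-vertices to~$\hat G$ and then subdividing and removing a subset of the incident edges, which increases the expansion characteristics~$\grad_r$ only by an additive function independent of~$|G|$; hence~$G'$ lies in a BE-class whose characteristics are bounded in those of~$\mathcal G$.
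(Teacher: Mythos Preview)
Your high-level plan---bikernel via Theorem~\ref{thm:scattered-bikernel}, then a gadget to eliminate the annotation~$U$---matches the paper, but your gadget is different and the difference is exactly where your acknowledged obstacle lives. The paper does \emph{not} attach a gadget per vertex of~$O$; it uses a single apex-style construction: add vertices $a_1,a_2,b_1,\dots,b_c$, make each $b_i$ adjacent to $a_2$, connect $a_1$ to $a_2$ by a path of length~$r-1$, and connect $a_1$ to every $v\in O$ by a path of length~$r$. Then $O\cup\{b_1,\dots,b_c\}\subseteq N^r[a_1]$, so any $(r,c)$-scattered set in~$G'$ contains at most~$c$ vertices from this union; a single exchange argument lets one assume these are precisely $b_1,\dots,b_c$, which forces $I'\cap O=\emptyset$ in one stroke. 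The parameter shift is only $k\mapsto k+c$ and there is no per-vertex budget to calibrate---the constraint at the one hub~$a_1$ does all the work.

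Your per-vertex scheme can probably be completed, but neither of your proposed fixes closes the gap as stated. Placing all free pendants strictly outside $N^r[v]$ leaves you nothing to swap $v$ against in the backward direction, and invoking ``the projection-kernel property'' does not force optimal solutions of~$G'$ to avoid~$O$: the projection kernel guarantees that solutions \emph{inside}~$U$ behave the same in~$\hat G$ and~$G$, not that $O$-vertices are unprofitable in~$G'$. A workable per-vertex gadget would need its own local hub (say $p_v$ at distance~$r$ from~$v$ with~$c$ further leaves at distance~$r$ from~$p_v$) so that the constraint at~$p_v$ both caps the gadget's contribution and supplies the exchange target; you then also have to argue away internal path vertices and accept a shift $k\mapsto k+c\,|O|$. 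Finally, your closing BE argument (``bounded number of apex-vertices'') is the paper's justification, not yours: per-vertex pendants are not apices, though attaching bounded-depth trees does still keep the class BE, so the conclusion survives with a different one-line argument.
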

\begin{proof}
  Let $(G,k)$ be an input of \RCSC where $G$ is taken from a BE
  class. We first construct the annotated bikernel $(\hat G,U,k)$ according to
  Theorem~\ref{thm:scattered-bikernel} and then construct~$G'$ from~$\hat G$
  by adding vertices~$a_1,a_2, b_1, \ldots, b_c$ and edges~$a_2b_i$ for
  all $1 \leq i \leq c$. We further connect~$a_1$ to all vertices
  in $O := V(\hat G)\setminus U$ via paths of length~$r$ and to~$a_2$ via a path of length~$r-1$
  (for $r=1$ we identify $a_1$ and~$a_2$).
  It is is clear that~$G'$ has size~$O(k)$, we are left to prove that the
  instances~$(\hat G, U, k)$ and~$(G', k+c)$ are equivalent.

  First, consider a maximal $(r,c)$-scattered set~$I'$ in $G'$. Since
  $O \cup \{b_1,\ldots,b_c\} \subset N^r[a_1]$ we may assume,
  by a simple exchange argument, that~$\{b_1,\ldots,b_c\} \subseteq I'$.
  Accordingly, $O \cap I' = \emptyset$ and $I := I' \setminus\{b_1,\ldots,b_c\}$
  is an $(r,c)$-scattered set contained entirely in~$U$. Therefore
  $I$ is $(r,c)$-scattered in~$\hat G$ as well and $|I| = |I'| + c$.

  In the other direction, assume that $\hat I \subseteq U$ is a maximal
  $(r,c)$-scattered set in~$\hat G$. Then~$N^r_{G'}[a_1] \cap I = \emptyset$
  and we can add up to~$c$ vertices from~$N^r[a_1]$
  to~$I$. Since the vertices~$b_i$ all lie at distance~$2r$ from~$O$,
  we conclude that~$I' := I \cup \{b_1,\ldots,b_c\}$ is indeed
  $(r,c)$-scattered in~$G'$ and~$|I'| = |I| + c$.

  We conclude that the instances~$(\hat G,U,k)$ and~$(G',k+c)$ are
  equivalent and hence~$(G,k)$ and~$(G',k+c)$ are as well. The argument
  why the expansion characteristics only increase by a constant are
  similar to the arguments in Theorem~\ref{thm:rcds-kernel}.
\end{proof}

\begin{theorem}
  \textsc{$r$-Perfect Code} admits a linear BE-kernel.
\end{theorem}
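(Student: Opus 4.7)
The plan is to observe that \RPC is precisely the special case $\lambda=\mu=1$ of \Problem{$(r,[\lambda,\mu])$-Domination}, so Theorem~\ref{thm:lambdamu-domination-bikernel} already supplies a linear bikernel $(\hat G,L,U,k)$ of \ALMD with $|V(\hat G)|=O(k)$ and, using the modification noted there, $L=U$. In this annotated instance a solution $\hat D\subseteq L$ must dominate every vertex of $L$ exactly once and every vertex of $O:=V(\hat G)\setminus L$ at most once. The remaining task is to convert $(\hat G,L,U,k)$ back into an (unannotated) \RPC instance $(G',k')$ inside the same BE-class.

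To do this I would attach at every $o\in O$ a small \emph{flip gadget} that reinstates an exact-coverage constraint at $o$ while giving the Perfect Code exactly two canonical local configurations: one in which a new solution vertex sits at distance exactly $r$ from $o$ (so $o$ is covered from outside $\hat G$), and one in which no such coverage is contributed (so $\hat D$ must cover $o$ from inside $\hat G$). Concretely, attach at each $o$ a pendant path $o\text{-}u_1^o\text{-}\cdots\text{-}u_r^o\text{-}p_o$ of length $r+1$: placing $u_r^o$ in the solution covers the entire gadget path as well as $o$ but reaches no other vertex of $\hat G$, whereas placing $p_o$ in the solution covers the internal path but neither $o$ nor any $\hat G$-vertex. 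Setting $k':=k+|O|=O(k)$, I would prove equivalence of $(G',k')$ and $(\hat G,L,U,k)$ as \RPC and annotated instances. Given $\hat D$, pick the $u_r^o$-configuration at every $o\in O$ that $\hat D$ fails to cover and the $p_o$-configuration otherwise, yielding a Perfect Code of size $|\hat D|+|O|\le k'$. Given a Perfect Code $D'$ of $G'$, one argues that each gadget contributes exactly one vertex to $D'$ in one of the two canonical configurations, whence $\hat D:=D'\cap V(\hat G)\subseteq L$ has size $|D'|-|O|\le k$ and satisfies the annotated constraints by direct inspection.

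The BE-characteristics of $G'$ are essentially those of $\hat G$: $G'$ arises from $\hat G$ by adding $|O|$ disjoint pendant paths, which can be realized by subdividing edges incident to a single additional apex and hence changes the expansion characteristics only by an additive constant, exactly as in the closing paragraph of Theorem~\ref{thm:rcds-kernel}. Since $|V(G')|=|V(\hat G)|+(r+1)|O|=O(k)$, this yields the desired linear BE-kernel.

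The delicate part I expect is the exchange step in the reverse direction: for $r\ge 2$ a Perfect Code of $G'$ could in principle place an internal gadget vertex $u_j^o$ with $j<r$ into $D'$, and such a $u_j^o$ covers not only the path but also $\hat G$-vertices at distance $\le r-j$ from $o$ (possibly vertices of $L$). Showing that any such non-canonical configuration can be exchanged for $u_r^o$ without increasing $|D'|$---using that any $L$-vertex so covered must anyway be dominated by $\hat D$ in every annotated solution, so a compensating vertex in $\hat D$ is always available---or else slightly strengthening the gadget (e.g.\ using a triangle $\{x_o,y_o,z_o\}$ attached via a length-$(r-1)$ pendant path, which is locally rigid enough to forbid the bad configurations) is where the real work sits; the remainder is bookkeeping analogous to Theorems~\ref{thm:rcds-kernel}--\ref{thm:rcsc-kernel}.
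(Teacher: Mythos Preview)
Your overall strategy matches the paper exactly: start from the annotated bikernel $(\hat G, L, U, k)$ of Theorem~\ref{thm:lambdamu-domination-bikernel} with $L = U$, then attach a pendant gadget at every $o \in O := V(\hat G)\setminus L$ to recover an unannotated \RPC instance with parameter $k + |O|$. Your forward direction is correct, and you correctly observe that each gadget contributes exactly one vertex to any perfect code of $G'$ and that $D' \cap O = \emptyset$.

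The gap is precisely where you locate it, but your proposed exchange does not close it. If a perfect code $D'$ of $G'$ contains $u_j^o$ with $j < r$, this vertex may be the \emph{unique} $r$-dominator of some $\ell \in L$ with $\dist_{\hat G}(\ell, o) \leq r - j$. Replacing $u_j^o$ by $u_r^o$ leaves $\ell$ uncovered, and there is no reason a ``compensating'' vertex exists in $D' \cap V(\hat G)$: that set is exactly the $\hat D$ whose validity you are trying to establish, so the argument is circular. Your triangle alternative has the same defect, since nothing in it prevents a low-index path vertex from entering $D'$.

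The paper resolves this by simply making the pendant path longer: attach a path $a_1,\ldots,a_{2r}$ of length~$2r$ (rather than $r{+}1$) at each $o \in O$. To dominate the far endpoint $a_{2r}$, the unique gadget vertex of $D'$ on this path must be some $a_j$ with $j \geq r$; since every $\ell \in L$ satisfies $\dist_{G'}(\ell, a_j) = \dist_{\hat G}(\ell,o) + j \geq 1 + r > r$, no gadget vertex can dominate an $L$-vertex and the reverse direction goes through without any exchange. In the forward direction one must now choose the position of the gadget vertex depending on $d_o := \dist_{\hat G}(o, \hat D)$ (take $a_r$ if $d_o > r$ and $a_{2r - d_o + 1}$ otherwise), which is straightforward. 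So your plan is right; only the path length needs to be $2r$ instead of $r{+}1$.
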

\begin{proof}
  Let $(G,k)$ be an input instance of \textsc{$r$-Perfect Code}
  where $G$ is taken from a BE class. Since \textsc{$r$-Perfect Code}
  is equivalent to \textsc{$(r,[1,1])$-Domination}, we proceed by first
  constructing the annotated bikernel $(\hat G, L, U, k)$ according to
  Theorem~\ref{thm:lambdamu-domination-bikernel}. As commented there,
  we can construct the bikernel that $L = U$ which we will assume
  in the following for simplicity.

  Let~$O := V(\hat G)\setminus L$. We construct~$G'$ from~$\hat G$
  by appending a path~$P_u$ of length~$2r$ to every vertex~$u \in O$.
  We claim that~$(\hat G,L,U,k)$ is equivalent to~$(G',k+|O|)$.
  In the following, fix one path~$P_u$ and let~$a_1,\ldots, a_{2r}$ be its
  vertices ordered by their respective distance from~$u$; the arguments we make
  will hold symmetrical for all paths added in the construction.

  First, consider an $r$-perfect code~$D'$ of~$G'$. In order to dominate the
  vertex~$a_{2r}$, it needs to contain a vertex~$a_j \in P_u$ with
  $r \leq j \leq 2r$. Since~$a_j$ will in particular dominate~$a_r$,
  we conclude that~$u \not \in D'$ and, by symmetry, that~$D' \cap O = \emptyset$.
  Then the set~$\hat D := D' \cap V(\hat G)$ is indeed a perfect code for
  $\hat G$ of size~$|D'| - |O|$.

  In the other direction, assume that~$\hat D \subseteq L$ is a perfect
  code for~$L$ in~$\hat G$. Since~$L = U$ is both a solution-
  and a constraint core for~$G$, we know that the set~$\hat D$ is
  a perfect code in~$G$. Because~$\hat G$ is an induced subgraph of~$G$,
  we conclude that~$|N^r[u] \cap \hat D| \leq 1$ for all~$u \in O$.
  Let~$d_u$ be the distance of~$u \in O$ to the closest
  vertex in~$\hat D$ (this distance is, by construction, the same in~$\hat G$
  and~$G'$). We construct~$D'$ from~$\hat D$ as follows: if~$d_u > r$, then
  we add the vertex~$a_r$. Otherwise, note that the vertices~$a_1,\ldots,a_i$
  for~$i = r - d_u$ of~$P_u$ are dominated by~$\hat D$, we therefore
  add the vertex~$a_j$ with $j = 2r-d_u+1$. The resulting set~$D'$ dominates,
  in~$G'$, all vertices in~$O$ that are not dominated by~$\hat D$ and further
  dominates all vertices~$V(G')\setminus V(\hat G)$ precisely once. It follows
  that~$D'$ is a perfect code in~$G'$ of size~$|\hat D| + |O|$.

  We conclude that the instances~$(\hat G,L,U,k)$ and~$(G',k+|O|)$ are
  equivalent and hence~$(G,k)$ and~$(G',k+|O|)$ are as well. The construction
  of~$G'$ from~$\hat G$ only increases the expansion characteristics if the
  original graph class consist of edgeless graphs.
\end{proof}


\section{Multikernels}\label{sec:multikernels}

The following results are applicable to graph BE-classes that are closed
under the addition of pendant vertices, \eg planar graphs,
graphs of bounded genus or graph classes defined by an excluded minor of
minimum degree two. Their proofs are a collection of arguments already made in
detail in Section~\ref{sec:be-kernels}, we will abbreviate those parts here.
In the following, let~$\ds_r^\text{total}(G)$ denotes the total $r$-domination
number and ~$\ds_r^\text{roman}(G)$ the $r$-Roman domination number of~$G$.
We will also write~$\ds_r(G, L)$, $\ds_r^\text{total}(G, L)$, and
$\ds_r^\text{roman}(G,L)$ for the annotate domination numbers (where only
the set~$L \subseteq V(G)$ has to be dominated).

\begin{theorem}
  Let~$\mathcal G$ be a hereditary BE-class that is further closed under
  adding pendant vertices.
  Given a graph~$G \in \mathcal G$ and an integer~$r$ we can
  compute in polynomial time a graph~$G' \in \mathcal G$ and an integer~$c$
  with the following properties:
  \begin{itemize}
    \item $|G'| = O(\ds_r(G)) = O(\ds^{total}_r(G)) = O(\ds^{roman}_r(G))$,
    \item $\ds_r(G') = \ds_r(G) + c$,
    \item $\ds^\text{total}_r(G') = \ds^\text{total}_r(G) + c$, and
    \item $\ds^\text{roman}_r(G') = \ds^\text{roman}_r(G) + 2c$.
  \end{itemize}
\end{theorem}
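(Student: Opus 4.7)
The plan is to combine the three bikernel constructions of Theorems~\ref{thm:domination-bikernel}, \ref{thm:total-domination-bikernel}, and~\ref{thm:roman-domination-bikernel} into a joint bikernel and then convert it to a pendant-only modification. Before starting, note that the sandwich inequalities $\ds_r(G) \leq \ds_r^\text{total}(G) \leq 2\ds_r(G)$ and $\ds_r(G) \leq \ds_r^\text{roman}(G) \leq 2\ds_r(G)$ (both verified by straightforward vertex-pairing arguments) let us collapse the triple size requirement to the single bound $|G'| = O(\ds_r(G))$.

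First, I would run the water-lily reductions of the three bikernel theorems in parallel: at each iteration, extract via Corollary~\ref{cor:lily} a single $\nu$-uniform water lily of ratio at least~$3$ (the maximum ratio demanded across the three theorems) with respect to the evolving core~$L$. Uniformity is precisely what allows the three exchange arguments (for $r$-domination, total $r$-domination, and $r$-Roman domination) to apply to \emph{the same} centre $a \in C$ simultaneously; removing $a$ from $L$ is therefore safe for all three problems at once. Iterating until the ratio falls below the required threshold yields a common constraint core $L$ with $|L| = O(\ds_r(G))$.

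Second, I would apply Lemma~\ref{lemma:projkernel} with $c = 1$ to build an $(r,1)$-projection kernel $\hat G$ of $(G,L)$ with $|\hat G| = O(|L|)$. Exactly as in Theorems~\ref{thm:domination-bikernel}, \ref{thm:total-domination-bikernel}, and~\ref{thm:roman-domination-bikernel}, this preserves the three annotated equalities $\ds_r(\hat G, L) = \ds_r(G)$, $\ds_r^\text{total}(\hat G, L) = \ds_r^\text{total}(G)$, and $\ds_r^\text{roman}(\hat G, L) = \ds_r^\text{roman}(G)$.

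Third, letting $O := V(\hat G) \setminus L$, I would build $G'$ by appending to each $u \in O$ a small pendant tree $T_u$: a path whose length is tuned to $r$, possibly with an extra leaf. The tree $T_u$ is designed so that every $r$-dominating, total $r$-dominating, and $r$-Roman-dominating set of $G'$ is forced to pay exactly $1$, $1$, and $2$ extra units per $u$ respectively, and so that the forced vertex automatically covers $u$ itself. Setting $c := |O|$ and combining with the stage-two identities then gives the three claimed relations, while closure of $\mathcal G$ under pendant additions ensures $G' \in \mathcal G$.

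The hard part will be the pendant-tree design in the third stage: any pendant tree viewed in isolation has total-domination number at least~$2$, yet the total-domination overhead per $u$ must match the simple-domination overhead of~$1$. I would resolve this by anchoring the gadget so that $u \in O$ itself, once forced into the dominator by $T_u$, serves as the total-domination partner of the single pendant vertex that ends up in $D$; this collapses the total-domination cost from~$2$ to~$1$ per $u$ without disturbing the simple-domination count, while in the Roman case the same forced vertex placed in $D_2$ naturally yields the doubling.
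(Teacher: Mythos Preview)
Your first two stages are sound and close to the paper. The paper takes a simpler route in stage one: it computes the three constraint cores $L_d,L_t,L_r$ independently via Theorems~\ref{thm:domination-bikernel}, \ref{thm:total-domination-bikernel}, \ref{thm:roman-domination-bikernel} and sets $K := L_d \cup L_t \cup L_r$, which is automatically a common core of size $O(\ds_r(G))$; no parallel lily extraction is needed. Your joint-lily version also works, but the union is cleaner.

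The real problem is stage three, and it is precisely the obstacle you flag but do not actually resolve. An overhead of~$1$ per $u \in O$ is impossible here. For the forced gadget vertex~$w$ not to interfere with~$L$, it must lie at distance exactly~$r$ from~$u$ along the pendant, so that $N^r[w] \cap V(\hat G) = \{u\}$. Total domination then requires a partner for~$w$ inside $N^r[w]\setminus\{w\} \subseteq T_u \cup \{u\}$. If that partner is another gadget vertex, the overhead is already~$2$. If the partner is~$u$---your proposed fix---then $u$ is forced into every optimal total $r$-dominating set of~$G'$; but $u$ may lie within distance~$r$ of vertices of~$L$, so forcing it can strictly lower the cost of covering~$L$, and the identity $\ds_r^{\text{total}}(G') = \ds_r^{\text{total}}(\hat G,L) + |O|$ breaks. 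The same failure hits plain $r$-domination whenever your gadget forces~$u$ there as well. There is no pendant construction that makes $c = |O|$ work.

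The paper instead takes $c := 2|O|$. To each $u \in O$ it attaches a tree $T_u$ with two hubs $b_1,b_2$ at distances~$r$ and~$2r$ from~$u$, joined to each other by a path of length~$r$, each hub carrying three private leaves at distance~$r$. The leaf triples force $b_1,b_2$ into any optimal $r$-dominating or total $r$-dominating set (and into $D_2$ for Roman, since paying~$2$ beats putting three leaves into~$D_1$). The two hubs totally dominate each other, together cover all of~$T_u$ including~$u$, and satisfy $N^r[b_1]\cap V(\hat G)=\{u\}$ and $N^r[b_2]\cap V(\hat G)=\emptyset$, so neither reaches~$L$. The per-$u$ overheads are therefore $2,2,4$, giving the stated $c,c,2c$.
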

\begin{proof}
  We apply the constructions from Theorems~\ref{thm:domination-bikernel} (for $c=1$),
  \ref{thm:total-domination-bikernel}, and~\ref{thm:roman-domination-bikernel}
  to find constraint cores~$L_d,L_t$ and~$L_r$ for all three problems.
  Since~$\ds_r^\text{total}$ and $\ds_r^\text{roman}$ lie within
  a factor of two of~$\ds_r$, we conclude that the joint set~$K := L_d \cup L_t \cup L_r$
  is a constraint core for all three problems of size~$|K| = O(\ds_r(G))$.

  Let~$\hat G$ be a $(\mu,1)$-projection kernel of~$(G,K)$ constructed according
  to Lemma~\ref{lemma:projkernel}, recall that~$\hat G$ is an induced subgraph of~$G$
  with~$|\hat G| = O(|K|)$. By the proofs of Theorems~\ref{thm:domination-bikernel},
  \ref{thm:total-domination-bikernel}, and~\ref{thm:roman-domination-bikernel}
  we have that~$\ds_r(\hat G, L) = \ds_r(G)$, $\ds^\text{total}_r(\hat G, L) = \ds^\text{total}_r(G)$,
  and $\ds^\text{roman}_r(\hat G, L) = \ds^\text{roman}_r(G)$.

  Let~$O := V(\hat G)\setminus K$ be the vertices outside the core set~$K$
  and let~$c := 2|O|$.
  Let~$T$ be the tree constructed as follows: create vertices~$b_0, b_1, b_2,
  a_1,\ldots,a_6$. Connect~$b_0$ to~$b_1$ and~$b_1$ to~$b_2$ by paths of
  length~$r$, then connect~$b_1$ to~$a_1,\ldots,a_3$ each by a path of length~$r$
  and $b_2$ to~$a_4,\ldots, a_6$.
  We construct~$G'$ by appending to each vertex~$v \in O$ a copy~$T_v$ of~$T$
  by identifying~$b_0$ with~$v$. It is not difficult to see that any optimal
  \rds and \rtd can, by an exchange argument, be assumed to contain
  the vertices~$b_1$ and~$b_2$ of each tree~$T_v$; and that any
  \textsc{$r$-Roman-dominating set} includes $b_1$ and $b_2$ at a cost of two
  each. We conclude that
  \begin{align*}
    \ds_r(G') &= \ds_r(\hat G, L) + c = \ds_r(G) + c, \\
    \ds^\text{total}_r(G') &= \ds^\text{total}_r(\hat G, L) + c = \ds^\text{total}_r(G) + c, \text{and} \\
    \ds^\text{roman}_r(G') &= \ds^\text{roman}_r(\hat G, L) + 2c = \ds^\text{roman}_r(G) + 2c.
  \end{align*}

\end{proof}

\noindent
Recall that an $r$-scattered set is equivalent to
a $2r$-independent set and in particular that~$\scat_r(G) = \is_{2r}(G)$.

\begin{theorem}
  Let~$\mathcal G$ be a hereditary BE-class that is further closed under
  adding pendant vertices.
  Given a graph~$G \in \mathcal G$ and integers~$\lambda \leq \mu$ we can
  compute in polynomial time a graph~$G' \in \mathcal G$ and
  integers~$c_{\lambda},\ldots,c_{\mu}$ with the following properties:
  \begin{itemize}
    \item $|G'| = O(\ds_{\lambda}(G))$,
    \item for all~$\lambda \leq r \leq \mu$ it holds that
          $\ds_r(G') = \ds_r(G) + c_r$, and
    \item for all~$\lambda \leq r \leq \mu$ it holds that~$\is_{2r}(G') = \is_{2r}(G) + c_r$.
  \end{itemize}
\end{theorem}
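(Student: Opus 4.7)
The plan is to generalise the joint-core strategy from the preceding theorem to simultaneously cover $r$-domination and $2r$-independent set for every~$r$ in the constant range $[\lambda,\mu]$. First, for each $r \in [\lambda,\mu]$ I apply Theorem~\ref{thm:domination-bikernel} with $c=1$ to obtain a constraint core $L_r$ of size $O(\ds_r(G))$ for \RDS, and Theorem~\ref{thm:scattered-bikernel} to obtain a solution core $U_r$ of size $O(\is_{2r}(G))$ for \textsc{$2r$-Independent Set}. Since $\ds_r$ is non-increasing in~$r$ and $\is_{2r}(G) \leq \ds_r(G)$ by Theorem~\ref{thm:dvorak-duality}, the union $K := \bigcup_{r=\lambda}^{\mu}(L_r \cup U_r)$ has size $O(\ds_\lambda(G))$. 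I then apply Lemma~\ref{lemma:projkernel} to $(G,K)$ with depth $2\mu$ and $c=1$, producing an induced subgraph $\hat G$ of size $O(|K|)$ that preserves all distances at most $2\mu$ between $K$-vertices. The proofs of the two bikernel theorems give $\ds_r(\hat G,K) = \ds_r(G)$ and $\is_{2r}(\hat G,K) = \is_{2r}(G)$ for every $r$ in the range, and because $L_r \subseteq K$ (resp.\ $U_r \subseteq K$) is a constraint (resp.\ solution) core, this extends to $\ds_r(\hat G) = \ds_r(G)$ and $\is_{2r}(\hat G) = \is_{2r}(G)$.

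The graph~$G'$ is obtained by attaching, to every $v \in O := V(\hat G)\setminus K$, two disjoint pendant paths $P_v^1, P_v^2$ of length $L-1$, where $L := \mathrm{lcm}(2\lambda+1, 2\lambda+3, \ldots, 2\mu+1)$. Setting $k_r := L/(2r+1) \in \N$ and $c_r := 2 k_r |O|$, an elementary case analysis on the position of~$v$ relative to the solution shows that for every $r \in [\lambda,\mu]$ each path $P_v^j$ requires exactly $k_r$ internal $r$-dominators (whether $v$ is in the dominating set, externally dominated, or dominated by a path-vertex), and analogously admits exactly $k_r$ vertices of any $2r$-scattered set contained in $P_v^j \setminus \{v\}$. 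For the independence side, placing $v$ itself into a $2r$-scattered set forces each of the two paths to drop one of its $k_r$ scattered vertices, a total loss of~$2$, whereas adding~$v$ can gain at most~$1$ in the $\hat G$-part of the solution; consequently the optimum scattered set $I$ in~$G'$ satisfies $I \cap O = \emptyset$. Combined with $U_r \subseteq K$ being a solution core, this yields $\is_{2r}(G') = \is_{2r}(G) + c_r$ by the same split computation used in Theorem~\ref{thm:rcsc-kernel}.

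The main obstacle is the lower bound $\ds_r(G') \geq \ds_r(G) + c_r$: a priori a path-dominator $p^j_i \in D'$ at small distance~$i$ from~$v$ could dominate some $u \in K$ with $d_{\hat G}(v,u) \leq r-i$, permitting $|D' \cap V(\hat G)|$ to drop below $\ds_r(G)$. I intend to resolve this by an exchange argument: whenever such a $p^j_i$ occurs, I replace it by $v$ and re-pack the remaining path-dominators of $P_v^j$. Because $L$ is a multiple of $(2r+1)$, once $v \in D'$ the induced path $p^j_1,\ldots,p^j_{L-1}$ admits an $r$-dominating set of size exactly~$k_r - 1$ aligned to the grid $p^j_{2r+1}, p^j_{4r+2}, \ldots$, so the total number of dominators associated to $P_v^j$ stays at $k_r$; moreover $v$ still dominates $u$ since $d_{\hat G}(v,u) \leq r$. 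Iterating over all offending path-vertices yields a set $D''$ of the same size in which every $K$-vertex is dominated by $D'' \cap V(\hat G)$, whence $|D'' \cap V(\hat G)| \geq \ds_r(\hat G,K) = \ds_r(G)$, and together with $|D'' \cap \bigcup_{v,j} (P_v^j\setminus\{v\})| \geq 2k_r|O|$ this gives the desired lower bound. Finally, $G'$ is built from the induced subgraph $\hat G$ of~$G$ by adding only pendant vertices, so $G' \in \mathcal G$ by the closure assumption.
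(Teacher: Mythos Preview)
Your overall strategy matches the paper's: build a joint core $K$, take a projection kernel $\hat G$, and attach to every vertex of $O$ pendant paths whose length is divisible by each $2r+1$. Two deliberate differences: you attach \emph{two} paths per vertex (the paper uses one), and you take the projection-kernel depth to be $2\mu$ (the paper writes $\mu$, but since what matters for the scattered-set side is preserving $K$--$K$ distances up to $2r\le 2\mu$, your choice is the safe one). The two-path gadget is a legitimate variant that makes $v\notin I$ \emph{strictly} optimal rather than merely attainable by exchange; with a single path the paper reaches the same conclusion via a push-to-the-far-end exchange on the path positions.

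There is, however, a genuine arithmetic error in your domination lower bound. Your key claim---``once $v\in D'$ the induced path $p^j_1,\dots,p^j_{L-1}$ admits an $r$-dominating set of size exactly $k_r-1$ aligned to the grid $p^j_{2r+1},p^j_{4r+2},\dots$''---is false: the last grid point is $p^j_{(k_r-1)(2r+1)}=p^j_{L-2r-1}$, which only covers up to $p^j_{L-r-1}$, leaving the tail $p^j_{L-r},\dots,p^j_{L-1}$ undominated. In fact $\lceil (L-1-r)/(2r+1)\rceil=k_r$, so even with $v\in D'$ each path still needs $k_r$ dominators in $P_v^j\setminus\{v\}$; your replacement of the offending $p^j_i$ by $v$ therefore \emph{increases} the count by one instead of keeping it fixed. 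The correct observation (and this is what underlies the paper's ``simple exchange'') is the contrapositive: if some path-dominator sits at a position $j\le r$---the only positions from which it can reach a $K$-vertex through $v$---then covering the tail $p^{\,}_{j+r+1},\dots,p^{\,}_{L-1}$ alone already requires $k_r$ further dominators, so that path carries at least $k_r+1$ vertices of $D'$. Hence for a minimum $D'$ either every path has all its dominators at positions $\ge r+1$ (and then none of them can dominate any $K$-vertex, so $D'\cap V(\hat G)$ already dominates $K$), or some path has surplus $\ge 1$, which you can spend to add $v$ while replacing that path's dominators by the canonical $k_r$ positions; $v$ then covers everything the close path-dominators covered in $\hat G$. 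Either way $|D'|\ge \ds_r(\hat G,K)+c_r=\ds_r(G)+c_r$.

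A minor side remark: the assertion ``this extends to $\ds_r(\hat G)=\ds_r(G)$'' is neither needed nor generally true, since vertices of $O$ in $\hat G$ need not be $r$-dominated at cost $\ds_r(G)$. What you actually use (and correctly justify) is only $\ds_r(\hat G,K)=\ds_r(G)$.
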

\begin{proof}
  We apply the constructions from Theorems~\ref{thm:domination-bikernel}
  and~\ref{thm:scattered-bikernel} for~$r \in [\lambda,\mu]$ to construct
  constraint cores~$L_r$ for \RDS and solution cores~$S_r$
  for~\textsc{$r$-Scattered Set}. Let $L := \bigcup_{\lambda \leq r \leq \mu}
  L_r$ and $S := \bigcup_{\lambda \leq r \leq \mu} S_r$; since~$|L_r| =
  O(\ds_r(G))$ and~$|S_r| = O(\is_r(G))$ and~$\is_{2r}(G) = \Theta(ds_r(G))$
  by Theorem~\ref{thm:dvorak-duality} we conclude that $|L \cup S| =
  O((\mu-\lambda) \ds_\lambda(G)) = O(\ds_\lambda(G))$. Define $K := L \cup S$
  and note that $K$ is a constraint core for
  \RDS and a solution core for \textsc{$r$-Scattered Set}
  for all $\lambda \leq r \leq \mu$.

  Let~$\hat G$ be a $(\mu,1)$-projection kernel of~$(G,K)$ constructed according
  to Lemma~\ref{lemma:projkernel}, recall that~$\hat G$ is an induced subgraph of~$G$
  with~$|\hat G| = O(|K|)$. Let~$O := V(\hat G)\setminus K$ be the vertices outside
  the core set~$K$. By construction, note that any minimal $r$-dominator of~$K$ in $\hat G$
  has size $\ds_r(G)$ and that any maximal $r$-scattered set of~$\hat G$ contained in~$K$
  as size $\scat_r(G)$ for all  $\lambda \leq r \leq \mu$.

  Let~$\sigma$ be an integer divisible by all integers~$(2r+1)$ for  $\lambda \leq r \leq \mu$.
  We construct~$G'$ from~$\hat G$ by appending a path of length~$\sigma-1$ to every
  vertex~$v \in O$ and call the resulting path (including~$v$) $P_v$.
  The size of~$G'$ is bounded by~$O(|K|) = O(\ds_\lambda(G))$,
  it remains to show the second property.

  Fix~$r \in [\mu,\lambda]$ and define~$c_r := \frac{\sigma}{2r+1}|O|$.
  First assume that~$D$ is a minimal $r$-dominating set of~$G$. Then, as
  in the proof of Theorem~\ref{thm:domination-bikernel}, there exists a set~$\hat D$
  of the same size that $r$-dominates $K$ in~$\hat G$. We construct~$D'$ from
  $\hat D$ by including~$\sigma/(2r+1)$ vertices of each path~$P_v$; namely all vertices
  at position $i(2r+1)-r$, $1 \leq i \leq \sigma/(2r+1)$ in $P_v$ (where~$v$ has position 1).
  Since these vertices dominate all of~$P_v$, we conclude that~$D'$ dominates all
  of~$G'$ and has size~$|D'| = |\hat D| + \frac{\sigma}{2r+1}|O| = |D| + c_r$.

  In the other direction, let~$D'$ be a minimum $r$-dominating set for~$G'$.
  Collect the vertices of~$D'$ that lie on~$P_v\setminus \{v\}$ in the set~$D'_P$.
  By a simple exchange argument $D'_P$ intersects every path~$P_v$
  in the same indices as above, \ie the vertices at positions $i(2r+1)-r$, $1 \leq i \leq \sigma/(2r+1)$.
  It follows that~$|D'_P| = c_r$. Note that~$D'_P$ cannot $r$-dominate any vertex
  in~$K$, hence~$D:= D' \setminus D'_P$ must $r$-dominate all of~$K$ and
  by construction of~$G'$ this also holds true in the graph~$\hat G$. Since~$K$
  is a constraint core for~$G$, we conclude that~$D$ $r$-dominates all of~$G$
  and has size~$|D| = |D'| - |D'_P| = |D'| - c_r$.
  We conclude that indeed $\ds_r(G') = \ds_r(G) + c_r$.

  Now consider a maximal $r$-scattered set~$I$ of~$G$. Then, as
  in the proof of Theorem~\ref{thm:scattered-bikernel}, there exists a set~$\hat I \subseteq K$
  which is $r$-scattered in~$\hat G$. We construct $I'$ from~$\hat I$
  by including~$\sigma/(2r+1)$ vertices of each path~$P_v$; namely all vertices
  at position $i(2r+1)$, $1 \leq i \leq \sigma/(2r+1)$ in $P_v$. Since~$\hat I$
  is disjoint with~$O$, the resulting set is indeed $r$-scattered and has size
  $|\hat I| := |I| + \frac{\sigma}{2r+1}|O| = |I| + c_r$.

  In the other direction, let $I'$ be a maximal $r$-scattered set in~$G'$.
  By a simple exchange argument, we can assume that~$I'$ contains all
  endpoints of the paths~$P_v$, $v \in O$ and, by repeating this argument,
  we can assume that~$I'$ intersects every path~$P_v$ at precisely the
  positions $i(2r+1)$, $1 \leq i \leq \sigma/(2r+1)$. Collect this part of~$I'$
  in the set~$I'_P$, note that~$|I'_P| = c_r$. We further conclude  that
  $O \cap I' = \emptyset$, therefore the set~$I := I'\setminus I'_P$ is completely
  contained in~$K$ and $I$ is $r$-scattered in~$\hat G$. Since~$K$ is a solution
  core, it follows that~$K$ is also~$r$-scattered in~$G$ and we have that
  $|I| = |I'| - |I'_P| = |I'| - c_r$.
  We conclude that indeed $\scat_r(G') = \scat_r(G) + c_r$ and therefore
  $\is_{2r}(G') = \is_{2r}(G) + c_r$.
\end{proof}

\section{Conclusion}

We defined the notion of \emph{water lilies} and showed that in BE-classes
these structures can be used to compute linear-sized cores, bikernels,
and BE-kernels. These constructions are almost universal, to the point were
we can combine them into `multikernels'. It stands to reason that there might
be a general formulation for these types of kernels. As a technical step, we
also prove that \RCDS admits a constant-factor approximation in BE-classes.

We are certain that our techniques directly translate to nowhere dense
classes but leave this endeavour as future work. Given that the problems
treated here all have constraints whose boundaries form intervals,
we ask whether the following artificial problem admits a polynomial
kernel in BE-classes: find a set~$D$ of size at most~$k$ such that
$|N^r[v] \cap D| \not \in \{0, 2\}$.

\bibliographystyle{plain}
\bibliography{biblio}

\end{document}